\documentclass{LMCS}


\usepackage{enumerate}

\usepackage{amsmath}
\usepackage{amssymb}


\renewcommand{\emptyset}{\varnothing}
\newcommand{\N}{\mathbb{N}}
\newcommand{\Z}{\mathbb{Z}}
\newcommand{\Q}{\mathbb{Q}}
\newcommand{\R}{\mathbb{R}}
\def\vp{\varphi}

\def\doi{7 (2:5) 2011}
\lmcsheading%
{\doi}
{1--18}
{}
{}
{Aug.~\phantom07, 2010}
{May\phantom{.~0}5, 2011}
{}

\begin{document}

\title[Decidable Expansions of Labelled Linear Orderings]{Decidable Expansions of Labelled Linear Orderings}

\author[A.~B\`es]{Alexis B\`es\rsuper a}	
\address{{\lsuper a}University of Paris-Est Cr\'eteil, LACL}	
\email{bes@u-pec.fr}  

\author[A.~Rabinovich]{Alexander Rabinovich\rsuper{b}}	
\address{{\lsuper b}Tel-Aviv University, The Blavatnik School of Computer Science }	
\email{rabinoa@post.tau.ac.il}  
\thanks{{\lsuper{a,b}}This research was facilitated by the ESF project AutoMathA.
The second  author was  partially supported by ESF project Games and  EPSRC grant.}	



\keywords{Monadic second-order logic, decidability, definability, linear orderings}
\subjclass{F.4.1 F.4.3}
\titlecomment{A preliminary version of this paper appeared in \cite{BesRabinovich10}.}


\begin{abstract}
  \noindent Consider a linear ordering equipped with a finite sequence
  of monadic predicates.  If the ordering contains an interval of
  order type $\omega$ or $-\omega$, and the monadic second-order
  theory of the combined structure is decidable, there exists a
  non-trivial expansion by a further monadic predicate that is still
  decidable.
\end{abstract}

\maketitle

\section*{Introduction}\label{S:one}

In this paper we address definability and decidability issues for monadic second order (shortly: MSO) theories of
 labelled linear orderings. Elgot and Rabin ask in \cite{ElgotRabin66} whether there exist maximal decidable structures, i.e.,  structures $M$ with a decidable first-order (shortly: FO) theory and such that the FO theory of any expansion of $M$ by a non-definable predicate is undecidable. This question is still open. Let us mention some partial results:

\begin{enumerate}[$\bullet$]
\item Soprunov proved in \cite{Soprunov88} that every structure in which a regular ordering is interpretable is not maximal. A partial ordering $(B,<)$ is said to be regular if for every $a \in B$ there exist distinct elements $b_1,b_2 \in B$ such that $b_1<a$, $b_2<a$, and no element $c \in B$ satisfies both $c<b_1$ and $c<b_2$. As a corollary he also proved that there is no maximal decidable countable structure if we replace FO by weak MSO logic.
\item In \cite{BesCegielski:2009:JMS}, B\`es and C\'egielski consider a weakening of the Elgot-Rabin question, namely the question of whether all structures ${M}$ whose FO theory is decidable can be expanded by some constant in such a way that the resulting structure still has a decidable theory. They answer this question negatively by proving that there exists a structure ${M}$ with a decidable MSO theory and such that any expansion of ${M}$ by a constant has an undecidable FO theory.
\item The paper \cite{BC08} gives  a sufficient condition in terms of the Gaifman graph of $M$ which ensures that ${M}$ is not maximal. The condition is the following: for every natural number $r$ and every finite set $X$ of elements of the base set $|M|$ of $M$ there exists an element $x \in |M|$ such that the Gaifman distance between $x$ and every element of $X$ is greater than $r$. 
\end{enumerate}
We investigate the Elgot-Rabin problem for the class
of labelled linear orderings, i.e.,  infinite structures
$M=(A;<,P_1,\dots,P_n)$ where $<$ is a linear ordering over $A$ and
the $P_i$'s denote unary predicates. This class is interesting with
respect to the above results, since on one hand no regular ordering
seems to be FO interpretable in such structures, and on the other
hand their associated Gaifman distance is trivial, thus they do not
satisfy the criterion given in \cite{BC08}.

In this paper we focus on MSO logic rather than FO. The main result
of the paper is that for every labelled linear ordering $M$ such
that $(A,<)$ contains an interval of order type $\omega$ or
$-\omega$ and the MSO theory of $M$ is decidable,
 there exists an expansion $M'$ of $M$ by a monadic  predicate
which is not MSO-definable in $M$, and such that the MSO theory of
$M'$   is still decidable. Hence, $M$ is not
maximal.  The result holds in particular when $(A,<)$ is order-isomorphic to the order of the naturals
 $\omega=(\N,<)$,  or to  the order  $\zeta=(\Z,<)$ of the integers, or to any infinite ordinal, or more generally any
infinite scattered ordering (recall that an ordering is scattered if
it does not contain any dense sub-ordering).

The structure of the proof is the following: we first show that the result holds for $\omega$ and $\zeta$.
 For the general case, starting from $M$, we use some definable equivalence relation on $A$ to cut $A$ into intervals whose
  order type is either finite, or of the form $-\omega$, $\omega$, or $\zeta $. We then define the new predicate on each interval (using the constructions given for $\omega$
  and $\zeta$), from which we get the definition of $M'$. The reduction from $MSO(M')$ to $MSO(M)$ uses Shelah's composition theorem, which allows us to reduce the MSO theory of an ordered sum of structures to the MSO theories of the summands.

 The main reason to consider MSO logic rather than FO is that it actually simplifies the task. Nevertheless we discuss some partial results and perspectives for FO logic in the conclusion of the paper.

Let us recall some important decidability results for MSO theories of linear orderings (the case of labelled linear orderings will be discussed later for $\omega$ and $\zeta$). In his seminal paper \cite{Buchi62}, B\"uchi proved that languages of $\omega-$words recognizable by automata coincide with languages
definable in the MSO theory of $\omega$, from which he deduced decidability of the theory. The result (and the automata method) was then extended to the MSO theory of any  countable ordinal \cite{Buchi65}, to $\omega_1$, and to any ordinal less than $\omega_2$ \cite{BuchiZ83}. Gurevich, Magidor and Shelah prove \cite{GurevichMS83} that decidability of MSO theory of $\omega_2$ is independent of ZFC. Let us mention results for linear orderings beyond ordinals. Using automata, Rabin \cite{Rabin69} proved
decidability of the MSO theory of the binary tree, from
which he deduces decidability of the MSO theory of $\Q$,
which in turn implies decidability of the MSO theory of the class of
countable linear orderings. Shelah \cite{Shelah75} improved model-theoretical techniques that allowed him to
reprove almost all known decidability results about MSO
theories, as well as new decidability results for the case of linear
orderings, and in particular dense orderings. He proved in particular that the MSO theory
of $\R$ is undecidable. The frontier between decidable and undecidable cases was specified in later papers by Gurevich and Shelah \cite{Gurevich79,GurevichS79,GurevichS83}; we refer the reader to the survey \cite{Gurevich85}.

 Our result is also clearly related to the problem of building larger and larger classes of structures with a decidable MSO theory. For an overview of recent results in this area see \cite{BlumensathColcombetLoeding07,Thomas08}.

\section{Preliminaries}

\subsection{Labelled Linear Orderings }

We first recall useful definitions and results about linear
orderings. A good reference on the subject is Rosenstein's book
\cite{Rosenstein82}.

A \emph{linear ordering}~$J$ is a total ordering. We denote by $\omega$ (respectively $\zeta$) the order type of $\N$ (respectively $\Z$). Given a linear
ordering~$J$, we denote by $-J$ the \emph{backwards} linear ordering
obtained by reversing the ordering relation.

Given a linear ordering $J$ and $j \in J$, we denote by $[j]$ the interval $[j,j]$. 
An ordering is \emph{dense} if it contains no pair of consecutive
elements. An ordering is \emph{scattered} if it contains no dense sub-ordering.

In this paper we consider {\em labelled} linear orderings, i.e.,
linear orderings  $(A,<)$ equipped with a function $f:A \to \Sigma$
where $\Sigma$ is a finite nonempty set.

\subsection{Logic}

Let us briefly recall useful elements of monadic second-order logic, and settle some notations. For more details about MSO logic see e.g. \cite{Gurevich85,Thomas97a}. Monadic second-order logic is an extension of first-order logic  that allows to quantify over elements as well as subsets of the domain of the structure. Given a signature $L$, one can define the set of (MSO) formulas over ${L}$ as well-formed formulas that can use first-order variable symbols $x,y,\dots$ interpreted as elements of the domain of the structure, monadic second-order variable symbols $X,Y,\dots$ interpreted as subsets of the domain, symbols from ${L}$, and a new binary  predicate $x \in X$ interpreted as ``$x$ belongs to $X$". A sentence is a formula without free variable. As usual, we  often confuse logical symbols with their interpretation.  Given a signature ${L}$ and an  ${L}-$structure $M$ with domain $D$, we say that a relation $R \subseteq D^m \times (2^D)^n$ is {(MSO) definable} in $M$ if and only if there exists a formula over ${L}$, say $\varphi(x_1,\dots,x_m,X_1,\dots,X_n)$, which is true in $M$ if and only if $(x_1,\dots,x_m,X_1,\dots,X_n)$ is interpreted by an $(m+n)-$tuple of $R$.  Given a structure $M$ we denote by $MSO(M)$ (respectively $FO(M)$) the monadic second-order (respectively first-order) theory of $M$. We say that $M$ is maximal if $MSO(M)$ is decidable and $MSO(M')$ is undecidable for every expansion $M'$ of $M$ by a predicate which is not definable in $M$.

We can identify labelled linear orderings with structures of the form $M=(A,<,P_1,\dots,P_n)$ where $<$ is a binary relation interpreted as a linear ordering over $A$, and the $P_i$'s denote unary predicates.  We use the notation $\overline{P}$ as a shortcut for
the $n$-tuple $(P_1,\dots, P_n)$.  The structure $M$ can be seen as a word indexed by $A$ and over the alphabet $\Sigma_n=\{0,1\}^n$; this word will be denoted by $w(M)$. For every interval $I$ of $A$ we denote by $M_I$ the sub-structure of $M$ with domain $I$.

Let $\Sigma$ and $\Sigma'$ be relational  signatures, $M$ a
$\Sigma$-structure with domain $A$ and $M'$ a $\Sigma'$-structure with domain $A'$. We say that $M$ is (MSO) interpretable in $M'$ if there exist a subset $D$ of $A'$ and a surjective map ${\mathcal I}:D \to A$ such that:
\begin{enumerate}[$\bullet$]
\item $D$ is MSO definable in $M'$;
\item The equivalence relation $EQ_{\mathcal I}=\{(x,y) \in A': {\mathcal I}(x)={\mathcal I}(y) \}$ is MSO definable in $M'$;
\item For every $m$-ary symbol $R$ of $\Sigma$, there exists a MSO $\Sigma'-$formula $\varphi_R$  such that
$$M\models R({\mathcal I}(a_1), \dots ,{\mathcal I}(a_{m})) \Leftrightarrow
M'\models \varphi_R(a_1, \dots,a_{m})$$
for all
$a_1,\dots, a_{m}\in D$.
\end{enumerate}

The following property of
interpretations is well-known.

\begin{lem}\label{lem:interp}
If $M$ is interpretable in $M'$ then $MSO(M)$ is recursive in $MSO(M')$.
\end{lem}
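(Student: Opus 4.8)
The plan is to define a computable syntactic translation $\phi \mapsto \phi^{*}$ sending each MSO $\Sigma$-formula to an MSO $\Sigma'$-formula, and to prove that this translation faithfully reflects truth through the interpretation $\mathcal{I}$. Once such a translation is in hand, deciding membership of a sentence $\psi$ in $MSO(M)$ reduces to computing $\psi^{*}$ and querying the oracle $MSO(M')$ as to whether $\psi^{*}$ holds; since $\psi \mapsto \psi^{*}$ is visibly recursive, this establishes that $MSO(M)$ is recursive in $MSO(M')$.

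To set up the translation I would first record how the data of $M$ is encoded inside $M'$. Elements of $A$ correspond, via $\mathcal{I}$, to $EQ_{\mathcal{I}}$-equivalence classes of elements of $D$, and a subset $S \subseteq A$ corresponds to the $EQ_{\mathcal{I}}$-closed subset $\mathcal{I}^{-1}(S) \subseteq D$ (conversely every $EQ_{\mathcal{I}}$-closed subset of $D$ arises this way). Accordingly I define $\phi^{*}$ by induction on $\phi$: an atomic formula $R(x_{i_1}, \dots, x_{i_k})$ is translated by the formula $\varphi_R(x_{i_1}, \dots, x_{i_k})$ supplied by the interpretation, a membership atom $x \in X$ is left unchanged, Boolean connectives are translated componentwise, a first-order quantifier $\exists x\, \phi$ becomes $\exists x\,(x \in D \wedge \phi^{*})$, and a second-order quantifier $\exists X\, \phi$ becomes $\exists X\,(X \subseteq D \wedge \mathrm{Clos}(X) \wedge \phi^{*})$, where $\mathrm{Clos}(X)$ is the $\Sigma'$-formula --- definable because $D$ and $EQ_{\mathcal{I}}$ are definable in $M'$ --- expressing that $X$ is $EQ_{\mathcal{I}}$-closed.

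The heart of the argument is then the correctness statement, proved by induction on $\phi$: for every assignment of the first-order free variables to elements $a_1, \dots, a_m \in D$ and of the second-order free variables to $EQ_{\mathcal{I}}$-closed subsets $B_1, \dots, B_n \subseteq D$,
\[
M \models \phi\big(\mathcal{I}(a_1), \dots, \mathcal{I}(a_m), \mathcal{I}(B_1), \dots, \mathcal{I}(B_n)\big) \iff M' \models \phi^{*}(a_1, \dots, a_m, B_1, \dots, B_n).
\]
The atomic case for relation symbols is exactly the defining property of $\varphi_R$ in the interpretation (using that $R(\mathcal{I}(a_1), \dots)$ depends only on the classes of the $a_i$), the membership case holds because $a \in B \Leftrightarrow \mathcal{I}(a) \in \mathcal{I}(B)$ for $EQ_{\mathcal{I}}$-closed $B$, and the Boolean cases are immediate. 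Specializing to a sentence $\psi$ yields $M \models \psi \Leftrightarrow M' \models \psi^{*}$, which is precisely what the reduction requires.

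The step I expect to demand the most care is the second-order quantifier case, where one must verify that the relativization to $EQ_{\mathcal{I}}$-closed subsets of $D$ sets up an exact correspondence with arbitrary subsets of $A$: the map $B \mapsto \mathcal{I}(B)$ is a bijection from the $EQ_{\mathcal{I}}$-closed subsets of $D$ onto all subsets of $A$ (surjectivity of $\mathcal{I}$ giving ontoness, closedness giving well-definedness of the inverse), so quantifying over the former in $M'$ matches quantifying over the latter in $M$. The analogous but simpler point for first-order variables, together with the fact that $\mathrm{Clos}(X)$ and the relativizing condition $x \in D$ are themselves MSO-definable in $M'$, completes the induction.
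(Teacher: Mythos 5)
The paper does not actually prove this lemma: it states it as a well-known property of interpretations and moves on, so there is no official proof to compare yours against. What you give is the standard argument, and in outline it is right: relativize first-order quantifiers to $D$, relativize second-order quantifiers to $EQ_{\mathcal{I}}$-closed subsets of $D$, use the correspondence $B \mapsto \mathcal{I}(B)$ between closed subsets of $D$ and arbitrary subsets of $A$, prove the equivalence by induction on formulas, and observe that the translation is computable, so a sentence $\psi$ is in $MSO(M)$ iff $\psi^{*}$ is in $MSO(M')$.

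One step is missing, and as written it would fail: you never say how atomic equality formulas $x = y$ are translated. The definition of interpretation allows $\mathcal{I}$ to be non-injective (that is precisely why $EQ_{\mathcal{I}}$ figures in the definition), and equality is a logical symbol, not a symbol of $\Sigma$, so no $\varphi_{=}$ is supplied by the interpretation. If $x = y$ is left unchanged, the atomic case of your induction breaks: for distinct $a, b \in D$ with $\mathcal{I}(a) = \mathcal{I}(b)$, the left-hand side $M \models \mathcal{I}(a) = \mathcal{I}(b)$ of your correctness statement holds while $M' \not\models a = b$. The fix is the ingredient you already have on the table: translate $x = y$ by the $\Sigma'$-formula defining $EQ_{\mathcal{I}}$, and check that atomic case separately. (Alternatively, if you treat first-order equality as the defined notion $\forall X\,(x \in X \leftrightarrow y \in X)$, your translation handles it automatically, since quantification over $EQ_{\mathcal{I}}$-closed sets separates points of $D$ exactly up to $EQ_{\mathcal{I}}$ --- but then you should say explicitly that this is your convention.)
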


\subsection{Elements of the Composition Method}\label{subsec:composition}

In this paper we rely heavily on composition methods, which allow us to compute the theory of a sum of structures from the ones of its summands.
 For an overview of the subject see \cite{BlumensathColcombetLoeding07,Thomas97,Mak04}. In this section we recall useful definitions and results.

  The quantifier depth of  a formula  $\varphi$ is denoted by $qd(\varphi)$.
  Let $n\in \N$,  $\Delta$   any finite signature  that contains only relational
symbols, and $M_1, M_2$ be $\Delta$-structures. We say that $M_1$
and $M_2$ are $n$-\emph{equivalent}, denoted $M_1 \equiv^n M_2$, if
for every sentence $\varphi $ of quantifier depth at most $n$,  $M_1 \models
\varphi$ iff $M_2 \models \varphi$.

Clearly, $\equiv^n$ is an equivalence relation. For any $n \in \N$
and $\Delta$, the set of sentences of quantifier depth $\leq n$ is
infinite. However, it contains only finitely many semantically
distinct sentences, so there are only finitely many
$\equiv^n$-classes of $\Delta$-structures. In fact, we can compute
representatives for these classes.

\begin{lem}[Hintikka Lemma]\label{comp}

For each $n\in \N$ and a  finite signature  $\Delta$ that contains
only relational symbols,
 we can compute a \emph{finite} set
$H_{n}(\Delta) $ of $\Delta$-sentences of quantifier depth at most
$n$
 such that:
\begin{enumerate}[$\bullet$]
\item If $\tau_1,\tau_2 \in H_{n}(\Delta) $ and $\tau_1\neq \tau_2$,
then $\tau_1\wedge\tau_2$ is unsatisfiable.
\item If $\tau \in H_{n}(\Delta) $ and $qd(\varphi )\leq n$, then either
  $\tau \rightarrow \varphi$ or $\tau \rightarrow \neg\varphi$.
  Furthermore, there is an algorithm that, given such $\tau$ and $\varphi$,
  decides which of these two possibilities holds.
\item For every $\Delta$-structure  $M$ there is a
  \emph{unique} $\tau \in H_{n}(\Delta) $ such that $M \models \tau$.
\end{enumerate}
Elements of $H_{n}(\Delta)$ are called $(n,\Delta)$-\emph{Hintikka
sentences}.

\end{lem}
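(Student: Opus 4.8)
The plan is to prove the stronger, relativized statement from which the lemma for sentences falls out at arity zero: by induction on $n$ I would construct, for every arity $k$, a finite and effectively computable set $H_n^k(\Delta)$ of formulas with free variables among $x_1,\dots,x_k$ and of quantifier depth at most $n$, the $(n,\Delta)$-\emph{Hintikka formulas}, together with a proof that (i) every $\Delta$-structure $M$ with a distinguished $k$-tuple $\bar a$ satisfies exactly one member of $H_n^k(\Delta)$, and (ii) every formula $\varphi(\bar x)$ with $qd(\varphi)\le n$ is equivalent, over all $\Delta$-structures, to an effectively computable disjunction of members of $H_n^k(\Delta)$. Setting $H_n(\Delta):=H_n^0(\Delta)$ then yields the Hintikka sentences, and the three bullets will be read off from (i) and (ii).

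For the construction, at $n=0$ I would let $H_0^k(\Delta)$ consist of all conjunctions that contain, for each atomic formula $\alpha(x_1,\dots,x_k)$, exactly one of $\alpha$ and $\neg\alpha$; this set is finite and computable because $\Delta$ is finite and relational, so there are only finitely many atomic formulas in these variables (inconsistent members are harmless, as no structure satisfies them). For the step, assuming $H_n^{k+1}(\Delta)$ already built, I would associate to each subset $S\subseteq H_n^{k+1}(\Delta)$ the formula
\[
\tau_S(\bar x)\ :=\ \Bigl(\bigwedge_{\sigma\in S}\exists y\,\sigma(\bar x,y)\Bigr)\wedge\forall y\bigvee_{\sigma\in S}\sigma(\bar x,y),
\]
and let $H_{n+1}^k(\Delta)$ consist of all the $\tau_S$. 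Each such set is finite since $H_n^{k+1}(\Delta)$ is, it is computable since subsets of a finite computable set can be enumerated, and $qd(\tau_S)\le n+1$ by inspection. Property (i) at level $n+1$ is then immediate: for given $M,\bar a$ let $S_0$ be the set of those $\sigma\in H_n^{k+1}(\Delta)$ with $M\models\sigma(\bar a,b)$ for some $b$; the induction hypothesis (i) at level $n$ guarantees $S_0$ is well defined, $M\models\tau_{S_0}$, and no other $\tau_S$ holds, since the two conjuncts of $\tau_S$ force any model to have realized set exactly $S$.

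The core is property (ii), which I would prove for fixed $n$ by a sub-induction on the structure of $\varphi$. Atomic $\varphi$ and Boolean combinations are handled by the base case together with the observation that, since $H_n^k(\Delta)$ is exhaustive and pairwise exclusive by (i), the negation of a disjunction of Hintikka formulas is equivalent to the disjunction of the remaining ones. The decisive case is $\varphi=\exists y\,\psi(\bar x,y)$ with $qd(\psi)\le n-1$: the outer induction hypothesis (ii) rewrites $\psi$ as $\bigvee_{\sigma\in T}\sigma$ with $\sigma\in H_{n-1}^{k+1}(\Delta)$, whence $\exists y\,\psi\equiv\bigvee_{\sigma\in T}\exists y\,\sigma$, and from the shape of $\tau_S$ one checks that $\exists y\,\sigma$ is equivalent to the disjunction of those $\tau_S\in H_n^k(\Delta)$ whose index set contains $\sigma$. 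I expect this quantifier case, and in particular matching the existential witness against the ``realized set'' bookkeeping built into $\tau_S$, to be the main obstacle, as it is the only place where the syntactic construction must be reconciled with its intended semantics.

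Finally I would read off the three bullets. Pairwise unsatisfiability of distinct Hintikka sentences is the uniqueness half of (i). For the second bullet, given $\tau\in H_n(\Delta)$ and $\varphi$ with $qd(\varphi)\le n$, I would compute via (ii) the normal form $\bigvee_{\tau'\in U}\tau'$ of $\varphi$ and output ``$\tau\to\varphi$'' if $\tau\in U$ and ``$\tau\to\neg\varphi$'' otherwise; correctness follows because, by (i), every model of a given Hintikka sentence satisfies $\varphi$ iff that sentence lies in $U$, and this membership test is the promised algorithm. The third bullet is the existence-and-uniqueness statement (i) specialized to $k=0$.
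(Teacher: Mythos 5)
The paper recalls this lemma as a standard result and gives no proof of its own, so your attempt stands alone; what you describe is the classical Hintikka (game-normal-form) construction, and for \emph{first-order} logic it is correct. The genuine gap is that the lemma, as stated and used throughout this paper, is about \emph{monadic second-order} formulas: here $qd(\varphi)$ counts set quantifiers as well as first-order ones, and the types $T^n(M)$ built from $H_n(\Delta)$ must decide every MSO sentence of depth at most $n$ (this is what Theorem \ref{thm:simplecomposition}, Proposition \ref{prop:def-interval}, Lemma \ref{lem:recHP}, etc.\ rely on). Your induction only tracks extensions of the tuple $\bar a$ by a single \emph{element}: the step builds $\tau_S$ from a realized subset $S\subseteq H_n^{k+1}(\Delta)$, and the sub-induction for property (ii) has a case for $\varphi=\exists y\,\psi$ but none for $\varphi=\exists Y\,\psi$ with $Y$ a set variable; indeed your formulas have no free set variables, so the relativized induction hypothesis cannot even be stated for such $\psi$. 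Consequently what you construct are FO Hintikka sentences, and the second bullet fails for them, because FO depth-$n$ equivalence is strictly coarser than MSO depth-$n$ equivalence. Concretely, over the signature $\{<\}$, even and odd finite linear orders are separated by one MSO sentence $\varphi$ of some small fixed depth $d$ (guess a set alternating along the order and look at the last element), while all finite orders of size at least $2^n$ satisfy the same FO Hintikka sentence $\tau$ of depth $n$; for $n\geq d$, neither $\tau\rightarrow\varphi$ nor $\tau\rightarrow\neg\varphi$ is valid.

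The repair is mechanical but has to be made. Carry free set variables along: define finite computable sets $H_n^{k,m}(\Delta)$ of formulas in $x_1,\dots,x_k,X_1,\dots,X_m$; in the base case include the atomic formulas $x_i\in X_j$ alongside the $\Delta$-atoms; and in the inductive step index the normal forms by \emph{pairs} $(S_1,S_2)$ with $S_1\subseteq H_n^{k+1,m}(\Delta)$ and $S_2\subseteq H_n^{k,m+1}(\Delta)$, where $\tau_{(S_1,S_2)}$ asserts that the one-element extensions of the current assignment realize exactly the formulas in $S_1$ (via $\exists y$/$\forall y$ as in your $\tau_S$) and the one-subset extensions realize exactly those in $S_2$ (via $\exists Y$/$\forall Y$). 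With this change your proofs of (i), of (ii), and of the three bullets go through essentially verbatim, the new case $\varphi=\exists Y\,\psi$ being handled exactly like $\exists y\,\psi$ but matched against $S_2$ instead of $S_1$.
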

Given a $\Delta$-structure  $M$ we denote by $T^n(M)$ the unique
element of $H_{n}(\Delta) $ satisfied in $M$ and call it the
$n$-\emph{type} of $M$. Thus, $T^n(M)$ determines (effectively)
which sentences  of quantifier-depth $\leq n$ are satisfied in $M$.

As a simple consequence, note that the MSO theory of a structure $M$
is decidable if and only if the function $k \mapsto T^k(M)$ is
recursive.

The sum of structures corresponds to concatenation; let us give a general definition.

\begin{defi}[sum of chains]
Consider an index structure $Ind=(I,<^I)$ where $<^I$ is a linear ordering.
 Consider a  signature  $\Delta=\{<,{P_1, \dots ,P_l}\}$, where $P_i$ are unary predicate names,   and a family $(M_i)_{i \in I}$ of $\Delta$-structures
  $M_i=(A_i;<^i,{P_1}^i,\dots ,P_l^i)$
 with disjoint domains and such that the interpretation $<^i$ of $<$ in each $M_i$ is
  a linear ordering. We define the {\em ordered sum} of the family $(M_i)_{i \in I}$ as the $\Delta-$structure $M=(A; <^M, {P_1}^M,\dots, P_l^M)$ where
\begin{enumerate}[$\bullet$]
\item $A$ equals the union of the $A_i$'s
\item $ x<^M y$ holds if and only if ($x \in A_i$ and $y \in A_j$ for some $i<^I j$), or ($x,y \in A_i$ and $x<^i y$)
\item for every $x \in A$ and every $k \in \{1,\dots,l\}$, ${P}_k^M(x)$ holds  if and only if $M_j \models {P}_k^j(x)$ where $j$ is such that $x \in A_j$.
\end{enumerate}
If the domains of the $M_i$  are not disjoint, replace them
with isomorphic chains that have disjoint domains, and proceed
as before.

We shall use the notation $M=\sum_{i \in I} M_i$ for  the {ordered sum} of the family $(M_i)_{i \in I}$ .

If $I=\{1,2\}$ has   two elements, we  denote  $\sum_{i \in I} M_i$
by $M_1+M_2$.
\end{defi}

We need the following composition theorem on ordered sums (see e.g. \cite{Thomas97}):
\begin{thm}\label{thm:simplecomposition}
\item[(a)] The $k$-types of labelled linear orderings
$M_0, M_1$  determine the $k$-type of the ordered sum $M_0
+ M_1$, which moreover can be computed from the $k$-types
of $M_0$ and $M_1$.

\item[(b)] If the labelled linear orderings $M_0, M_1, \ldots$ all have the
same $k$-type, then this $k$-type determines the $k$-type
of $\Sigma_{i \in \N} M_i$, which moreover can be computed from
the $k$-type of $M_0$.
\end{thm}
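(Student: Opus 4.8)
The plan is to work through the Ehrenfeucht--Fra\"iss\'e (EF) characterization of the relation $\equiv^k$: two $\Delta$-structures are $k$-equivalent precisely when Duplicator has a winning strategy in the $k$-round EF game played on them. Since by the Hintikka Lemma (Lemma~\ref{comp}) there are only finitely many $k$-types and they are computable, it suffices to show that the $k$-type of a sum is \emph{determined} by the $k$-types of the summands, and then to argue that the induced operation on the finite set of types is \emph{computable}. I would treat the two clauses separately but with a common game-theoretic core.

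For part (a) I would prove that $\equiv^k$ is a congruence for the binary operation $+$: if $M_0\equiv^k M_0'$ and $M_1\equiv^k M_1'$, then $M_0+M_1\equiv^k M_0'+M_1'$. Duplicator's strategy is to play the two EF games in parallel, answering a move of Spoiler inside the $M_0$-part (respectively the $M_1$-part) of either structure by the move prescribed by her winning strategy in the game on $(M_0,M_0')$ (respectively $(M_1,M_1')$). The only atomic facts to preserve are the labels and the order; labels are local to a summand, and any order comparison between a point of the first summand and a point of the second always goes left-to-right on both sides by the definition of the ordered sum. Hence the configuration after $k$ rounds is a partial isomorphism, so Duplicator wins. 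Congruence makes the map $(T^k(M_0),T^k(M_1))\mapsto T^k(M_0+M_1)$ a well-defined binary operation $\oplus$ on the finite set $H_k(\Delta)$. To see it is computable, note that every Hintikka sentence is satisfiable, so for each pair of types one can exhibit concrete (e.g.\ finite) labelled orderings realizing them, form their sum, and read off its type using Lemma~\ref{comp}; equivalently, one appeals to the effective Feferman--Vaught-style reduction of a depth-$k$ sentence over $M_0+M_1$ to a Boolean combination of depth-$\le k$ sentences over $M_0$ and over $M_1$.

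For part (b) the determination step is again an EF argument, now exploiting that all summands carry the \emph{same} type $\tau$ and share the index set $\N$. Given two families $(M_i)_{i\in\N}$ and $(N_i)_{i\in\N}$ all of type $\tau$, Duplicator answers a move of Spoiler in component $i$ by a move in component $i$ of the opposite sum, following her winning strategy for the game on $(M_i,N_i)$ (which exists since $M_i\equiv^k N_i$). In $k$ rounds at most $k$ components are ever visited, their relative order is just the order of their indices in $\N$ and hence identical on both sides, and the within-component plays remain partial isomorphisms; so $\sum_{i\in\N}M_i\equiv^k\sum_{i\in\N}N_i$, proving that the type of the $\omega$-sum depends only on $\tau$. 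For computability I would pass to the finite semigroup $(H_k(\Delta),\oplus)$: the sequence of finite-prefix types $\tau^{(n)}:=T^k(\sum_{i<n}M_i)=\tau\oplus\cdots\oplus\tau$ is eventually periodic, so some $\tau^{(m)}$ is idempotent, say $e\oplus e=e$. A Ramsey argument on the interval colouring $c(i,j)=T^k(\sum_{i\le l<j}M_l)$ then regroups the $\omega$-sum into blocks each of type $e$, reducing the problem to computing the $\omega$-power $e^{\omega}:=T^k$ of an $\omega$-sum of idempotent blocks.

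The step I expect to be the real obstacle is precisely this last one: computing $e^{\omega}$. It is tempting to guess $e^{\omega}=e$, but this is false --- with all labels equal, long finite chains share one idempotent type, yet an $\omega$-chain has a different type (it has no maximum), so the $\omega$-power genuinely escapes the binary operation. One therefore has to pin $e^{\omega}$ down separately: it satisfies $e\oplus e^{\omega}=e^{\omega}$, and I would single it out among the finitely many solutions of $e\oplus x=x$ either by exhibiting one explicit $\omega$-sum realizing it and reading off its (computable) type, or by a fixed-point analysis of the iterated game. Care is also needed in the infinite EF game to handle Spoiler distributing moves among infinitely many components, but the bound of $k$ rounds keeps the number of active components finite, which is what rescues the argument.
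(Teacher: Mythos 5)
The paper itself offers no proof of this theorem: it is quoted as a known result (``see e.g.\ \cite{Thomas97}''), so your proposal has to stand on its own. Part (a) of your argument is essentially sound: the parallel-strategy game argument gives that $\equiv^k$ is a congruence for $+$, and effectiveness follows from the Feferman--Vaught-style reduction you mention as a fallback. (Your primary suggestion for effectiveness does not work, though: not every element of $H_{k}(\Delta)$ need be satisfiable, and many realizable types, e.g.\ $T^k(\omega)$ for $k\geq 2$, have no \emph{finite} models, so ``exhibit finite realizers and read off the type'' fails.) One caveat affecting both parts: since this is MSO, the game must include set moves. These decompose componentwise, so Duplicator's parallel strategy still works, but a single move of Spoiler then touches \emph{every} summand at once; your remark that ``the bound of $k$ rounds keeps the number of active components finite'' is a first-order reflex --- false for set moves, though also unnecessary, since Duplicator simply answers componentwise in all components simultaneously.

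The genuine gap is exactly where you predicted: computing $e^{\omega}$, and neither of your escape routes closes it. The equation $e\oplus x=x$ does not single out $e^{\omega}$: with the empty labelling, let $e$ be the common $k$-type of all sufficiently long finite orderings (an idempotent); then $T^k(\omega)$, $T^k(\omega+\zeta)$ and $T^k(\omega+\eta)$ (with $\eta$ the order type of $\Q$) are all realizable solutions of $e\oplus x=x$, and only the first is $e^{\omega}$. And ``exhibiting one explicit $\omega$-sum realizing it and reading off its type'' is circular: reading off the $k$-type of an infinite structure is precisely the effectiveness that part (b) asserts, and the abstract type $e$ came from an arbitrary --- possibly undecidable --- summand $M_0$, so you have no realizer of $e$ whose type you can evaluate. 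This is not a removable wrinkle: the $\omega$-power is genuinely not determined by the binary semigroup $(H_k(\Delta),\oplus)$, which is why the standard proof does not stay at a fixed depth $k$. Instead one inducts on quantifier depth: $T^{k+1}(\sum_i M_i)$ is determined by the set of $k$-types of expansions $\sum_i (M_i,S\cap A_i)$ arising from a set quantifier, and these summand types \emph{vary} with $i$; Ramsey plus part (a) then expresses the answer through the binary operation and the $\omega$-power at depth $k$ over the enlarged signature, which are computable by the induction hypothesis. (Alternatively one can derive part (b) from the index-structure composition theorem, Theorem \ref{compositiontheorem} of the paper, together with the computability of the types of $(\N,<)$.) Your fixed-depth semigroup framework cannot supply the $\omega$-power, so the computability claim in part (b) remains unproved.
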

Part (a) of the theorem justifies the notation $s + t$ for
the $k$-type of a linear ordering which is the sum of two linear orderings
of $k$-types $s$ and $t$, respectively. Similarly, we write
$t \times \omega$ for the $k$-type of a sum $\Sigma_{i \in \N} M_i$ where
all $M_i$ have $k$-type $t$.

For every linear ordering $(A,<)$ and every $x \in A$ let $I_{<x}=\{y \in A: \ y<x\}$ and $I_{\geq x}=\{y \in A: \ y \geq x\}$.  
The following is a well-known consequence of Theorem \ref{thm:simplecomposition}(a) (see e.g \cite[Theorem A.1]{Gabbay89}).

\begin{cor}\label{cor:def-elements}
Let $M=(A,<,\overline{P})$, $k \geq 1$, and $a,b \in A$ be such that $I_{<a}$ and $I_{<b}$ are nonempty sets. Assume that 
$T^{k}(M_{I_{<a}})=T^k(M_{I_{<b}})$ 
and $T^k(M_{I_{\geq a}})=T^k(M_{I_{\geq b}})$. Then for every formula $\varphi(x)$ such that $qd(\varphi) < k$, we have 
$$M\models \vp(a) \textrm{ if and only if } M\models \vp(b).$$
\end{cor}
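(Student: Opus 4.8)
The plan is to treat $a$ as a distinguished point, split $M$ at the boundary into the ordered sum $M_{I_{<a}}+M_{I_{\ge a}}$, and run everything through the composition theorem, exploiting that $a$ is the \emph{least} element of $M_{I_{\ge a}}$ and that ``being least'' is definable. Concretely, fix $\vp(x)$ with $d:=qd(\vp)<k$, and reformulate the goal in type-theoretic terms: writing $(M,a)$ for $M$ with the boundary point named by a parameter (a constant, or equivalently a singleton predicate placed only in the right summand), we have $M\models\vp(a)$ iff $(M,a)\models\vp$, and $qd(\vp)=d\le k-1$. So it suffices to prove that $(M,a)$ and $(M,b)$ satisfy the same formulas of quantifier depth $\le k-1$. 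Since, as linear orderings, $M=M_{I_{<a}}+M_{I_{\ge a}}$ with the parameter landing in the right summand as its least element (and likewise for $b$), the parametrized form of Theorem~\ref{thm:simplecomposition}(a) — the same ordered-sum argument, now with a parameter sitting entirely in one summand — computes the $(k-1)$-type of $(M,a)$ from $T^{k-1}(M_{I_{<a}})$ and the $(k-1)$-type of the \emph{marked} structure $(M_{I_{\ge a}},a)$. Thus it is enough to establish the two equalities $T^{k-1}(M_{I_{<a}})=T^{k-1}(M_{I_{<b}})$ and $T^{k-1}(M_{I_{\ge a}},a)=T^{k-1}(M_{I_{\ge b}},b)$.

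The first equality is immediate: the hypothesis $T^{k}(M_{I_{<a}})=T^{k}(M_{I_{<b}})$ already determines the $(k-1)$-types of the two left parts, since a $k$-type determines the $(k-1)$-type.

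The crux is the second equality, which transfers the hypothesis $T^{k}(M_{I_{\ge a}})=T^{k}(M_{I_{\ge b}})$ about the \emph{unmarked} right parts to the \emph{marked} structures. Here I would use that $a$ is the unique element of $M_{I_{\ge a}}$ satisfying $\lambda(x):=\forall y\,(x\le y)$, a formula with $qd(\lambda)=1$. For any $\psi(x)$ with $qd(\psi)\le k-1$, uniqueness of the least element gives
\[
M_{I_{\ge a}}\models\psi(a)\ \Longleftrightarrow\ M_{I_{\ge a}}\models\exists x\,(\lambda(x)\wedge\psi(x)),
\]
and the sentence on the right has quantifier depth $1+\max(1,qd(\psi))$. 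Because $T^{k}(M_{I_{\ge a}})=T^{k}(M_{I_{\ge b}})$, this sentence has the same truth value over $M_{I_{\ge b}}$; unwinding the equivalence there (with $b$ the least element of $M_{I_{\ge b}}$) yields $M_{I_{\ge b}}\models\psi(b)$. As $\psi$ ranges over all formulas of depth $\le k-1$, the marked $(k-1)$-types agree, which is exactly the second equality.

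I expect the main obstacle to be precisely this quantifier-depth bookkeeping: naming the boundary point $a$ as ``the least element of $M_{I_{\ge a}}$'' costs one extra quantifier, which is why the hypotheses are imposed at level $k$ while the conclusion only controls formulas of depth $<k$. The one point needing care is that the wrapping $\exists x\,(\lambda(x)\wedge\psi(x))$ stays within depth $k$, i.e.\ that $1+\max(1,qd(\psi))\le k$; given $qd(\psi)\le k-1$ this amounts to the requirement $k\ge 2$, and the degenerate quantifier-free case ($k=1$) should be treated directly by comparing the labels carried by $a$ and $b$. The only other step to spell out is the parametrized composition used in the first paragraph — that Theorem~\ref{thm:simplecomposition}(a) survives the placement of a single parameter in the right summand — which is routine, being the same ordered-sum computation with the parameter pre-assigned.
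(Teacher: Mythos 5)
The paper does not actually prove Corollary \ref{cor:def-elements}: it is quoted as a well-known consequence of Theorem \ref{thm:simplecomposition}(a), with a pointer to the literature. So your proposal has to be judged against the folklore argument, and for $k\geq 2$ you have reconstructed exactly that argument, correctly: split $M$ at the distinguished point, compose the two summands with the point named by a parameter placed in the right summand, and transfer the hypothesis about the unmarked right parts to the marked ones using that the parameter is the least element of $I_{\geq a}$, definable by $\forall y\,(x\leq y)$ of quantifier depth $1$. Your bookkeeping $1+\max(1,qd(\psi))\leq k$ is precisely where the gap between the hypothesis (types at depth $k$) and the conclusion (formulas of depth $<k$) is consumed. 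One remark on the step you call routine: the paper's composition theorem and Hintikka machinery are stated for purely relational signatures, so the ``parametrized'' composition genuinely needs the constant-based (pebbled Ehrenfeucht--Fra\"iss\'e) formulation; encoding the mark as a singleton predicate and citing Theorem \ref{thm:simplecomposition}(a) verbatim does \emph{not} suffice, because converting $\varphi(a)$ into a sentence about the singleton-marked sum costs one more quantifier than the $(k-1)$-equivalence you establish can pay for. The EF-game argument with a pre-placed parameter does go through, so this is indeed only a matter of spelling out.

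The genuine flaw is your final sentence about $k=1$. You propose to handle it ``by comparing the labels carried by $a$ and $b$'', but the hypothesis does not give you equality of those labels: a depth-$1$ MSO sentence over a labelled linear ordering can only record which labels occur somewhere in the structure (a single second-order quantifier governs a quantifier-free part that cannot mention membership in the quantified set without a free first-order variable), so $T^1(M_{I_{\geq a}})$ carries no information about the label of the least element. In fact the corollary as stated is false for $k=1$: take $M=(\N,<,P)$ with $P$ the set of even numbers, $a=2$, $b=3$; then $I_{<a}$ and $I_{<b}$ are nonempty, all four $1$-types in the hypothesis coincide (each of the four intervals contains both labels), yet the quantifier-free formula $P(x)$ holds at $a$ and fails at $b$. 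So the threshold $k\geq 2$ that your own bookkeeping produces is not an artifact; it is the correct range of validity, and your proof should simply restrict to $k\geq 2$ rather than promise a direct argument for a case that fails. (The paper is unharmed by this off-by-one: Lemma \ref{lem:undefN} invokes the corollary with $k\geq 2$, and Proposition \ref{prop:recurrent1} only needs its conclusion for all sufficiently large $k$.)
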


We shall use the following result.

\begin{prop}\label{prop:def-interval}
Let  $M = (A, <, \overline{P})$, $I$ be an interval of $A$, and $Q \subseteq A$. If $Q$ is definable in $M$ then $Q \cap I$ is definable in $M_I$.
\end{prop}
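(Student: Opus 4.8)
The plan is to exploit Theorem \ref{thm:simplecomposition} together with the fact that an interval splits a labelled linear ordering into an ordered sum of (at most) three pieces. Fix a formula $\psi(x)$ that defines $Q$ in $M$. To bring the free variable into the scope of the composition theorem, I first enlarge the signature by a fresh unary predicate $U$ and pass to the sentence $\psi^{*} := \exists x\,(\forall y\,(U(y) \leftrightarrow y = x) \wedge \psi(x))$, so that for every $a \in A$ we have $(M, \{a\}) \models \psi^{*}$ if and only if $M \models \psi(a)$. Set $d = qd(\psi^{*})$. Since $I$ is an interval, the set $L$ of elements of $A$ lying below all of $I$ and the set $R$ of elements lying above all of $I$ are themselves intervals, and $A$ is the disjoint union $L \cup I \cup R$ ordered as $L < I < R$; hence $M = M_L + M_I + M_R$.

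Next I apply the composition theorem to the marked structure. For each $a \in I$ the marking decomposes, over the enlarged signature, as $(M,\{a\}) = (M_L,\emptyset) + (M_I,\{a\}) + (M_R,\emptyset)$. Applying Theorem \ref{thm:simplecomposition}(a) twice, the $d$-type of $(M,\{a\})$ is determined by the three $d$-types $s_L = T^{d}(M_L,\emptyset)$, $T^{d}(M_I,\{a\})$, and $s_R = T^{d}(M_R,\emptyset)$. As $a$ ranges over $I$, the outer types $s_L$ and $s_R$ remain fixed, so the truth value of $\psi^{*}$ in $(M,\{a\})$ — equivalently, whether $a \in Q$ — depends only on the middle type $T^{d}(M_I,\{a\})$.

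It remains to express this dependence inside $M_I$. For each $d$-type $\theta$ of a structure over the enlarged signature, let $\widehat{\theta}(x)$ be the formula over the original signature obtained from the Hintikka sentence $\tau_{\theta}$ by replacing every atomic subformula ``$y \in U$'' with ``$y = x$''; then $M_I \models \widehat{\theta}(a)$ if and only if $T^{d}(M_I,\{a\}) = \theta$. Let $\Theta$ be the finite set of those $\theta$ for which the composition function returns that $\psi^{*}$ holds when the outer summands have types $s_L$ and $s_R$. Then the formula $\chi(x) := \bigvee_{\theta \in \Theta} \widehat{\theta}(x)$ defines $Q \cap I$ in $M_I$, since for $a \in I$ we have $M_I \models \chi(a)$ iff $T^{d}(M_I,\{a\}) \in \Theta$ iff $M \models \psi(a)$ iff $a \in Q$.

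The only genuine work is this last bookkeeping: one must check that ``$(M_I,\{a\})$ has $d$-type $\theta$'' is really expressible inside $M_I$, which is exactly what the passage from $\tau_{\theta}$ to $\widehat{\theta}(x)$ provides, and that the outer contributions $s_L, s_R$ act as constants absorbed into the choice of $\Theta$. Degenerate cases in which $L$ or $R$ is empty are handled by simply dropping the corresponding summand (or assigning it a neutral type). Note that no effectiveness is required here: the argument uses only that the $d$-types of the summands \emph{determine} the $d$-type of the sum, not that they are computable, so the computability clauses of Theorem \ref{thm:simplecomposition} play no role.
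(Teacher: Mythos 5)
Your proof is correct, and it reaches the conclusion by a genuinely different route than the paper, although both arguments rest on the same two pillars: the decomposition of $M$ as $M_L + M_I + M_R$ induced by the interval, and Theorem~\ref{thm:simplecomposition}(a). The paper marks the \emph{whole set}: it expands $M$ by a monadic predicate $\mathbf{X}$ interpreted by an arbitrary $R \subseteq A$, exploits the fact that $Q$ is the unique solution of $\forall x\,(x \in X \leftrightarrow \varphi(x))$, and shows by a short contradiction argument (swapping the middle summand) that $T^m(M^R_I) = T^m(M^Q_I)$ forces $R \cap I = Q \cap I$; the defining formula is then $\exists X\,(\theta(X) \wedge x \in X)$, where $\theta$ is the single Hintikka sentence $T^m(M^Q_I)$ read as a formula with free set variable $X$. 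You instead mark a \emph{single element}: for $a \in I$ you track the $d$-type of $(M_I,\{a\})$, observe that with the outer types $s_L, s_R$ held fixed the membership of $a$ in $Q$ is a function of this middle type alone, and define $Q \cap I$ by a finite disjunction of relativized Hintikka formulas $\widehat{\theta}(x)$ over the ``good'' types $\Theta$. Your element-marking argument is closer in spirit to Corollary~\ref{cor:def-elements}, proceeds pointwise, and avoids the paper's uniqueness-plus-contradiction step, so it is arguably more elementary; the paper's set-marking argument buys a slightly stronger intermediate fact, namely a single formula $\theta(X)$ that characterizes the set $Q \cap I$ uniquely in $M_I$, at the cost of a second-order detour. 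Your substitution step (replacing $U(y)$ by $y = x$ inside Hintikka sentences) is the exact first-order analogue of the paper's step demoting the predicate symbol $\mathbf{X}$ to a free set variable; just make sure the variable $x$ is chosen fresh so that no bound variable of $\tau_\theta$ captures it. You are also right that only the ``determines'' half of Theorem~\ref{thm:simplecomposition}(a) is needed, with no effectiveness; the same is true of the paper's proof.
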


\begin{proof}
Assume that the formula $\varphi(x)$ defines $Q$ in $M$. Note that $Q$ is the unique predicate which satisfies the formula $\psi(X)  \equiv \forall x (x \in X \leftrightarrow \varphi(x))$. Let $m=qd(\psi)$. For every $R \subseteq A$, let us denote by $M^R$ the expansion of $M$ with a new monadic predicate $\mathbf X$ interpreted by $R$. By our hypothesis, we have $T^m(M^R)=T^m(M^Q)$ if and only if $Q=R$. 

Let $J_1$ (respectively $J_2$) be the set of elements of $A$ less than (respectively greater than) every element of $I$. We have $M^R=M^R_{J_1}+M^R_I+M^R_{J_2}$. 
We claim that $T^m(M^R_I)=T^m(M^Q_I)$ if and only if $R \cap I= Q \cap I$. Indeed assume for a contradiction that there exists $R$ such that $T^m(M^R_I)=T^m(M^Q_I)$ and $R \cap I \ne Q \cap I$. Consider $Q' \subseteq A$ such that $Q' \cap J_1= Q \cap J_1$, $Q' \cap J_2= Q \cap J_2$ and $Q' \cap I= R \cap I$. On the one hand we have $Q' \ne Q$, and on the other hand by Theorem \ref{thm:simplecomposition}(a) we have $$T^m(M^{Q'})=T^m(M^Q_{J_1})+T^m(M^R_I)+T^m(M^Q_{J_2})=T^m(M^Q_{J_1})+T^m(M^Q_I)+T^m(M^Q_{J_2})=T^m(M^Q)$$  which contradicts the fact that $Q$ is the unique predicate
which satisfies $\psi(X)$ in $M$.

The type $T^m(M^Q_I)$ is a sentence $\theta$ over the signature $\{<, \overline{P}\} \cup \{{\mathbf X}\}$. We can see it as a formula $\theta(X)$ with one free monadic variable $X$ over the signature $\{<, \overline{P}\}$. We have $M_I \models \theta(X)$ if and only if $X=Q \cap I$. Thus the formula $\gamma(x)$ defined as $\exists X(\theta(X) \wedge x \in X)$ defines $Q \cap I$ in $M_I$.
\end{proof}

\subsection{Decomposition of a labelled linear ordering}
Let $M=(A,<,\overline{P})$ be a labelled linear ordering and let $\sim$ be an equivalence relation on $A$. If the $\sim-$equivalence classes are intervals of $A$ we say that $\sim$ is a \emph{convex} equivalence relation. In
this case the set of $\sim$-equivalence classes can be naturally
ordered by $i_1\leq i_2$ iff $\exists x_1\in i_1 \ \exists x_2\in
i_2 \ (x_1\leq x_2)$. We denote by $M/_\sim$ the linear order of
$\sim$-equivalence classes. The mapping that assigns to every $x\in
A$ its $\sim$-equivalence class is said to be\emph{ canonical}.

Let  $\sim$ be  a convex equivalence relation on $A$. Then
$M=\sum_{i\in M/_\sim} M_i$, where $M_i$ is the substructure of $M$ with domain 
the equivalence class $i$.
\begin{lem} \label{lem:decomp} If $\sim$ is a convex equivalence relation which is
definable in $M$, then
\begin{enumerate}[\em(1)]
\item  $M/_\sim$ is interpretable in $M$.
\item Let
$\vp_1, \dots \vp_k$ be sentences in the signature of $M$. Let 
$$N=(M/_\sim,<,Q_{\vp_1}, \dots, Q_{\vp_k})$$  where  $Q_{\vp_l}=\{ i\in
M/_\sim\mid M_i\models \vp_l\}$ for every $l$. Then $N$ is interpretable in $M$.
\end{enumerate}
\end{lem}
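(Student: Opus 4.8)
The plan is to realize part (1) through the canonical map and then to obtain part (2) by relativizing each $\vp_l$ to the (parametrically definable) equivalence class. Let $\mathrm{eq}(x,y)$ denote the formula that defines $\sim$ in $M$, which exists by hypothesis.

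For part (1), I would take $D = A$ (definable by $x=x$) and let $\mathcal{I}\colon A \to M/_\sim$ be the canonical map sending $x$ to its class $[x]$; this is surjective, and $EQ_{\mathcal{I}}$ is precisely $\sim$, hence defined by $\mathrm{eq}$. It then remains to interpret the ordering of $M/_\sim$. Since the $\sim$-classes are convex, for $a,b \in A$ the class of $a$ strictly precedes the class of $b$ if and only if $a < b$ and $a \not\sim b$, so I would set $\vp_<(x,y) \equiv x < y \wedge \neg\,\mathrm{eq}(x,y)$. Convexity is what makes this work in both directions and, crucially, guarantees that the truth value of $\vp_<(a,b)$ depends only on the classes $[a]$ and $[b]$ (if $a < b$ and $a \not\sim b$, then by convexity every element of $[a]$ lies below every element of $[b]$), which is exactly the well-definedness required by the third clause of the definition of interpretability.

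For part (2), I would keep the same $D$, $\mathcal{I}$, $EQ_{\mathcal{I}}$ and $\vp_<$, and supply for each $l$ a formula $\vp_{Q_l}(x)$ interpreting the unary predicate $Q_{\vp_l}$. What is needed is that $M \models \vp_{Q_l}(a)$ if and only if $[a] \in Q_{\vp_l}$, i.e.\ if and only if $M_{[a]} \models \vp_l$. The class $[a] = \{y : a \sim y\}$ is definable with the parameter $a$ via $\mathrm{eq}$, so I would take $\vp_{Q_l}(x)$ to be the relativization $(\vp_l)^{[x]}$ of $\vp_l$ to the set $\{y : \mathrm{eq}(x,y)\}$: each first-order quantifier $\exists y\,\chi$ (resp.\ $\forall y\,\chi$) is replaced by $\exists y(\mathrm{eq}(x,y) \wedge \chi)$ (resp.\ $\forall y(\mathrm{eq}(x,y) \to \chi)$), each monadic quantifier $\exists Y\,\chi$ (resp.\ $\forall Y\,\chi$) by $\exists Y(\forall z(z \in Y \to \mathrm{eq}(x,z)) \wedge \chi)$ (resp.\ the dual), and the atomic formulas are left unchanged. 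Since $[a]$ depends only on the class of $a$, this formula is automatically $\sim$-invariant.

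The correctness rests on the standard relativization fact: for any subset $S$ of $A$ that is definable with parameters and any MSO sentence $\psi$, one has $M \models \psi^S$ if and only if $M_S \models \psi$, proved by a routine induction on the structure of $\psi$. Applying it with $S = [a]$ and $\psi = \vp_l$ gives exactly $M \models \vp_{Q_l}(a) \Leftrightarrow M_{[a]} \models \vp_l$, as required. I expect the only point needing care to be the clause for the monadic second-order quantifiers, where the quantified set must be restricted to subsets of $[a]$ so that it matches the range of set quantifiers in the substructure $M_{[a]}$; this is the sole place where MSO differs from the first-order case, and once it is correctly in place the argument is a direct induction with no genuine obstacle beyond this bookkeeping.
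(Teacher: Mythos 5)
Your proof is correct. The paper states this lemma without proof, treating it as routine; your argument---the canonical map with $D=A$ and $EQ_{\mathcal I}=\,\sim$, the order formula $x<y\wedge\neg\,\mathrm{eq}(x,y)$ whose well-definedness on classes follows from convexity, and the relativization of each $\vp_l$ to the parametrically definable class $\{y:\mathrm{eq}(x,y)\}$ with monadic quantifiers restricted to subsets of that class so that they match the set quantifiers of the substructure $M_i$---is precisely the standard construction the authors leave implicit.
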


\section{The Case of $\N$}

In this section we prove that there is no maximal structure of the form $(\N,<,\overline{P})$ with respect to MSO logic.  The proof is based upon
 results from  \cite{Rabinovich07a} . Let us first briefly review results related to the decidability of the MSO theory of expansions of $(\N,<)$.
 B\"uchi \cite{Buchi62} proved decidability of $MSO(\N,<)$ using automata. On the other hand it is known that $MSO(\N,+)$, and even $MSO(\N,<,x \mapsto 2x)$, are undecidable \cite{Robinson58}. Elgot and Rabin study in \cite{ElgotRabin66} the MSO theory of structures of the form $(\N,<,P)$, where $P$ is some unary predicate. They give a sufficient condition on $P$ which ensures decidability of the MSO theory of $(\N,<,P)$. In particular the condition holds when $P$ denotes the set of factorials, or the set of powers of any fixed integer. The frontier between decidability and undecidability of related theories was explored in numerous later papers \cite{CartonThomas02,Fratani09,Semenov83,Semenov84,RabinovichT06,Rabinovich07a,Siefkes70,Thomas75}. Let us also mention that \cite{Semenov83} proves the existence of unary
 predicates $P$ and $Q$ such that both $MSO(\N,<,P)$ and $MSO(\N,<,Q)$ are decidable while $MSO(\N,<,P,Q)$ is undecidable.

Most decidability proofs for $MSO(\N,<,\overline{P})$ are related somehow to the possibility of cutting $\N$ into segments whose $k-$type is ultimately constant,
from which one can compute the $k-$type of the whole structure (using Theorem \ref{thm:simplecomposition}). This connection was specified in
\cite{Rabinovich07a} (see also \cite{RabinovichT06}) using the notion of homogeneous sets.

\begin{defi}[$k$-homogeneous set]
Let $k \geq 0$. A set $H = \{h_0 < h_1 < \ldots\} \subseteq \N$
is called {\em $k$-homogeneous for $M = (\N, <, \overline{P})$}, if all sub-structures $M_{[h_i, h_j[}$ for $i < j$ (and
hence all sub-structures $M_{[h_i, h_{i+1}[}$ for $i \geq 0$) have the same
$k$-type.
\end{defi}
This notion can be refined as follows.

\begin{defi}[uniformly homogeneous set]
A set $H = \{h_0 < h_1 < \ldots \} \subseteq \N$ is called
{\em uniformly homogeneous
for $M = (\N, <, \overline{P})$}
if for each $k$ the set
$H_k = \{h_k < h_{k+1} < \ldots \}$ is $k$-homogeneous.
\end{defi}
The following result \cite{Rabinovich07a} settles a tight connection between $MSO(\N,<,\overline{P})$ and uniformly homogeneous sets.
\begin{thm}(\cite{Rabinovich07a})\label{th:rechomog}
For every  $M = (\N, <, \overline{P})$, the $MSO$ theory of $M$ is decidable
if and only if (the sets $\overline{P}$ are recursive and there exists a recursive uniformly homogeneous set for $M$).
\end{thm}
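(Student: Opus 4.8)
The plan is to prove the two implications separately, using throughout the equivalence recalled above (that $MSO(M)$ is decidable iff the map $k \mapsto T^k(M)$ is recursive) and the fact that, for each fixed $k$, the finitely many $k$-types form a finite semigroup under the composition operation ``$+$'' of Theorem~\ref{thm:simplecomposition}. The engine of both directions is the computability of types of intervals.

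For the direction ``$\Leftarrow$'' I assume $\overline{P}$ is recursive and that $H=\{h_0<h_1<\cdots\}$ is a recursive uniformly homogeneous set, and I compute $T^k(M)$ as follows. Since $H$ is recursive I can compute $h_k$, and since $\overline{P}$ is recursive the finite labelled ordering $M_{[0,h_k[}$ is explicitly known, so $T^k(M_{[0,h_k[})$ is computable; likewise $t := T^k(M_{[h_k,h_{k+1}[})$ is computable. By uniform homogeneity $H_k=\{h_k<h_{k+1}<\cdots\}$ is $k$-homogeneous, so every block $M_{[h_i,h_{i+1}[}$ with $i \ge k$ has $k$-type $t$. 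Writing $M = M_{[0,h_k[} + \sum_{i \ge k} M_{[h_i,h_{i+1}[}$ and applying Theorem~\ref{thm:simplecomposition}(b) to the tail and then~(a) to the two summands gives $T^k(M) = T^k(M_{[0,h_k[}) + (t \times \omega)$, computable from the data above. Hence $k \mapsto T^k(M)$ is recursive and $MSO(M)$ is decidable.

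For the direction ``$\Rightarrow$'' I assume $MSO(M)$ is decidable. First, $\overline{P}$ is recursive: the $n$-th element of $\N$ is defined by a formula $\delta_n(x)$ asserting that $x$ has exactly $n$ predecessors, which is effectively constructible from $n$, so membership $n \in P_j$ is decided by running the decision procedure on the sentence $\exists x(\delta_n(x) \wedge P_j(x))$. Consequently every finite-interval type $T^k(M_{[a,b[})$ is computable. Moreover — and this is where decidability is genuinely used — each final-segment type $T^k(M_{[a,\infty[})$ is computable: for each candidate Hintikka sentence $\tau \in H_k(\Delta)$ I relativize $\tau$ to the positions $\ge$ the $a$-th element to obtain an $M$-sentence $\tau^{\ge a}$ with $M \models \tau^{\ge a}$ iff $M_{[a,\infty[}\models \tau$, and I decide it; by the Hintikka Lemma exactly one $\tau$ succeeds.

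It remains to build a recursive uniformly homogeneous set, which is the main obstacle. The abstract existence is classical additive Ramsey: coloring each pair $a<b$ by $T^k(M_{[a,b[})$ yields an additive coloring into the finite semigroup of $k$-types, so some infinite set is $k$-homogeneous, necessarily with idempotent color. Two elementary observations let a single set serve all levels: a $k$-homogeneous set is automatically $k'$-homogeneous for $k'\le k$ (the $k$-type determines the $k'$-type), and every subset of a $k$-homogeneous set is again $k$-homogeneous (its blocks sum to $e+\cdots+e=e$). Thus it suffices to produce a nested sequence $\N \supseteq H^0 \supseteq H^1 \supseteq \cdots$ of infinite recursive sets with $H^k$ $k$-homogeneous, and then diagonalize by letting $h_k$ be the least element of $H^k$ exceeding $h_{k-1}$: the tail $\{h_k,h_{k+1},\dots\}$ lies in $H^k$ and is therefore $k$-homogeneous. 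The hard part is making each $H^k$ recursive. Here I would drive the Ramsey construction effectively using the two computable quantities above: at level $k$ I locate, by pigeonhole among the finitely many types, an idempotent color $e_k$ that recurs, and then greedily select points realizing $e_k$ between consecutive choices. The delicate point is guaranteeing that this greedy selection never stalls, so that $H^k$ is total recursive; this is secured by first passing to a ``recurrent'' regime cut out by conditions such as ``$\exists^\infty b\ T^k(M_{[a,b[})=t$'' and ``$T^k(M_{[a,\infty[})=v$''. Crucially these conditions speak only about $M$ and its definable interval-types, so they are expressible by MSO sentences about $M$ and hence decidable — this is exactly the step that uses the full decidability hypothesis rather than mere recursiveness of $\overline{P}$. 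Verifying that decidability indeed suffices to keep the idempotent selection alive, uniformly in $k$ and compatibly with the nesting, is the technical heart of the argument.
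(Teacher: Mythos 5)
First, a point of calibration: the paper does not actually prove this theorem itself --- it is quoted from \cite{Rabinovich07a}, and what the paper imports from that proof (Proposition \ref{prop:unifhomog} and Corollary \ref{cor:compH}) reveals how effectiveness is obtained there: for each $k$ one writes an MSO formula $\varphi_k(X)$ having a \emph{unique} solution $H_k$ in $M$, with the solutions for $k$ and $k+1$ agreeing on their first $k$ elements; uniqueness makes the elements of $H_k$ definable, hence computable by querying the decision procedure, and the diagonal set is uniformly homogeneous. Your ``$\Leftarrow$'' direction is correct and complete: computing $T^k(M_{[0,h_k[})$ and $t=T^k(M_{[h_k,h_{k+1}[})$ from the recursive data and concluding $T^k(M)=T^k(M_{[0,h_k[})+t\times\omega$ via Theorem \ref{thm:simplecomposition} is the standard argument. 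In ``$\Rightarrow$'', the recursiveness of $\overline{P}$ and the computability of $T^k(M_{[a,b[})$ and $T^k(M_{[a,\infty[})$ are also fine, and your diagonalization scheme (nested recursive $k$-homogeneous sets $H^k$, then $h_k$ chosen in $H^k$) would indeed yield a recursive uniformly homogeneous set \emph{if} the $H^k$ could be produced.

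But that production is exactly where your proposal has a genuine gap, and it is the heart of the theorem, as you yourself flag. Two concrete problems. (i) The invariant you propose for keeping the greedy selection alive --- conditions such as $\exists^\infty b\; T^k(M_{[a,b[})=t$ --- does not self-maintain: if the idempotent $e$ recurs after $a$ and you pick $b$ with $T^k(M_{[a,b[})=e$, then additivity only gives that for the infinitely many later witnesses $b'$ one has $e+T^k(M_{[b,b'[})=e$, i.e.\ some $f$ with $e+f=e$ recurs after $b$ --- not $e$ itself. The condition that does work is genuinely second-order: ``there exists an infinite $X$ such that every pair from $\{a\}\cup X$ spans an interval of $k$-type $e$''; this is MSO-expressible (hence decidable), is inherited by every element of a witness $X$, and guarantees the next point exists. (ii) The nesting $H^0\supseteq H^1\supseteq\cdots$ clashes with recursiveness as you have set things up: to build $H^{k+1}$ inside $H^k$ you must decide your selection conditions \emph{relativized to} $H^k$, but $H^k$ is merely recursive, not MSO-definable in $M$, so the relativized conditions are no longer sentences that the oracle for $MSO(M)$ can answer. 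Nor can nesting be avoided by a single all-levels Ramsey application: already for $\overline{P}=\emptyset$ no infinite set is $k$-homogeneous for every $k$ simultaneously, since intervals of different finite lengths are eventually distinguished as $k$ grows. This coherence-plus-effectiveness problem is precisely what the cited proof solves by making the canonical sets unique solutions of the definable conditions $\varphi_k(X)$ with prefix agreement; your sketch identifies the right ingredients but leaves this central step unresolved.
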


The proof of Theorem \ref{th:rechomog} given in  \cite{Rabinovich07a} actually shows the following. 

\begin{thm}(\cite{Rabinovich07a})\label{th:rechomog-restated}
For every  $M = (\N, <, \overline{P})$, there exists $H$ which is uniformly homogeneous for $M$ and recursive in $MSO(M)$. 
\end{thm}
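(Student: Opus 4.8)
The plan is to reduce the statement to building a single strictly increasing sequence $h_0<h_1<\cdots$ with the property that for every $k$ and all $k\le i<j$ the $k$-type $T^{k}(M_{[h_i,h_j[})$ takes one and the same value; by the definition of $H_k$ this is exactly uniform homogeneity of $H=\{h_0<h_1<\cdots\}$. For fixed $k$ I color each pair $i<j$ by $c_k(i,j)=T^{k}(M_{[i,j[})$. By the Hintikka Lemma (Lemma \ref{comp}) only finitely many colors occur, and by Theorem \ref{thm:simplecomposition}(a) the coloring is additive, $c_k(i,j)+c_k(j,l)=c_k(i,l)$ for $i<j<l$; hence the colors form a finite semigroup, and the classical Ramsey argument produces an infinite monochromatic set whose common color is idempotent. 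Observe also that a common $k$-type projects onto the corresponding $k'$-type for $k'<k$, so homogeneity at level $k$ automatically entails homogeneity at all lower levels, which keeps the targets for different $k$ mutually consistent.

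First I would check that $c_k$, together with the quantified questions used below, is recursive in $MSO(M)$. Each natural number $n$ is FO-definable in $(\N,<)$ as the element with exactly $n$ predecessors, say by a formula $\delta_n(x)$; thus for fixed numerals $i<j$ and a fixed $\tau\in H_k$ the assertion $c_k(i,j)=\tau$ is, after relativizing the Hintikka sentence $\tau$ to the interval cut out by $\delta_i$ and $\delta_j$, a genuine sentence of $M$, so the oracle $MSO(M)$ computes $c_k(i,j)$. This is precisely where the original proof of Theorem \ref{th:rechomog} used its decidability hypothesis: there the recursiveness of $\overline{P}$ makes $c_k$ recursive outright, whereas here the oracle does the same job. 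More importantly, since $\exists^{\infty}y\,\psi(y)$ is itself an MSO formula, the oracle decides every question of the shape ``do infinitely many $y$ of a given MSO-definable set satisfy $c_k(a,y)=\tau$?'' whenever $a$ is a numeral and the set is defined with numeral parameters.

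Next I would carry out the homogeneous-set construction incrementally, choosing $h_0,h_1,\dots$ one at a time and always treating the already chosen elements as numerals. Having fixed $h_0<\cdots<h_{m-1}$ and idempotent targets $e_0,e_1,\dots$ for the levels activated so far, I would let the admissible set at stage $m$ be
\[
R_m=\{\,y>h_{m-1}\; :\; \textstyle\bigwedge_{k\le m}\ \bigwedge_{k\le i<m} c_k(h_i,y)=e_k\,\},
\]
which is MSO-definable with the numeral parameters $h_0,\dots,h_{m-1}$ and the hard-coded idempotents. The point of this formulation is that every set occurring in the construction stays MSO-definable, so each infinitude test ``is $R_m$ infinite?'' and each search for an infinitely occurring color is a legitimate sentence of $M$ answered by the oracle; one then sets $h_m=\min R_m$.

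The heart of the matter, and the step I expect to be the \textbf{main obstacle}, is the effective, definability-preserving selection of the idempotents $e_k$ and the guarantee that the reservoirs never stall: the infinite Ramsey theorem is not effective even relative to the coloring oracle, so I cannot simply extract a monochromatic set after the fact. Instead, when level $k$ is activated (at stage $m=k$) the current reservoir is an infinite MSO-definable set; restricting $c_k$ to it and using additivity, some idempotent $e_k$ is attained on an infinite subset, and the oracle locates such an $e_k$ through finitely many $\exists^{\infty}$-queries over definable families. The compatibility noted in the first paragraph guarantees that activating higher levels never contradicts lower ones, so each $R_m$ remains infinite and $h_m=\min R_m$ is well defined and computable in $MSO(M)$. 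Since the $h_m$ are strictly increasing and each is obtained by a bounded search using only oracle-decidable sentences, $H$ is recursive in $MSO(M)$; and by construction every tail $H_k$ is $k$-homogeneous, so $H$ is uniformly homogeneous for $M$, as required.
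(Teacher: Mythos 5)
Your overall strategy---replacing the non-effective Ramsey extraction by an oracle-guided, stage-by-stage construction in which every query posed to $MSO(M)$ is an MSO sentence with numeral parameters---is in the right spirit, and several ingredients are sound: $c_k(i,j)=\tau$ is expressible by relativizing the Hintikka sentence $\tau$ to $[x,y[$, infinitude is MSO-expressible, and the common color of an infinite monochromatic set is automatically idempotent by additivity. But there is a genuine gap at exactly the point you call the main obstacle, and the resolution you offer does not close it. The cross-level compatibility (projection of $k$-types to $k'$-types) and the $\exists^{\infty}$-queries that locate an idempotent $e_k$ attained on \emph{some} infinite monochromatic subset are both fine, but they are irrelevant to the real problem: the greedy choice $h_m=\min R_m$ need not be \emph{extendable}. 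The set $R_m$ records only the pairwise constraints of $y$ against the already chosen $h_i$; its minimum may be an element from which the target colors are unreachable, after which $R_{m+1}$ is finite or empty and the construction stalls. Concretely, take $M=(\N,<,P)$ with $P=\{5\}$. Any infinite $1$-monochromatic set must have color $e_1$ equal to the $1$-type of nonempty $P$-free intervals (two disjoint intervals cannot both contain $5$), and such sets exist, e.g.\ $\{6,7,8,\dots\}$, so your queries select this $e_1$. But your construction has already fixed $h_0=0$ and $h_1=\min R_1=\min\{1,2,\dots\}=1$, and then $R_2=\{y>1:\ T^1(M_{[1,y[})=e_1\}=\{2,3,4,5\}$ is finite; within a few more stages the reservoir is empty. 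So the asserted invariant ``each $R_m$ remains infinite'' is simply not preserved by taking minima.

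The missing idea is to build extendability into the selection itself: define $h_m$ as the least $y\in R_m$ such that there exists an \emph{infinite} $X\ni y$ with $X\subseteq R_m$ all of whose pairs have color $e_k$ at every active level $k$ (and choose $e_k$, at activation, so that such an $X$ exists). This condition quantifies over a set $X$ and is therefore again MSO-expressible, hence oracle-decidable for each numeral $y$; picking $h_m$ inside a witness $X$ lets $X$ minus its least element witness the invariant at the next stage, so the reservoirs never die. Once this repair is made, you have essentially reconstructed the mechanism behind the paper's proof, which is taken from \cite{Rabinovich07a}: there the homogeneous sets are \emph{canonical MSO-definable} objects---Proposition \ref{prop:unifhomog} produces computable formulas $\varphi_k(X)$ with unique solutions $H_k$ that cohere on their first $k$ elements, and Corollary \ref{cor:compH} computes the $i$-th element of the diagonal set $H$ from $T^{q(i)}(M)$, which the oracle $MSO(M)$ supplies by testing Hintikka sentences. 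In other words, the paper gets recursiveness in $MSO(M)$ because each element of $H$ is \emph{defined} by a formula (with an extendability/uniqueness discipline baked in), not merely chosen greedily subject to constraints; that discipline is precisely what your write-up lacks.
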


We shall prove that any set $H$ which is uniformly homogeneous for $M$ and recursive in $MSO(M)$ can be used to expand $M$ and get non-maximality. 

\begin{lem}\label{lem:undefN}
For every  $M = (\N, <, \overline{P})$, if $H$ is uniformly homogeneous for $M$ then $H$ is not definable in $M$.
\end{lem}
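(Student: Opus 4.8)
The plan is to argue by contradiction, using Proposition~\ref{prop:def-interval} together with the idempotency of the common type of the homogeneous blocks. Suppose $H$ is definable in $M$ by a formula $\varphi(x)$ with $qd(\varphi)=d$. The first step is to extract a \emph{uniform} consequence of Proposition~\ref{prop:def-interval}: inspecting its proof, the formula it produces to define $Q\cap I$ in $M_I$ is $\exists X(\theta(X)\wedge x\in X)$ with $\theta=T^{m}(M^{Q}_I)$ and $m=qd(\psi)=d+1$, hence it has quantifier depth at most $d+2$ \emph{regardless of $I$}. Applying this with $Q=H$, I obtain a constant $c=d+2$ such that for every interval $I$ the set $H\cap I$ is defined in $M_I$ by some formula of quantifier depth $\le c$. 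Since the signature $\{<,\overline{P}\}$ is finite and relational, only finitely many such formulas exist up to logical equivalence; call their number $r$.

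Next I bring in homogeneity. I will choose a level $K$ large enough to dominate the counting sentences below (concretely $K=r+d+4$ works), and exploit that $H_K=\{h_K<h_{K+1}<\dots\}$ is $K$-homogeneous. Writing $t$ for the common $K$-type of the blocks $M_{[h_i,h_{i+1}[}$ with $i\ge K$, Theorem~\ref{thm:simplecomposition}(a) forces $t+t=t$, since $M_{[h_i,h_{i+2}[}$ also has $K$-type $t$. Fixing any $N_0\ge K$ and setting $J_p=[h_{N_0},h_{N_0+p}[$ for $p=1,\dots,r+1$, each $M_{J_p}$ is an ordered sum of $p$ blocks of $K$-type $t$, so by idempotency $T^{K}(M_{J_p})=t$ for \emph{all} $p$; in particular the structures $M_{J_p}$ are pairwise $K$-equivalent. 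On the other hand $H\cap J_p=\{h_{N_0},\dots,h_{N_0+p-1}\}$ has exactly $p$ elements, so the size of the definable set $H\cap J_p$ genuinely grows with $p$.

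Now I apply pigeonhole to the defining formulas. Since there are at most $r$ candidate formulas of depth $\le c$ but $r+1$ values of $p$, there are $p_1<p_2$ with $H\cap J_{p_1}$ and $H\cap J_{p_2}$ defined by the same formula $\gamma$ (of depth $\le c$) in $M_{J_{p_1}}$ and $M_{J_{p_2}}$ respectively. The sentence $\eta$ asserting ``exactly $p_1$ elements satisfy $\gamma$'' can be written with quantifier depth at most $p_1+c+1\le K$. By construction $M_{J_{p_1}}\models\eta$, whereas $M_{J_{p_2}}$ has exactly $p_2\ne p_1$ elements satisfying $\gamma$, so $M_{J_{p_2}}\not\models\eta$. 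This contradicts $M_{J_{p_1}}\equiv^{K}M_{J_{p_2}}$, completing the argument.

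The step I expect to be the main obstacle is precisely the bookkeeping that makes the pigeonhole legitimate. A naive attempt via Corollary~\ref{cor:def-elements}—exhibiting one point of $H$ and one point outside $H$ with identical left and right $K$-types—does not obviously succeed, because the right type at an interior point of a block need not collapse to $t\times\omega$. The counting argument sidesteps this, but it is delicate that (i) the depth $c$ bounding the defining formula of $H\cap I$ is \emph{independent of $I$} (so that finitely many formulas occur), and (ii) the homogeneity level $K$ must be fixed \emph{after} bounding the number $r$ of candidate formulas, so that the counting sentence $\eta$ has depth within the homogeneity level $K$. Getting this order of choices right is the crux; once it is arranged, the idempotency $t+t=t$ does the rest.
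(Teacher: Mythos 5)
Your proof is correct, but it follows a genuinely different route from the paper's. The paper's proof passes from $H$ to $H_{even}=\{h_{2i}:i\in\N\}$ (definable from $H$, so it suffices to refute definability of $H_{even}$) and then, for each $k\geq 2$, compares the two \emph{consecutive} elements $a_k=h_{2k}\in H_{even}$ and $b_k=h_{2k+1}\notin H_{even}$: by homogeneity and Theorem~\ref{thm:simplecomposition}(a) the left types $T^k(M_{[0,a_k[})$ and $T^k(M_{[0,b_k[})$ coincide, and by Theorem~\ref{thm:simplecomposition}(b) both right types equal $\tau_k\times\omega$, so Corollary~\ref{cor:def-elements} makes $a_k$ and $b_k$ indistinguishable at depth $<k$. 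This addresses exactly the obstacle you flagged: the paper never compares a point of $H$ with an interior point of a block (where the right type need not collapse); it compares two block endpoints, both in $H$, and the move to $H_{even}$ is what makes one of them lie outside the candidate definable set. Your counting argument is a different, equally valid fix: you fix the depth $d$ of a putative definition first, extract from the proof of Proposition~\ref{prop:def-interval} the uniform bound $c=d+2$ on the depth of defining formulas of $H\cap I$ in $M_I$ (this inspection is legitimate, since the Hintikka sentence $\theta$ has depth at most $d+1$ independently of $I$), and then use pigeonhole on the intervals $J_1,\dots,J_{r+1}$, which are pairwise $K$-equivalent yet contain $1,\dots,r+1$ elements of $H$ respectively; the counting sentence of depth $p_1+c+1\leq K$ gives the contradiction. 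What each approach buys: the paper's proof is shorter, pointwise, and shows undefinability of $H_{even}$ at every depth, but it needs the $\omega$-power clause of the composition theorem and Corollary~\ref{cor:def-elements}; your proof avoids infinite sums and Corollary~\ref{cor:def-elements} altogether, using only finite intervals, the restriction property, homogeneity at a single level $K$ chosen after $r$, and finiteness of bounded-depth formulas---at the price of the delicate ordering of constants ($c$, then $r$, then $K$) that you correctly identified as the crux.
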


\begin{proof}
Let $H$ be uniformly homogeneous for $M$, and let $h_0<h_1<...$ be the sequence of elements of $H$. Assume for a contradiction that $H$ is definable in $M$. Then it follows that the set $H_{even}=\{h_{2i}: i \in \N \}$ is also definable in $M$. We shall prove that for every $k \geq 2$, the elements $a_k=h_{2k}$ and $b_k=h_{2k+1}$ satisfy in $M$ the same formulas $\varphi(x)$ with quantifier depth $k$. Since $a_k \in H_{even}$ and $b_k \not\in H_{even}$, it follows that $H_{even}$ is not definable in $M$ by any formula $\varphi(x)$ with quantifier depth less than $k$, from which we  get a contradiction.

For every $i\geq 0$ let us denote by $\tau_i$ the $i-$type of $M_{[h_i,h_{i+1}[}$. Let $k \geq 2$. 
 We have $a_{k-1}=h_{2(k-1)}$ and $2(k-1) \geq k$, therefore by definition of $H$ we have
$$ T^k(M_{[a_{k-1},a_{k}[}) =  \tau_k = T^k(M_{[a_{k-1},b_{k}[}) $$
thus
\begin{eqnarray}\label{eq:left}
T^k(M_{[0,b_k[}) & = &T^k(M_{[0,a_{k-1}[})+ T^k(M_{[a_{k-1},b_{k}[}) \nonumber \\
 & = & T^k(M_{[0,a_{k-1}[})+ T^k(M_{[a_{k-1},a_{k}[}) \nonumber \\
 & = & T^k(M_{[0,a_k[}). 
\end{eqnarray}
  On the other hand by definition of $H$ we also have 
\begin{eqnarray}\label{eq:right}
T^k(M_{[a_k,\infty[})=\tau_k \times \omega=T^k(M_{[b_k,\infty[}).
\end{eqnarray}

It follows from  Equations (\ref{eq:left}) 
and (\ref{eq:right}), and Corollary \ref{cor:def-elements}   that $a_k$ and $b_k$ satisfy in $M$ the same formulas $\varphi(x)$ with quantifier depth less than $k$.
\end{proof}

\begin{lem}\label{lem:recHP}
Let  $H $  be a uniformly homogeneous set for $M = (\N, <,
\overline{P})$.  Then
  the  $MSO$ theory of $M' = (\N, <, H, \overline{P})$ is recursive in
  $ (H, \overline{P})$.
\end{lem}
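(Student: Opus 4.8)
The plan is to show that the function $k \mapsto T^k(M')$ is recursive in the oracle $(H,\overline{P})$; by the remark following Lemma~\ref{comp}, relativized to this oracle, that is exactly what ``$MSO(M')$ is recursive in $(H,\overline{P})$'' means. Write $H=\{h_0<h_1<\cdots\}$. The two ingredients are, first, that uniform homogeneity of $H$ for $M$ forces the tail intervals of $M'$ to share a single $k$-type, and second, that the composition theorem then lets us assemble $T^k(M')$ from this common type together with the $k$-type of a computable finite initial segment.

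First I would observe that for every $i$ the only element of $H$ lying in the half-open interval $[h_i,h_{i+1}[$ is its least element $h_i$. Hence the substructure $M'_{[h_i,h_{i+1}[}$ is obtained from $M_{[h_i,h_{i+1}[}$ by marking, with the predicate $H$, the element defined by $\delta(x)\equiv\forall y\,(x\le y)$, a formula of quantifier depth $1$. Here I use the standard fact of the composition method that the $k$-type of an expansion by a predicate defined by a formula of quantifier depth $d$ is determined by the $(k+d)$-type of the underlying structure (if $\psi$ has quantifier depth $k$ over the expanded signature, the substitution $\psi[\delta/X]$ has quantifier depth at most $k+d$ over the base signature). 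Thus the $k$-type of $M'_{[h_i,h_{i+1}[}$ is determined by the $(k+1)$-type of $M_{[h_i,h_{i+1}[}$. Now uniform homogeneity gives that $H_{k+1}=\{h_{k+1}<h_{k+2}<\cdots\}$ is $(k+1)$-homogeneous for $M$, so all intervals $M_{[h_i,h_{i+1}[}$ with $i\ge k+1$ share one $(k+1)$-type; consequently all the $M'_{[h_i,h_{i+1}[}$ with $i\ge k+1$ share a single $k$-type, which I will call $\sigma$.

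Next I would split the domain as
$$M' \;=\; M'_{[0,h_{k+1}[}\;+\;\sum_{i\ge k+1} M'_{[h_i,h_{i+1}[},$$
which is legitimate because $H$ is infinite, so the tail intervals exhaust $[h_{k+1},\infty[$. The tail is an $\omega$-sum of structures all of $k$-type $\sigma$, so by Theorem~\ref{thm:simplecomposition}(b) its $k$-type is $\sigma\times\omega$, computable from $\sigma$; and by Theorem~\ref{thm:simplecomposition}(a),
$$T^k(M') \;=\; T^k\!\bigl(M'_{[0,h_{k+1}[}\bigr)\;+\;\bigl(\sigma\times\omega\bigr).$$
It remains to compute the two finite types on the right using the oracle. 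Querying $H$, I locate $h_0,\dots,h_{k+2}$; reading $\overline{P}$ and $H$ on the finite segment $[0,h_{k+2}[$, I recover the finite structures $M'_{[0,h_{k+1}[}$ and $M'_{[h_{k+1},h_{k+2}[}$ explicitly, and the $k$-type of a given finite structure is computable. This yields both $T^k(M'_{[0,h_{k+1}[})$ and $\sigma$, hence $T^k(M')$, using only the oracle $(H,\overline{P})$.

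The routine parts are the effective computation of types of finite structures and the applications of the composition theorem. The one point needing care --- and the reason the stronger, \emph{uniform} homogeneity of $H$ is exactly what the statement demands --- is the quantifier-depth bookkeeping: passing from $M$ to $M'$ marks the left endpoint of each block, and absorbing the quantifier depth of this marking is what forces us to invoke homogeneity at level $k+1$ rather than merely at level $k$.
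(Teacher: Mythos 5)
Your proof is correct and follows essentially the same route as the paper's: decompose $M'$ into a finite initial segment plus an $\omega$-sum of blocks $[h_i,h_{i+1}[$, use uniform homogeneity together with a quantifier-depth translation (eliminating the predicate $H$, which marks only the least element of each block) to get a single common $k$-type for the tail blocks, read the needed finite data off the oracle $(H,\overline{P})$, and assemble $T^k(M')$ via Theorem~\ref{thm:simplecomposition}. The only differences are cosmetic: your direct substitution gives the slightly sharper depth bound $k+1$ (so you cut at $h_{k+1}$ where the paper cuts at $h_{k+2}$), and you compute the common tail type $\sigma$ directly from the finite expanded block $M'_{[h_{k+1},h_{k+2}[}$ rather than translating from a type of the unexpanded block.
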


\begin{proof}
Let us denote by $h_0<h_1<...$ the sequence of elements of $H$.
We have
$$T^k(M')=T^k(M'_{[0,h_{k+2}[})+\sum_{i\geq k+2} T^k(M'_{[h_{i},h_{i+1}[})$$
For every $i \geq k+2$, the only element of $H$ in the interval $[h_{i},h_{i+1}[$ is $h_{i}$. This implies that for every MSO-sentence $\varphi$ in the signature of $M'$ such that $qd(\varphi)=k$, we have
$$M'_{[h_{i},h_{i+1}[} \models \varphi$$
if and only if 
\begin{eqnarray}\label{formulareduction}
M_{[h_{i},h_{i+1}[} \models \exists x ((\forall y \ x \leq y) \wedge \varphi^*)
\end{eqnarray}
where $x$ is any variable which does not appear in $\varphi$, and $\varphi^*$ is obtained  by replacing in $\varphi$ every atomic formula of the form $H(z)$ (where $z$ denotes any first-order variable) by $z=x$.

The formula in (\ref{formulareduction}) has quantifier depth $\leq k+2$. This implies that the $k-$type of $M'_{[h_{i},h_{i+1}[}$ can be computed from the $(k+2)-$type of $M_{[h_{i},h_{i+1}[}$. Since $H$ is uniformly homogeneous there is a $(k+2)-$type $\tau_{k+2}$  such that   the $(k+2)-$type of $M_{[h_{i},h_{i+1}[}$ equals $\tau_{k+2}$ for every $i \geq k+2$. Moreover, from $H$ and $\overline P$ we can compute $h_{k+2}$ and $h_{k+3}$, and then the $(k+2)-$type of $M_{[h_{k+2},h_{k+3}[}$, which equals $\tau_{k+2}$.  Thus one can compute $\tau'$ such that the $k-$type of $M'_{[h_{i},h_{i+1}[}$ equals $\tau'$ for every $i \geq k+2$. Therefore we have
$$T^k(M')=T^k(M'_{[0,h_{k+2}[})+\sum_{i\geq k+2} \tau'=T^k(M'_{[0,h_{k+2}[})+ \tau' \times \omega$$
Moreover one checks that $T^k(M'_{[0,h_{k+2}[})$ is computable from $H$ and $\overline P$.

 Finally $T^k(M')$ is computable by Theorem \ref{thm:simplecomposition}. 
\end{proof}

The previous results allow us to show that no structure $M=(\N, <, \overline{P})$ is maximal.

\begin{prop}\label{prop:1stConstructionN}
For every structure $M=(\N,<,\overline{P})$ there exists an expansion
$M'$ of $M$ by a predicate $P_{n+1}$ such that $P_{n+1}$ is not definable in $M$ and $MSO(M')$ is recursive in $MSO(M)$. In particular, if $MSO(M)$ is decidable,  then  $MSO(M')$ is decidable.
\end{prop}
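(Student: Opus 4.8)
The plan is to combine the three lemmas that immediately precede this proposition. Given $M = (\N, <, \overline{P})$, I would apply Theorem \ref{th:rechomog-restated} to obtain a set $H$ that is uniformly homogeneous for $M$ and recursive in $MSO(M)$. This $H$ will be the witness for the new predicate, so I set $P_{n+1} = H$ and $M' = (\N, <, H, \overline{P})$. The task then splits into two verifications: that $H$ is not definable in $M$, and that $MSO(M')$ is recursive in $MSO(M)$.

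For non-definability, Lemma \ref{lem:undefN} applies directly: any uniformly homogeneous set (and $H$ is one) is not definable in $M$. So $P_{n+1}$ is a genuine, non-definable expansion, which is exactly what non-maximality requires. For the recursive reduction, I would invoke Lemma \ref{lem:recHP}, which gives that $MSO(M')$ is recursive in $(H, \overline{P})$. The point now is to trace the recursiveness back to $MSO(M)$: since $H$ was chosen recursive in $MSO(M)$ and the predicates $\overline{P}$ are themselves recursive in $MSO(M)$ (each $P_i(a)$ is a quantifier-free, hence low quantifier-depth, query answerable from the theory, noting $a$ is encoded via its initial segment in the spirit of Corollary \ref{cor:def-elements}), the oracle $(H, \overline{P})$ is recursive in $MSO(M)$. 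Composing the two reductions yields that $MSO(M')$ is recursive in $MSO(M)$.

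The final clause is then immediate: if $MSO(M)$ is decidable, then the oracle contributes nothing and $MSO(M')$ is decidable as well. Thus $M$ is not maximal with respect to MSO logic.

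The main obstacle I anticipate is the bookkeeping in the second verification, namely confirming that $(H, \overline{P})$ is recursive in $MSO(M)$ rather than merely that $MSO(M')$ is recursive in the \emph{set} $(H,\overline{P})$ as an oracle. The recursiveness of $H$ in $MSO(M)$ is handed to us by Theorem \ref{th:rechomog-restated}, and the recursiveness of the $\overline{P}$ follows because membership $a \in P_i$ is expressible by a low-complexity formula whose truth is read off from the type function $k \mapsto T^k(M)$; this is the step where one must be a little careful, but it is routine and does not require any new idea beyond what Theorem \ref{th:rechomog} already encapsulates. Everything else is a direct citation of the preceding lemmas.
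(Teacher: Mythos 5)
Your proposal is correct and follows essentially the same route as the paper's own proof: take $H$ from Theorem \ref{th:rechomog-restated}, set $P_{n+1}=H$, cite Lemma \ref{lem:undefN} for non-definability and Lemma \ref{lem:recHP} for the reduction, and observe that both $H$ and $\overline{P}$ are recursive in $MSO(M)$. Your extra justification of why $\overline{P}$ is recursive in $MSO(M)$ (each membership query $a \in P_i$ reduces to deciding a sentence of the theory, since every element of $\N$ is definable) is a detail the paper states without comment, and you handle it correctly.
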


\begin{proof}
By Theorem \ref{th:rechomog-restated}, there exists a set $H$ which is uniformly homogeneous for $M$ and recursive in $MSO(M)$. We set $P_{n+1}=H$. By 
Lemma \ref{lem:undefN}, $H$ is not definable in $M$. By Lemma \ref{lem:recHP}, $MSO(M')$ is recursive in $H$ and $\overline{P}$, which are both recursive in $MSO(M)$. 
\end{proof}

In the proof of the general result (see Sect. \ref{sec:general}), we start from a labelled linear ordering $M=(A,<,\overline{P})$  and try to define an expansion $M'$ such that $MSO(M')$  is recursive in $MSO(M)$. In some case the expansion $M'$ of $M$ will be defined by applying the above construction to infinitely many intervals of $A$ of order type $\omega$. In order to get a reduction from $MSO(M')$ to $MSO(M)$, we need that the reduction algorithm for such intervals is uniform. This leads to the following Proposition, which can be seen as a uniform version of Proposition \ref{prop:1stConstructionN}.

\begin{prop}\label{prop:2ndConstructionN}
There exists a function $E$ and two recursive functions $g_1,g_2$ such that $E$ maps every structure $M=(\N,<,\overline{P})$ to an expansion
$M'$ of $M$ by a predicate $P_{n+1}$ such that
\begin{enumerate}[\em(1)]
\item $P_{n+1}$ is not definable in $M$;
\item $g_1$ computes $T^{k}(M')$ from ${k}$ and $T^{g_2(k)}(M)$.
\end{enumerate}
Hence $MSO(M')$ is recursive in $MSO(M)$. In particular, if $MSO(M)$ is decidable,  then  $MSO(M')$ is decidable.
\end{prop}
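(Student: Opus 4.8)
The plan is to strengthen the proof of Proposition~\ref{prop:1stConstructionN} into a \emph{uniform} statement by making all of its ingredients independent of the particular structure $M$. The key observation is that Proposition~\ref{prop:1stConstructionN} already gives, for each $M$, a uniformly homogeneous set $H$, and Lemmas \ref{lem:undefN} and \ref{lem:recHP} establish non-definability and the reduction; what remains is to exhibit a \emph{single} function $E$ (together with recursive $g_1,g_2$) that produces the expansion from $M$ in a way whose type-computation depends only on finitely much information about $M$. First I would define $E$ by setting $E(M)=M'=(\N,<,H,\overline{P})$ where $H$ is the uniformly homogeneous set recursive in $MSO(M)$ furnished by Theorem~\ref{th:rechomog-restated}; since that theorem is proved uniformly, the map $M \mapsto H$ can be taken to be a fixed construction, so $E$ is a genuine (single) function.

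The heart of the argument is clause~(2), the uniform type reduction. Here I would revisit the proof of Lemma~\ref{lem:recHP} and track precisely how $T^k(M')$ depends on $M$. That proof shows $T^k(M')=T^k(M'_{[0,h_{k+2}[})+\tau'\times\omega$, where $\tau'$ is computed from the common $(k+2)$-type $\tau_{k+2}$ of the tail intervals $M_{[h_i,h_{i+1}[}$, and the initial segment type $T^k(M'_{[0,h_{k+2}[})$ is computed from $H$ and $\overline{P}$ restricted below $h_{k+3}$. The crucial point is that all of this data, namely the values $h_0,\dots,h_{k+3}$ of $H$, the labels $\overline{P}$ on $[0,h_{k+3}[$, and the $(k+2)$-type $\tau_{k+2}$, is determined by $T^{g_2(k)}(M)$ for a suitable recursive $g_2$. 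I would set $g_2(k)=k+2$ augmented by whatever quantifier depth is needed to define the first $k+3$ elements of $H$ inside $M$ (this is where Theorem~\ref{th:rechomog-restated} being given by bounded-depth formulas, or equivalently the enumeration of $H$ being computable from bounded types, is used), and let $g_1$ be the recursive procedure assembling $T^k(M')$ from these pieces via Theorem~\ref{thm:simplecomposition}. Clause~(1), non-definability, follows immediately and uniformly from Lemma~\ref{lem:undefN} applied to the uniformly homogeneous $H$, with no dependence on $M$ beyond $H$ itself.

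The main obstacle I anticipate is establishing that $g_2$ is a \emph{fixed} recursive function, i.e.\ that the quantifier depth needed to recover the first several elements of $H$ and the relevant tail type is bounded by a computable function of $k$ \emph{uniformly in} $M$. In Proposition~\ref{prop:1stConstructionN} the set $H$ is only guaranteed recursive in $MSO(M)$, which a priori allows the query depth to grow in a way that depends on $M$; for the uniform version I must ensure the dependence is captured entirely through $T^{g_2(k)}(M)$. I expect this to hinge on a careful reading of the construction behind Theorem~\ref{th:rechomog-restated}: the enumeration of the uniformly homogeneous set there is obtained by successively searching, for each $k$, for the next point whose prefix-type stabilizes at level $k$, and the position and type of the $(k+3)$rd element is determined once one knows the $k'$-type of $M$ for $k'$ bounded in terms of $k$. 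Pinning down that bound and verifying it does not secretly depend on $M$ is the delicate step; once it is fixed, the remaining computations reduce to finite manipulations of Hintikka sentences and repeated application of Theorem~\ref{thm:simplecomposition}, which are routine.
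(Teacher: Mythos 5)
Your outline coincides with the paper's: define $E$ via the canonical uniformly homogeneous set, get clause (1) from Lemma~\ref{lem:undefN}, and get clause (2) by reworking the proof of Lemma~\ref{lem:recHP} so that the dependence on $M$ is channelled through a single bounded type. But the step you yourself identify as the heart of the argument is formulated in a way that provably cannot work, and you leave it unresolved. You claim that $T^{g_2(k)}(M)$ determines ``the values $h_0,\dots,h_{k+3}$ of $H$, the labels $\overline{P}$ on $[0,h_{k+3}[$, and the $(k+2)$-type $\tau_{k+2}$''. The last item is fine, but no bounded type can determine positions or labels: $T^{g_2(k)}(M)$ ranges over a \emph{finite} set of Hintikka sentences, while this data is unbounded. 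Concretely, consider the structures $M_n=(\N,<,\{n\})$. Every uniformly homogeneous set for $M_n$ must satisfy $h_1>n$ (otherwise the interval of $H_1$ containing $n$ would satisfy $\exists x\, P_1(x)$ while later intervals do not, contradicting $1$-homogeneity), yet by pigeonhole infinitely many $n$ give prefixes of the same $g_2(k)$-type and hence, by Theorem~\ref{thm:simplecomposition}(a), structures $M_n$ with the same $g_2(k)$-type. So no function of the type can output $h_1$, let alone the labels below $h_{k+3}$, which encode the unbounded position $n$. Your fallback formulation (``the position \dots of the $(k+3)$rd element is determined once one knows the $k'$-type of $M$ for $k'$ bounded in terms of $k$'') fails for exactly the same reason.

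What the paper uses instead, and what is missing from your proposal, is a uniform \emph{definability} statement rather than a uniform \emph{computability-of-positions} statement: Proposition~\ref{prop:unifhomog} (from \cite{Rabinovich07a}) provides a recursive map $k\mapsto\varphi_k(X)$ defining $H_k$ uniquely in every $M$, hence a recursive map $i\mapsto\alpha_i(x)$ where $\alpha_i$ defines the element $h_i$ in every $M$. Positions are then never computed; sentences about the relevant intervals are \emph{relativized}: $M_{[h_{k+2},h_{k+3}[}\models\varphi$ iff $M\models\exists x\,\exists y\,(\alpha_{k+2}(x)\wedge\alpha_{k+3}(y)\wedge\varphi')$ with $\varphi'$ relativized to $[x,y[$, and $M'_{[0,h_{k+2}[}\models\varphi$ iff $M\models\exists x\,(\alpha_{k+2}(x)\wedge\varphi'')$ where $\varphi''$ is relativized to $[0,x[$ and replaces each atom $H(z)$ by $\bigvee_{0\leq i\leq k+1}\alpha_i(z)$. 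Because $qd(\alpha_i)$ is computable from $i$, the quantifier depths of these translated sentences are bounded by a recursive function of $k$, and that bound \emph{is} $g_2$; then $g_1$ assembles $T^k(M')$ via Theorem~\ref{thm:simplecomposition}. So the delicate point you flag is not a bound waiting to be ``pinned down'' within your scheme: the position-recovery mechanism must be discarded and replaced by uniform defining formulas plus relativization, which is the one idea your proposal does not contain.
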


The proof of the above proposition relies on the construction of a special uniformly homogeneous set $H$ for $M$.

\begin{prop}(\cite{Rabinovich07a}) \label{prop:unifhomog}
There exists a recursive function which maps every $k \geq 0$ to some formula $\varphi_k(X)$ such that 
\begin{enumerate}[\em(1)]
\item for every structure $M=(\N,<,\overline{P})$ there exists a unique $X \subseteq \N$ such that $\varphi_k(X)$ holds in $M$. This set will be denoted by $H_k$.
\item For every $k\geq 0$, the sets $H_k$ and $H_{k+1}$ have the same $k$ first elements.
\item The set $H=\{h_k: h_k \textrm{ is the $k-$th element of $H_k$} \}$ is uniformly homogeneous.
\end{enumerate}
\end{prop}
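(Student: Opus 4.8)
The plan is to turn the (non-constructive) Ramsey argument that underlies the existence of homogeneous sets into a single MSO formula, uniformly computable from $k$, and then to arrange a nesting $H_{k+1}\subseteq H_k$ that forces the diagonal $H$ to be uniformly homogeneous. First I fix $k$ and work with the coloring $c(x,y)=T^k(M_{[x,y[})$ for $x<y$. By the Hintikka Lemma \ref{comp} there are only finitely many $k$-types, and by Theorem \ref{thm:simplecomposition}(a) they form a finite semigroup under $+$, with $c(x,z)=c(x,y)+c(y,z)$ whenever $x<y<z$. For each fixed type $\tau$ the relation $c(x,y)=\tau$ is MSO-definable of quantifier depth $k+O(1)$, obtained by relativizing the Hintikka sentence $\tau$ to the definable interval $[x,y[$. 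An infinite set is then $k$-homogeneous exactly when it is monochromatic for $c$, and by the infinite Ramsey theorem such a set exists; its common color $e$ is necessarily idempotent, i.e. $e+e=e$.

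Next I would single out a \emph{canonical} $k$-homogeneous set definable in $M$. For an idempotent $e$ let $G_e$ be the set of $x$ from which an infinite $e$-chain starts; this is MSO-definable, since "there is an infinite increasing $X$ with $x=\min X$ and $c(u,v)=e$ for consecutive $u<v$ in $X$" uses a single set quantifier over formulas of depth $k+O(1)$. Ramsey guarantees $G_e\neq\emptyset$ for at least one idempotent, so I let $e^\ast$ be the least idempotent (in a fixed computable enumeration of the semigroup) with $G_{e^\ast}\neq\emptyset$, and select greedily and least-first the chain $z_0<z_1<\dots$ where $z_0$ is the least element of $G_{e^\ast}$ and $z_{i+1}$ is the least $y>z_i$ with $c(z_i,y)=e^\ast$ and $y\in G_{e^\ast}$. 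Each $z_i\in G_{e^\ast}$ admits such a successor, so the chain is infinite; by idempotence of $e^\ast$ every pair has color $e^\ast$, so it is $k$-homogeneous, and it is the unique set satisfying the associated formula (written as a finite disjunction over idempotents $e$ of "$e$ is the chosen idempotent" $\wedge$ "$X$ is the greedy $e$-chain").

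To build the whole family I would proceed recursively, setting $H_0=\N$ (all intervals share the trivial $0$-type) and defining $H_{k+1}$ to consist of the first $k$ elements of $H_k$ together with the canonical $(k+1)$-homogeneous subset, built as above, of the tail of $H_k$ from index $k$. This gives $H_{k+1}\subseteq H_k$ and that $H_k,H_{k+1}$ agree on their first $k$ elements, which is clause (2); substituting the formula for $H_k$ yields a recursive $k\mapsto\varphi_k$ whose unique model is $H_k$, which is clause (1). For clause (3), write $H_k=\{p^k_0<p^k_1<\dots\}$ and $h_k=p^k_k$. For $a\ge m$ the element $h_a=p^a_a$ lies in $H_a\subseteq H_m$ at an index $\ge a\ge m>m-1$, hence inside the genuinely $m$-homogeneous tail (indices $\ge m-1$) of $H_m$; since any subset of an $m$-homogeneous set is itself $m$-homogeneous, the tail $\{h_m,h_{m+1},\dots\}$ is $m$-homogeneous, as required.

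The main obstacle is the \emph{definability} step: converting the purely existential Ramsey argument into an explicit MSO formula that is canonical and computable uniformly in $k$. The two delicate points are the MSO-encoding of "there is an infinite $e$-chain from $x$" together with the greedy least-successor rule, so that the defined set is simultaneously infinite and unique, and the verification that the recursively assembled $\varphi_k$ really has $H_k$ as its sole model while keeping $k\mapsto\varphi_k$ recursive. By contrast, the finiteness and semigroup structure of the $k$-types, the existence of monochromatic chains, and the index bookkeeping for the diagonal are all routine.
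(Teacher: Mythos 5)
The paper itself gives no proof of Proposition \ref{prop:unifhomog}: it is imported from \cite{Rabinovich07a}, so there is no in-paper argument to compare against. Your reconstruction is correct and follows the same route as the cited source: the pair-coloring by $k$-types is additive by Theorem \ref{thm:simplecomposition}(a), the (additive) Ramsey theorem produces an infinite monochromatic set whose color is idempotent, a canonical such set is then pinned down by an MSO formula (least idempotent $e^\ast$ with nonempty starting set $G_{e^\ast}$, followed by the greedy least-successor chain), and uniform homogeneity comes from nesting $H_{k+1}$ inside $H_k$ and diagonalizing. The two points you flag as delicate do go through: the greedy chain is the unique $X$ with $\min X=\min G_{e^\ast}$ such that every element of $X$ has a successor in $X$ equal to its greedy successor and every non-minimal element of $X$ is the greedy successor of its predecessor in $X$ (uniqueness follows because such a chain is cofinal in $\N$), and idempotency of a Hintikka sentence is decidable via the effective sum operation of Theorem \ref{thm:simplecomposition}(a), so $k\mapsto\varphi_k$ is indeed recursive.
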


In the above Proposition, for every $i \in \N$ the $i-$th element $h_i$ of $H$ is the $i-$th element of the unique set $X \subseteq \N$ such that $\varphi_i(X)$ holds in $M$. This implies that for every $i \in \N$, one can compute a formula $\alpha_i(x)$ which defines $h_i$ in every $M$.

\begin{cor}(\cite{Rabinovich07a})\label{cor:compH}
There exist   recursive functions $q$  and $f$ such that for every
structure $M=(\N,<,\overline{P})$, and corresponding $H$ as in
Proposition \ref{prop:unifhomog}, $f$ computes the $i$-th element of
$H$ from $i$ and  $T^{q(i)}(M)$.
\end{cor}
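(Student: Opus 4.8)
The plan is to leverage the remark preceding the statement, which already supplies, for each $i$, an explicitly computable formula $\alpha_i(x)$ that defines the $i$-th element $h_i$ of $H$ uniformly in $M$. Concretely, I would take $\alpha_i(x)$ to be $\exists X(\varphi_i(X) \wedge \lambda_i(x,X))$, where $\varphi_i$ is the formula furnished by Proposition \ref{prop:unifhomog} (so that $X$ is forced to be the set $H_i$) and $\lambda_i(x,X)$ is the first-order formula expressing ``$x$ is the $i$-th element of $X$''; since $i$ is a fixed numeral, $\lambda_i$ is a concrete formula whose quantifier depth is bounded by a recursive function of $i$. As both $\varphi_i$ and $\lambda_i$ are computable from $i$, the quantifier depth $d_i := qd(\alpha_i)$ is a recursive function of $i$, and $h_i$ is the unique element of $M$ satisfying $\alpha_i$.

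The next step is to turn ``$h_i$ is the unique solution of $\alpha_i$'' into something readable off a bounded type. For each candidate position $m$ I would form the sentence $\sigma_{i,m} := \exists x(\alpha_i(x) \wedge \pi_m(x))$, where $\pi_m(x)$ is the first-order formula asserting that exactly $m$ elements precede $x$; then $h_i = m$ holds in $M$ if and only if $M \models \sigma_{i,m}$. I would define $f(i,\tau)$ to enumerate $m = 0,1,2,\dots$ and, using the decision procedure of Lemma \ref{comp} together with the assumption that $\tau = T^{q(i)}(M)$, test whether $\tau$ entails $\sigma_{i,m}$; the first $m$ passing this test is output as $h_i$. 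The function $q$ is to be chosen large enough (in terms of $d_i$) that $T^{q(i)}(M)$ decides the sentences involved, and the correctness of $f$ then reduces to the claim that this type really does pin down $h_i$.

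The hard part is precisely this last claim, that a type of fixed depth $q(i)$ suffices to locate $h_i$. The difficulty is that the quantifier depth of $\sigma_{i,m}$ grows with $m$ (roughly like $\max(d_i,\log m)$), so no depth depending on $i$ alone can, on the face of it, decide $\sigma_{i,m}$ for every $m$; equivalently, one needs that the position $h_i$ is controlled by bounded-depth type information rather than being free to drift arbitrarily far out as $\overline{P}$ varies. This is exactly where the special recursive construction of $H$ from \cite{Rabinovich07a} must be used: uniform homogeneity guarantees that the $k$-types of the successive blocks delimited by the $h_i$ stabilise, so the relevant segment type becomes eventually constant and the search terminates at a depth that the composition theorem (Theorem \ref{thm:simplecomposition}) lets us bound recursively in $i$. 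I would therefore spend the bulk of the argument verifying, from the properties collected in Proposition \ref{prop:unifhomog}, that $T^{q(i)}(M)$ determines the $i$-th element of $H$, and only then assemble $q$ and $f$ as above.
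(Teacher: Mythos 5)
Your first two steps are sound and in fact coincide with the only justification the paper itself offers: the remark preceding the statement, which supplies computable formulas $\alpha_i(x)$ of recursively bounded depth defining $h_i$ in every $M$ (the paper then simply cites \cite{Rabinovich07a} and gives no proof of the corollary). You have also correctly isolated the crux: by Lemma \ref{comp}, a Hintikka sentence of depth $q(i)$ decides only sentences of depth at most $q(i)$, whereas your sentences $\sigma_{i,m}$ expressing ``$h_i=m$'' have depth growing with $m$. The gap is that your plan for crossing this obstacle --- that uniform homogeneity makes the block types stabilise and hence that $T^{q(i)}(M)$ determines $h_i$ --- is only asserted, and the assertion is false. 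Stabilisation of the types of the blocks $[h_j,h_{j+1}[$ says nothing about \emph{where} those blocks sit in $\N$. Concretely, take $n=1$ and $M_N=(\N,<,\{N\})$. For any uniformly homogeneous set $H=\{h_0<h_1<\cdots\}$ for $M_N$, the tail $\{h_1<h_2<\cdots\}$ is $1$-homogeneous; the consecutive blocks $[h_j,h_{j+1}[$ with $j\geq 1$ cover $[h_1,\infty[$, at most one of them contains the unique element of $P_1$, and they all satisfy the same depth-$1$ sentences (in particular $\exists x\, P_1(x)$), so none of them contains it; hence $h_1>N$. On the other hand, for fixed depth $q(1)$ there are only finitely many $q(1)$-types, so infinitely many of the $M_N$ share one; a function $f(1,\cdot)$ of that type has finite range and so cannot equal $h_1(M_N)>N$ for all such $N$. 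Thus no recursive $q$ and $f$ can make your scheme --- or any scheme that reads the position $h_i$ off a bounded-depth type --- succeed; in your algorithm this surfaces as the type $\tau$ entailing neither $\sigma_{i,m}$ nor $\neg\sigma_{i,m}$ once $qd(\sigma_{i,m})$ exceeds $q(i)$.

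What is true, and what the paper actually relies on, is weaker in exactly the direction your third paragraph worried about. First, $h_i$ is computable from $i$ relative to the \emph{full} oracle $MSO(M)$ (Theorem \ref{th:rechomog-restated}): one queries the sentences $\sigma_{i,m}$ for $m=0,1,2,\dots$, the depth of the queries growing with $m$ rather than being bounded in terms of $i$; termination needs only the existence of $h_i$, not bounded-depth determinacy. Second, in the proof of Proposition \ref{prop:2ndConstructionN} the objects computed from the bounded types $T^{q_1(k)}(M)$ and $T^{q_2(k)}(M)$ are the \emph{segment types} $T^{k+2}(M_{[h_{k+2},h_{k+3}[})$ and $T^k(M'_{[0,h_{k+2}[})$, obtained by relativising through the formulas $\alpha_i$ --- never the numerical values $h_i$ themselves; and indeed the corollary is never invoked in the rest of the paper. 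So your instinct that the literal statement resists this kind of proof was correct; the resolution is to interpret and use the statement in one of these weaker senses (oracle computability in $MSO(M)$, or bounded-type computability of segment types), not to attempt to show that $T^{q(i)}(M)$ pins down the position $h_i$, which it provably does not.
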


\noindent{\bf Proof of Proposition \ref{prop:2ndConstructionN}}.
Let $M=(\N,<,\overline{P})$. Let $P_{n+1}=H$ where $H$ is the set associated to $M$ as stated in Proposition \ref{prop:unifhomog}, and let 
 $M'$ be the expansion of $M$ by $P_{n+1}$. Let $E$ be the function which maps $M$ to $M'$. By Lemma \ref{lem:undefN}, $P_{n+1}$ is not MSO definable in $M$. 
 
In order to show that there exist recursive functions $g_1,g_2$ such that $g_1$ computes $T^{k}(M')$ from ${k}$ and $T^{g_2(k)}(M)$, we have to revise the proof of Lemma \ref{lem:recHP}. In this proof the reduction from $T^k(M')$ to $MSO(M)$ comes from the fact that the types $\tau_{k+2}=T^{k+2}(M_{[h_{k+2},h_{k+3}[})$, and $T^k(M'_{[0,h_{k+2}[})$, are computable from $H$ and $\overline P$. We shall prove that for our specific choice of $H$, these  types are computable from $T^{g_2(k)}(M)$ for some recursive function $g_2$. 

First, let us consider the type $T^{k+2}(M_{[h_{k+2},h_{k+3}[})$. For every sentence $\varphi$ in the signature of $M$ such that $qd(\varphi)=k+2$, we have
\begin{equation}
M_{[h_{k+2},h_{k+3}[} \models \varphi \label{eqtransl0}
\end{equation}
if and only if
\begin{equation}
M \models \exists x \exists y (\alpha_{k+2}(x) \wedge \alpha_{k+3}(y) \wedge \varphi')\label{eqtransl1}
\end{equation}
where $x$ and $y$ are variables which do not appear in $\varphi$, and $\varphi'$ is obtained from $\varphi$ by relativizing all quantifiers to the interval $[x,y[$. The formula in (\ref{eqtransl1}) has quantifier depth $$q_1(k)=2+\max (qd(\alpha_{k+2}),  qd(\alpha_{k+3}), k+2)$$ since $qd(\varphi')=qd(\varphi)$. The equivalence between (\ref{eqtransl0}) and (\ref{eqtransl1}) implies that $\tau_{k+2}$ is computable from $T^{q_1(k)}(M)$.

Consider now the type $T^k(M'_{[0,h_{k+2}[})$. 
For every sentence $\varphi$ in the signature of $M'$ such that $qd(\varphi)=k$, we have 
\begin{equation}
M'_{[0,h_{k+2}[} \models \varphi \label{eqtransl2}
\end{equation}
iff
\begin{equation}
M \models \exists x  (\alpha_{k+2}(x) \wedge  \varphi'')\label{eqtransl3}
\end{equation}
where $x$ is a variable which does not appear in $\varphi$, and $\varphi''$ is obtained from $\varphi$ by relativizing all quantifiers to the interval $[0,x[$, and replacing (in the resulting formula) every atomic formula of the form $H(z)$ by the formula $\bigvee_{0 \leq i \leq k+1} \alpha_i(z)$. The formula in (\ref{eqtransl3}) has quantifier depth 
$$q_2(k)=1+\max (qd(\alpha_{k+2}), k+\max(qd(\alpha_i): 0 \leq i \leq k+1) ).$$ The equivalence between (\ref{eqtransl2}) and (\ref{eqtransl3}) implies that $T^k(M'_{[0,h_{k+2}[})$ is computable from $T^{q_2(k)}(M)$.

Finally we obtain that $\tau_{k+2}$ and $T^k(M'_{[0,h_{k+2}[})$, and thus $T^k(M')$, are computable from   $T^{g_2(k)}(M)$ where $g_2=\max(q_1,q_2)$ is recursive. Moreover the existence of $g_1$ clearly follows from the construction. \qed

\section{The Case of $\Z$}\label{sec:Z}

Decidability of the MSO theory of structures $M=(\Z,<,\overline{P})$ was studied in particular by  Compton \cite{Compton83},
 Sem\"enov \cite{Semenov83,Semenov84}, and Perrin and Schupp \cite{PerrinS86} (see also \cite[chapter 9]{PerrinPin04}). These
  works put in evidence the link between decidability of $MSO(M)$ and computability
  of occurrences and repetitions of finite factors in the word $w(M)$.
  Let us state some notations and definitions (see e.g. \cite[chapter 9]{PerrinPin04}) . A set $X$ of finite words over a finite alphabet $\Sigma$ is
   said to be regular if it is recognizable by some finite automaton. The length of a finite word $u$ is denoted by $|u|$.
   
    Given a $\Z-$word $w$ and a finite word $u$, both over the alphabet $\Sigma$,
     we say that $u$ occurs in $w$ if $w=w_1uw_2$ for some words $w_1$ and $w_2$. We say
     that $w$ is {\em recurrent} if for every regular language $X$ of finite words over $\Sigma$, either no element of
      $X$ occurs in $w$, or in every prefix and every suffix of $w$ there is an occurrence of some element of $X$.

      In particular in a recurrent word $w$,
every finite word $u$ either has no occurrence in $w$, or occurs in every prefix and every suffix of $w$. We say that $w$ is {\em rich} if every
finite word occurs in every prefix and every suffix of  $w$. Given $M=(\Z,<,\overline{P})$, we say that $M$ is \emph{recurrent} if $w(M)$ is.

We have the following result.

\begin{thm}(\cite{Semenov83,PerrinS86}) \label{thm:semZ}
Given $M=(\Z,<,P_1,\dots, P_n)$,
\begin{enumerate}[\em(1)]
\item If $M$ is not recurrent, then every  $c \in \Z$ is definable in $M$.
\item If $M$ is recurrent, then no element is definable in $M$,  and $MSO(M)$ is computable relative to an oracle which,  given any regular language $X$ of finite words over $\Sigma_n=\{0,1\}^n$, tells whether some element of $X$ occurs in $w(M)$.
\end{enumerate}
\end{thm}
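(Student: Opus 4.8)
For each $k$ the $k$-types of finite labelled orderings form, under the sum $+$ of Theorem \ref{thm:simplecomposition}(a), a finite monoid $S_k$, and $u\mapsto T^k(u)$ is a morphism $h_k:\Sigma_n^*\to S_k$; in particular each $h_k^{-1}(m)$ is a regular language. I also use the classical equivalence between regular languages and MSO-definable sets of finite words, so that for a regular $X$ the relation ``the factor of $w(M)$ indexed by $[x,y)$ lies in $X$'' is MSO-definable in $M$, and hence so is ``$x$ is the left endpoint of an occurrence of some word of $X$''.

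\noindent\textbf{Part (1).}
Non-recurrence supplies a regular $X$ having an occurrence in $w(M)$ while, say, some suffix $M_{I_{\geq c_0}}$ contains no occurrence (the prefix case is symmetric). Then every occurrence of $X$ has left endpoint $<c_0$, so the MSO-definable set $E$ of such left endpoints is nonempty and bounded above in $\Z$, hence has a greatest element $m$. As ``greatest element of $E$'' is MSO-expressible, $\{m\}$ is definable. Since successor is FO-definable from $<$ on a discrete order, for each fixed $j\in\Z$ the $j$-th successor/predecessor of $m$ is definable, and as every $c\in\Z$ equals $m+j$ for some $j$, every $c$ is definable.

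\noindent\textbf{Part (2), non-definability.}
By Corollary \ref{cor:def-elements} it suffices to prove that for recurrent $M$, every $k$, and every $c$ there is $b\neq c$ with $T^k(M_{I_{<b}})=T^k(M_{I_{<c}})$ and $T^k(M_{I_{\geq b}})=T^k(M_{I_{\geq c}})$: a hypothetical $\varphi$ defining $\{c\}$, taken with $k=qd(\varphi)+1$, would then hold at $b$ as well. To build $b$ I would represent the right type through a linked pair: Ramsey applied to the coloring $(i,j)\mapsto h_k(w[i,j))$ on an increasing sequence gives $T^k(M_{I_{\geq c}})=r+(e\times\omega)$ with $e$ idempotent and $r+e=r$, and symmetrically for the left type with an idempotent $f$ on a decreasing sequence. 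Recurrence is then used on the regular languages $h_k^{-1}(m)$: every $k$-type realized by a factor of $w(M)$ is realized cofinally to the left and to the right, which lets me relocate the bounded connector blocks and the idempotent blocks around a new position $b$ and so reproduce the whole profile $(s_c,t_c)$.

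\noindent\textbf{Main obstacle, and Part (2) relative decidability.}
The hard point is exactly this relocation: factor-recurrence reproduces the bounded connectors $r,f$ and single idempotent blocks at far-away positions, but matching $T^k(M_{I_{\geq c}})$ requires the relocated $r$-block to be followed by an \emph{entire} $e\times\omega$ tail, and symmetrically on the left. I would resolve it by selecting, through a two-sided Ramsey argument, one idempotent $e$ recurring cofinally in both directions and exploiting $r+e=r$ and $e+e=e$ so that matching a bounded context around $b$ already forces the whole profile; equivalently, I would argue that the MSO-definable set $\{b:(s_b,t_b)=(s_c,t_c)\}$ cannot be finite and nonempty, since a finite nonempty definable set yields a definable element and, arguing contrapositively as in Part (1), a regular witness to non-recurrence. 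The same mechanism drives relative decidability: by the remark after the Hintikka Lemma it suffices to compute $T^k(M)$, and I would compute $S_k$ and $h_k$ (effective by Theorem \ref{thm:simplecomposition}(a)), query the oracle on each $h_k^{-1}(m)$ to get the set $F_k$ of $k$-types realized by factors of $w(M)$, and then show that recurrence forces $T^k(M)=M_{I_{<0}}+M_{I_{\geq 0}}$ to reduce to a canonical combination (a backward idempotent power, an absorbable connector, and an $e\times\omega$) determined by $F_k$ alone, computable by finite search in $S_k$ using $+$ and $\times\omega$. That $T^k(M)$ depends only on $F_k$ — false for non-recurrent words, where the location of rare factors matters — is precisely where the ``every prefix and every suffix'' clause of recurrence is indispensable, and is the crux I expect to cost the most work.
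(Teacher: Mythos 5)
First, a point of reference: the paper does not prove Theorem \ref{thm:semZ} at all; it imports it from Sem\"enov and Perrin--Schupp \cite{Semenov83,PerrinS86} and uses it (and its refinement, Proposition \ref{prop:reductionZ}) as a black box. So your attempt must stand on its own, and judged that way it splits cleanly: Part (1) is correct and complete (the set of left endpoints of occurrences of $X$ is MSO-definable, nonempty, and bounded above when some suffix avoids $X$, so its maximum is definable, and every integer follows via definable successors/predecessors), and your reduction of non-definability in Part (2) to ``every profile $(T^k(M_{I_{<c}}),T^k(M_{I_{\geq c}}))$ recurs at some $b\neq c$'' via Corollary \ref{cor:def-elements} is the right frame.

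The gap is exactly the step you flag as the ``main obstacle,'' and neither of your two proposed repairs closes it. Concretely, write $T^k(M_{I_{\geq c}})=r+e\times\omega$ and $T^k(M_{I_{<c}})=f\times(-\omega)+s$, and let $V=w([a,d))$, $a<c<d$, be a finite context word with $T^k(w([a,c)))=f+s$ and $T^k(w([c,d)))=r+e$. Recurrence applied to the regular language $\{V\}$ gives a far-away occurrence $[x,x+|V|)$ and a candidate $b=x+(c-a)$, but the right type at $b$ is $(r+e)+T^k(M_{I_{\geq x+|V|}})$, and nothing in the identities $e+e=e$, $r+e=r$ forces $T^k(M_{I_{\geq x+|V|}})=e\times\omega$. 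If instead you apply Ramsey to a sequence of such occurrences, the idempotent you obtain has the form $g=(r+e)+z+(f+s)$ with $z$ the uncontrolled type of the gaps between occurrences, and $g\times\omega=e\times\omega$ is precisely what must be proved; it is not a consequence of those identities. This missing lemma --- that for recurrent $w$ the tail types, hence $T^k(M)$, depend only on the set $F_k$ of realized factor types --- is the actual mathematical content of the Sem\"enov/Perrin--Schupp theorem, and it is also what your oracle computation in the decidability half silently relies on (computing $F_k$ from the oracle is fine; the reduction of $T^k(M)$ to $F_k$ is not established). Worse, your ``equivalent'' fallback is circular: ruling out a finite nonempty profile class because it ``yields a definable element and \dots\ a regular witness to non-recurrence'' invokes the implication (definable element $\Rightarrow$ not recurrent), which is the contrapositive of the very claim of Part (2) that you are proving; Part (1) gives only the converse implication, so it cannot be used there. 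To complete the proof you would need a genuine argument for the key lemma, e.g.\ Perrin--Schupp's analysis of runs of automata over bi-infinite words or Sem\"enov's canonical-form computation, neither of which is sketched here.
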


In this section we prove the following result.

\begin{prop}\label{prop:Z}
Let $M=(\Z,<,P_1,\dots,P_n)$. There exists an expansion $M'$ of $M$ by some unary predicate $P_{n+1}$
such that $P_{n+1}$ is not definable in $M$, and $MSO(M')$ is recursive in $MSO(M)$. In particular, if $MSO(M)$ is decidable, then $MSO(M')$ is decidable.
\end{prop}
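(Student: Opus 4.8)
The plan is to mirror the structure of the $\N$-case (Proposition~\ref{prop:1stConstructionN}) but adapt it to the two-sided situation of $\Z$, using the dichotomy of Theorem~\ref{thm:semZ} as the organizing principle. First I would split into two cases according to whether $M$ is recurrent.

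\medskip

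\noindent\textbf{The non-recurrent case.} If $M$ is not recurrent, then by Theorem~\ref{thm:semZ}(1) every element $c \in \Z$ is definable in $M$. In particular some fixed point, say $0$, is definable. The idea is to cut $\Z$ at $0$ into the two intervals $I_{<0}$ (of order type $-\omega$) and $I_{\geq 0}$ (of order type $\omega$), and to apply the $\N$-construction to the $\omega$-part. Concretely, I would take $H \subseteq I_{\geq 0}$ to be a uniformly homogeneous set for the structure $M_{I_{\geq 0}}$ (which is isomorphic to some $(\N,<,\overline{P})$), recursive in $MSO(M_{I_{\geq 0}})$ via Theorem~\ref{th:rechomog-restated}, and set $P_{n+1}=H$. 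Since $0$ is definable, $I_{\geq 0}$ is a definable interval, so by Proposition~\ref{prop:def-interval} and Lemma~\ref{lem:undefN} one checks that $P_{n+1}$ is not definable in $M$: were it definable in $M$, its restriction to $I_{\geq 0}$ would be definable in $M_{I_{\geq 0}}$, contradicting Lemma~\ref{lem:undefN}. For decidability, I would write $M'=M'_{I_{<0}}+M'_{I_{\geq 0}}$ and use Theorem~\ref{thm:simplecomposition}(a) to reduce $T^k(M')$ to the types of the two summands; the left summand contributes nothing new (it carries no points of $P_{n+1}$), and the right summand is handled exactly as in Lemma~\ref{lem:recHP}, giving a reduction of $MSO(M')$ to $MSO(M_{I_{\geq 0}})$, hence to $MSO(M)$.

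\medskip

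\noindent\textbf{The recurrent case.} This is where I expect the main obstacle. If $M$ is recurrent then by Theorem~\ref{thm:semZ}(2) \emph{no} element is definable, so there is no definable cut point to exploit, and the naive ``pick a point and run the $\N$-construction'' strategy fails to produce a definable interval. Instead I would exploit recurrence itself: the predicate $P_{n+1}$ should be defined in a translation-invariant, combinatorial way so that it is guaranteed non-definable yet computable relative to the occurrence oracle of Theorem~\ref{thm:semZ}(2). The natural candidate is to mark, in each sufficiently long block, the positions of first occurrences of certain finite factors, using the recurrence property to ensure such occurrences exist in every suffix and prefix. To show non-definability I would argue that any candidate defining formula has bounded quantifier depth $k$, and use recurrence together with a composition/back-and-forth argument (in the spirit of Corollary~\ref{cor:def-elements}) to produce two points that are $(k-1)$-equivalent in $M'$ but distinguished by $P_{n+1}$, contradicting definability. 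To show decidability I would express $T^k(M')$ as a bi-infinite sum $\sum_{i \in \Z}\tau_i$ of block-types, show the block-types are eventually periodic (constant) in both directions by recurrence, and reduce their computation to finitely many queries to the occurrence oracle, which by Theorem~\ref{thm:semZ}(2) is itself recursive in $MSO(M)$.

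\medskip

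\noindent The hardest point will be making the recurrent case fully rigorous: one must choose the marking of $P_{n+1}$ so that it is simultaneously (i) provably non-definable despite the rich symmetry of a recurrent word, and (ii) computable from the occurrence oracle uniformly in $k$. I expect the cleanest route is to reduce $\Z$ to $\N$ as well here — for instance by designating a single occurrence position as a ``pseudo-origin'' and arguing that the resulting asymmetric marker cannot be definable (since definability would, via the oracle, contradict the uniform non-definability of elements in the recurrent case), while the composition theorem still yields the bi-infinite type as a sum of an $(-\omega)$-tail, a finite middle, and an $\omega$-tail, each reducible to $MSO(M)$.
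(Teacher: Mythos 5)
Your non-recurrent case is essentially the paper's own argument (Proposition~\ref{prop:non-recurrent1}): cut $\Z$ at a definable point, put the empty predicate on one side, run the $\N$-construction on the other; non-definability follows from Proposition~\ref{prop:def-interval} and Lemma~\ref{lem:undefN}, and decidability from interpretability of the two halves (the cut point being definable in $M$) together with Theorem~\ref{thm:simplecomposition}(a). That half is sound.

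The recurrent case, however, contains a genuine gap, and your proposed ``cleanest route'' is exactly the approach the paper shows must fail. Using ideas from \cite{BC08}, the paper exhibits a recurrent $M$ with $MSO(M)$ decidable but $MSO(M_{[c',\infty[})$ undecidable for \emph{every} $c' \in \Z$. If $P_{n+1}$ singles out a ``pseudo-origin'' --- e.g.\ has a least element, or any point definable in $M'$ --- then $M_{[c',\infty[}$ becomes interpretable in $M'$, so $MSO(M')$ is undecidable for that $M$; in particular your claim that the $-\omega$-tail, finite middle, and $\omega$-tail are ``each reducible to $MSO(M)$'' is false when the cut point is not definable in $M$, which recurrence guarantees. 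Your first idea (marking occurrences of factors) is closer to the paper's actual construction, but misses its key specification, property~(*) of Lemma~\ref{lem:recur}: $P_{n+1}$ is built so that \emph{every} word $u'$ over $\Sigma_{n+1}$ projecting onto a factor $u$ recurrent in $w(M)$ is itself recurrent in $w(M')$. This does three jobs at once. First, it keeps $w(M')$ recurrent, which is forced: by the counterexample above, any expansion whose word is non-recurrent has a definable point and hence, for the pathological $M$, an undecidable theory. Second, it gives non-definability by pigeonhole: pick a recurrent factor $u$ with $2^{|u|}$ larger than the square of the number of $k$-Hintikka sentences; all $2^{|u|}$ extensions of $u$ occur in $w(M')$, so two occurrences of $u$ have equal left and right $k$-types in $M$ yet carry different $P_{n+1}$-labels, contradicting definability at depth $<k$ via Corollary~\ref{cor:def-elements}. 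Third, it transfers the oracle: some element of a regular $X' \subseteq \Sigma_{n+1}^*$ occurs in $w(M')$ iff some element of $\pi(X')$ occurs in $w(M)$, and the latter is decided by a single MSO sentence about $M$. Finally, your decidability sketch via ``block-types eventually constant in both directions'' is unjustified --- a recurrent $\Z$-word has no origin and no canonical block decomposition to anchor such a sum (likewise ``first occurrence'' of a factor is not well defined without an origin); the paper instead invokes the Sem\"enov/Perrin--Schupp refinement (Proposition~\ref{prop:reductionZ}), by which $T^k$ of a recurrent structure is computable from finitely many regular-language occurrence queries, precisely the queries that property~(*) reduces from $w(M')$ to $w(M)$.
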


Our proof of Proposition \ref{prop:Z} is based on a definition of $M'$ which depends on whether $M$ is recurrent or not. Proposition \ref{prop:non-recurrent1} deals with  the non-recurrent case, and Proposition \ref{prop:recurrent1} with the recurrent case. These two propositions yield immediately Proposition \ref{prop:Z}. 

\begin{rem}\label{rem:unifnonrec}
Let us discuss uniformity issues related to  Proposition \ref{prop:2ndConstructionN} and Proposition \ref{prop:Z}.
 Proposition \ref{prop:2ndConstructionN} implies that there is an  algorithm which reduces  $MSO(M')$ to $MSO(M)$. This reduction algorithm
   is independent of $M$; it only uses an  oracle for  $MSO(M)$. 
Proposition \ref{prop:Z} implies  a  weaker property. It can be shown indeed that the property for $M$ to be recurrent is not expressible with a MSO sentence in $M$. As a consequence, the algorithm which reduces $MSO(M')$ to $MSO(M)$ depends on $M$. 
 
\end{rem}

\subsection{Non-recurrent case}

We first deal the case when $M=(\Z,<,P_1,\dots, P_n)$ is not recurrent.

\begin{prop}[Expansion of non-recurrent structures]\label{prop:non-recurrent1}
There are  two recursive functions $g_1,g_2$ such that if
$M=(\Z,<,P_1,\dots, P_n)$  is not recurrent, and $c\in \Z$ is definable in $M$
by a formula of quantifier depth $m$, then there exists a function $E_c$ which 
 maps
$M$  to an expansion $M'$   by a predicate $P_{n+1}$ such that
\begin{enumerate}[\em(1)]
\item $P_{n+1}$ is not definable in $M$;
\item $g_1$ computes $T^{k}(M')$ from ${k}$ and $T^{g_2(k+m)}(M)$.
\end{enumerate}
Hence $MSO(M')$ is recursive in $MSO(M)$. In particular, if $MSO(M)$
is decidable, then   $MSO(M')$ is decidable.
\end{prop}

\begin{proof}
Let $c \in \Z$, and let $M_1$ be defined as  $M_1= M_{]-\infty,c[}$ and $M_2$ be defined as $M_{[c,\infty[}$. Then $M=M_1+M_2$.
 Let $M'_1 $ be the expansion of
$M_1$ by the empty predicate $P_{n+1}$ and let $M'_2$ be obtained by
applying the construction of Proposition \ref{prop:2ndConstructionN} to
 $M_2$. Let $M'=M'_1+M'_2$.

Note that the above construction of $M'$ from $M$ depends on $c$. We
denote by $E_c$ the function described above that maps every
 $M=(\Z,<,P_1,\dots, P_n)$ to its  expansion $M'$ by $P_{n+1}$.

By definition, $P_{n+1} \cap [c,\infty[$ is not definable in $M_2$, thus  $P_{n+1}$ is not definable in $M$ by Proposition \ref{prop:def-interval}. Hence
 $M'$ is a non-trivial expansion of $M$.

By Theorem \ref{thm:semZ},  $c$ is definable
  in $M$. Hence, $M_1$ and $M_2$ can be interpreted in $M$, which yields that
  MSO($M_1)$ and  MSO($M_2)$ are recursive in  MSO($M)$. Therefore,
  
    MSO($M'_1)$ and  MSO($M'_2)$ are recursive in  MSO($M)$.
    Finally, applying Theorem \ref{thm:simplecomposition}(a) we
    obtain that MSO($M')$ is  recursive in  MSO($M)$.
\end{proof}

\subsection{Recurrent case}

Now we consider the case when $M=(\Z,<,\overline{P})$ is recurrent. Let us explain why the construction of the previous sub-section cannot be used in this case. Consider a recurrent structure $M$ and let $M'=E_c(M)$ for some $c\in \Z$. We claim that it is possible that  $MSO(M')$ is not recursive in $MSO(M)$.
Indeed, using ideas from \cite{BC08} we can prove
that there exists a recurrent structure $M$ over $\Z$ such that $MSO(M)$ is decidable,
and  $MSO(M_{[c',\infty[})$ is undecidable  for every $c'\in \Z$. Now let $c'$ be the minimal element of $P_{n+1}$.
Observe that $c'$ is definable in $M'$ and therefore, $ M_{[c',\infty[}$
can be interpreted in $M'$. Since, $MSO(M_{[c',\infty[})$ is undecidable, we derive that
    $MSO(M')$
is undecidable.
Hence, $E_c$ does not preserve decidability of recurrent structures. Thus we need a different construction for the recurrent case.

To describe our  construction we introduce first some notations. For every word $w$ over the alphabet $\Sigma_{n+1}=\{0,1\}^{n+1}$ which is indexed by some linear ordering $(A,<)$ we denote by $\pi(w)$ the word $w'$ indexed by $A$ and over the alphabet $\Sigma_{n}=\{0,1\}^{n}$, which is obtained from $w$ by projection over the $n$ first components of each symbol in $w$. The definition and notation extend to $\pi(X)$ where $X$ is any set of words over the alphabet $\Sigma_{n+1}$. Given $M=(\Z,<,\overline{P})$ where $\overline P$ is an $n-$tuple of sets, and any expansion $M'$ of $M$ by a predicate $P_{n+1}$, by definition $w(M)$ and $w(M')$ are words over $\Sigma_n$ and $\Sigma_{n+1}$, respectively, and we have $\pi(w(M'))=w(M)$.

\begin{lem} \label{lem:recur}
 If $M=(\Z,<,P_1,\dots, P_n)$ is recurrent,
 then there is an expansion $M'$ of $M$  by a predicate $P_{n+1} $
 which has the following property:
 \begin{description}
 \item[(*)]
for every $u \in \Sigma_n^*$, if $u $
 occurs in every prefix and every suffix of  $w(M)$, then the same holds in $w(M')$ for every word $u'  \in \Sigma_{n+1}^*$ such that $\pi(u')=u$.
\end{description}
\end{lem}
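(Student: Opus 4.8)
The plan is to reduce property \textbf{(*)} to a statement about arbitrarily distant occurrences, and then to place $P_{n+1}$ by a dovetailing back-and-forth construction. First I would exploit recurrence of $M$: by the remark following the definition of recurrence, a finite word $u \in \Sigma_n^*$ either has no occurrence in $w(M)$ or occurs in every prefix and every suffix of $w(M)$, and the latter is equivalent to $u$ occurring arbitrarily far to the right and arbitrarily far to the left in $w(M)$. Hence the hypothesis of \textbf{(*)} is simply that $u$ occurs in $w(M)$ at all, and the conclusion I must secure is: every word $u' \in \Sigma_{n+1}^*$ whose projection $\pi(u')$ occurs in $w(M)$ must itself occur arbitrarily far to the right and to the left in $w(M')$.

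To achieve this I would construct $P_{n+1} \subseteq \Z$ in stages, fixing its value at more and more integers while maintaining a finite \emph{decided interval} $[l_s, r_s]$. Fix an enumeration $u'_0, u'_1, u'_2, \dots$ of all of $\Sigma_{n+1}^*$. At stage $s$, if $\pi(u'_s)$ occurs in $w(M)$ then, using recurrence, I choose an occurrence of $\pi(u'_s)$ in $w(M)$ situated entirely in positions strictly greater than $r_s$; on that window the first $n$ coordinates already match, so I set the values of $P_{n+1}$ on the window so that the resulting $(n+1)$-labelled factor is exactly $u'_s$, and I advance $r_{s+1}$ past the window. Symmetrically, I use recurrence on prefixes to find an occurrence of $\pi(u'_s)$ entirely below $l_s$, realize $u'_s$ there, and retract $l_{s+1}$ below it. Intermediate positions that remain undecided are filled with $0$. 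If $\pi(u'_s)$ does not occur in $w(M)$ I simply do nothing at that stage. Since the decided interval grows without bound in both directions, this defines $P_{n+1}$ on all of $\Z$, and I let $M'$ be the resulting expansion.

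For correctness I would note that each planted window is a genuine occurrence of $u'_s$ in $w(M')$: the projection onto the first $n$ components is inherited from $M$ and the last component was set by hand. Consequently, for every $u'$ with $\pi(u')$ occurring in $w(M)$, the word $u'$ equals some $u'_s$ and has been planted beyond every right boundary and below every left boundary, so it occurs in every suffix and every prefix of $w(M')$; this is exactly \textbf{(*)}. The main obstacle is the bookkeeping of the construction rather than any deep idea: I must guarantee that the planted windows are pairwise disjoint and never revise an already-decided position, which is handled by always planting strictly beyond the current decided interval, and I must guarantee that such distant occurrences of $\pi(u'_s)$ always exist, which is precisely what recurrence of $M$ supplies. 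Note that no effectiveness is claimed in the statement, so the enumeration and the choices of occurrences need not be recursive.
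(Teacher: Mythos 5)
Your construction is essentially the paper's: enumerate candidate words over $\Sigma_{n+1}$, use recurrence of $M$ to find an occurrence of the projection beyond the already-decided region, set $P_{n+1}$ on that window so as to realize the word, and let $P_{n+1}$ be empty elsewhere. However, your correctness argument has a genuine gap. In your enumeration each word of $\Sigma_{n+1}^*$ appears as $u'_s$ for a \emph{single} stage $s$, so it is planted exactly once to the right (beyond $r_s$) and once to the left (below $l_s$). Your claim that such a word ``has been planted beyond every right boundary and below every left boundary'' is therefore false: it was planted beyond the one boundary $r_s$, and a suffix of $w(M')$ starting to the right of that single planted window need not contain any occurrence of $u'_s$ at all, since the later positions are occupied by plantings of \emph{other} words and by the zero-fill. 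Property (*) requires occurrences of $u'$ at arbitrarily large positive and arbitrarily small negative positions, and a single occurrence per side does not provide this.

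The paper closes exactly this hole by choosing the enumeration $(u_i)_{i\in\N}$ of the set $X$ of words whose projection occurs in $w(M)$ so that every element of $X$ occurs \emph{infinitely often} in the enumeration (i.e., $\{i : u_i=u\}$ is infinite for each $u\in X$); then each word is planted at infinitely many stages, hence at positions tending to $+\infty$ (and symmetrically to $-\infty$), which is what (*) needs. Alternatively, your construction could be salvaged with the enumeration you chose by adding an argument you did not make: every $u'$ with $\pi(u')$ occurring in $w(M)$ is a factor of infinitely many distinct words $v'$ whose projection also occurs in $w(M)$ (for instance, extend an occurrence of $\pi(u')$ in $w(M)$ to the right by $k$ letters and put $0$ in the new component on the extension); each such $v'$ is planted at some stage, and those planted windows lie at positions tending to $\pm\infty$, so $u'$ occurs inside them arbitrarily far out. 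Some such argument is indispensable; as written, the key step of your proof fails.
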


\begin{proof}
The proof is similar to the proof of Proposition 2.8 in \cite{BC08},
  which roughly shows how to deal with the case when $w(M)$ is rich.

Let $X$ be the set of nonempty words $u \in \Sigma_{n+1}^*$ such that $\pi(u)$ occurs in (every prefix and every suffix of) $w(M)$. We define the expansion $M'$ of $M$ by $P_{n+1}$ in such a way that every element of $X$ occurs in every prefix and every suffix of $w(M')$. This can be done as follows.

 We first define the restriction of $P_{n+1}$ to the interval $[0,\infty[$. Let $(u_i)_{i \in \N}$ be any sequence of elements of $\Sigma_{n+1}^*$ such that $X=\{u_i: i \geq 0\}$, and for every $u \in X$ the set $\{i : u_i=u\}$ is infinite. We shall define sequences of integers $(a_m)_{m \in \N}$ and $(b_m)_{m \in \N}$ and the restriction of $P_{n+1}$ to $[0,b_{m}]$ in such a way that each $u_m$ occurs in $w(M'_{[0,b_{m}]})$, and $a_m$ corresponds to some position in $[0,b_{m}]$ at which $u_m$ starts. We proceed by induction over $m$.
 \begin{enumerate}[$\bullet$]
\item Case $m=0$: we have $u_0 \in X$, therefore the word $\pi(u_0)$ has an occurrence in $w(M_{[0,\infty[})$. Let $a_0 \geq 0$ be the least non-negative integer such that $\pi(u_0)=w(M_{[a_0,a_0+|u_0|[})$, and let $b_{0}=a_0+|u_0|-1$. We define the restriction of $P_{n+1}$ to $[a_0,a_0+|u_0|[$ such that $w(M'_{[a_0,a_0+|u_0|[})=u_0$. Moreover we set $P_{n+1} \cap [0,a_0[ = \emptyset$. 
\item Induction step: let $m \geq 1$. We have $u_m \in X$, therefore the word $\pi(u_m)$ has an occurrence in $w(M_{[b_{m-1},\infty[})$. We define $a_m$ as the least integer greater than $b_{m-1}$  such that $\pi(u_m)=w(M_{[a_m,a_m+|u_m|[})$. We set $b_{m}=a_m+|u_m|-1$. We then define the restriction of $P_{n+1}$ to $[a_m,a_m+|u_m|[$ in such a way that $w(M'_{[a_m,a_m+|u_m|[})=u_i$. Moreover we set $P_{n+1} \cap ]b_{m-1},a_m[ = \emptyset$. 
\end{enumerate}
This construction, and the definition of the sequence $(u_i)$, guarantee that every word $u \in X$ occurs in every suffix of $w(M')$.

We can proceed in a similar way for the definition of the restriction of  $P_{n+1}$ to the interval $]-\infty,0[$, in such a way that every word $u \in X$ occurs in every prefix of $w(M')$. 
\end{proof}

\begin{prop}[Expansion of recurrent structures]\label{prop:recurrent1}
There are  two recursive function $g_1,g_2$ such that if
$M=(\Z,<,P_1,\dots, P_n)$  is   recurrent and  $M'$  is an expansion of $M$ which has property  (*),
then
\begin{enumerate}[\em(1)]
\item $P_{n+1}$ is not definable in $M$;
\item $g_1$ computes $T^{k}(M')$ from ${k}$ and $T^{g_2(k)}(M)$.
\end{enumerate}
Hence $MSO(M')$ is recursive in $MSO(M)$. In particular, if $MSO(M)$
is decidable,  then   $MSO(M')$ is decidable.
\end{prop}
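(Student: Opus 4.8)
The plan is to prove the two claims separately, using the recurrence property (*) as the crucial invariant.

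\medskip

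\textbf{Part (1): $P_{n+1}$ is not definable in $M$.} I would argue by contradiction using Theorem~\ref{thm:semZ}(2). Since $M$ is recurrent, no element of $\Z$ is definable in $M$. Now consider the expanded structure $M'=(\Z,<,\overline{P},P_{n+1})$. The key observation is that property (*) guarantees $M'$ is \emph{also} recurrent: indeed, I would first check that if $M$ is recurrent then in fact $w(M)$ must be rich (every finite word over $\Sigma_n$ occurs in every prefix and suffix, since at least the one-letter word corresponding to any actually-occurring symbol recurs, and the recurrence of every regular language forces richness on the alphabet of symbols appearing); then (*) upgrades richness of $w(M)$ to richness of $w(M')$, i.e. every $u'\in\Sigma_{n+1}^*$ occurs in every prefix and suffix of $w(M')$. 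A rich $\Z$-word is recurrent, so $M'$ is recurrent, and hence by Theorem~\ref{thm:semZ}(2) applied to $M'$, no element is definable in $M'$. If $P_{n+1}$ were definable in $M$, then $P_{n+1}$ would a fortiori be definable in the expansion $M'$, and one could then (for instance) define the minimal element of $P_{n+1}$, contradicting non-definability of elements in $M'$. Some care is needed to ensure $P_{n+1}$ is nonempty and that a least element exists; the construction in Lemma~\ref{lem:recur} does place infinitely many elements in $P_{n+1}$ on both sides of $0$, and one can exhibit a definable element from a definable nonempty predicate, giving the contradiction.

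\medskip

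\textbf{Part (2): $g_1$ computes $T^k(M')$ from $k$ and $T^{g_2(k)}(M)$.} This is the main obstacle and the heart of the proof. Since both $M$ and $M'$ are recurrent, Theorem~\ref{thm:semZ}(2) tells us that $MSO(M)$ and $MSO(M')$ are each computable relative to an oracle answering, for a regular language $X$ of finite words, whether some element of $X$ occurs in the word ($w(M)$, respectively $w(M')$). The plan is to reduce the occurrence-oracle for $w(M')$ to the occurrence-oracle for $w(M)$, and to do so uniformly so that the type-translation is recursive with a computable bound $g_2$. For a regular language $X'\subseteq\Sigma_{n+1}^*$, I would determine whether some element of $X'$ occurs in $w(M')$ as follows: by property (*) together with richness, every $u'\in\Sigma_{n+1}^*$ whose projection $\pi(u')$ occurs in $w(M)$ in fact occurs in $w(M')$; and conversely any $u'$ occurring in $w(M')$ has $\pi(u')$ occurring in $w(M)$. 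Hence an element of $X'$ occurs in $w(M')$ if and only if some $u'\in X'$ has $\pi(u')$ occurring in $w(M)$, i.e. if and only if $\pi(X')$ (which is regular, since regular languages are closed under the length-preserving projection $\pi$, and an automaton for $\pi(X')$ is computable from one for $X'$) meets $w(M)$. This last question is exactly what the $M$-oracle answers. Composing with the effective version of Theorem~\ref{thm:semZ}(2), this yields recursiveness of $MSO(M')$ in $MSO(M)$; and by tracking quantifier depths through this reduction (the translation from a depth-$k$ sentence about $M'$ to the relevant regular languages, then through $\pi$, then back to a query answerable from a bounded-depth fragment of $MSO(M)$) one extracts the recursive bound $g_2(k)$ and the uniform computation $g_1$.

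\medskip

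The delicate point, and where I expect the real work to lie, is the \emph{uniformity and effectiveness} claim (2), as opposed to the mere relative-computability statement. Theorem~\ref{thm:semZ}(2) is stated as an oracle reduction, so the honest task is to verify that the oracle reduction for $M'$ factors through the oracle reduction for $M$ with a \emph{computable} control on quantifier depth, giving the explicit $g_1,g_2$ that Proposition~\ref{prop:recurrent1} promises. This requires inspecting the proof of Theorem~\ref{thm:semZ}(2) closely enough to see that the map from a depth-$k$ type to the finite family of regular languages it queries is recursive and depth-bounded, and that passing through the projection $\pi$ does not blow up depth uncontrollably. The richness step in Part (1) is a mild subtlety worth stating carefully, but it is Part (2)'s bookkeeping of the oracle reduction that will demand attention.
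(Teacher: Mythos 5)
Your Part (2) is essentially the paper's own argument: you establish that $w(M')$ is recurrent, invoke an effective version of Theorem~\ref{thm:semZ}(2) (this is exactly Proposition~\ref{prop:reductionZ}, due to Sem\"enov and Perrin--Schupp: a recursive map from $k$ to finitely many regular languages whose occurrence pattern determines $T^k$), and reduce the occurrence oracle for $w(M')$ to the one for $w(M)$ via the projection $\pi$, using that by (*) some element of a regular $X'\subseteq\Sigma_{n+1}^*$ occurs in $w(M')$ if and only if some element of $\pi(X')$ occurs in $w(M)$. That part is sound. The genuine gap is in Part (1). Theorem~\ref{thm:semZ}(2) only yields non-definability of \emph{elements}, and from definability of the \emph{set} $P_{n+1}$ one cannot in general extract a definable element: in a recurrent structure many nonempty sets are definable (the whole domain, or the set of positions carrying a fixed letter) while no element is. Your fallback claim that ``one can exhibit a definable element from a definable nonempty predicate'' is false as a general principle, and it cannot be repaired here: property (*) forces $P_{n+1}$ to be unbounded in both directions (every one-letter word over $\Sigma_{n+1}$ with last component $1$ whose projection occurs in $w(M)$ must occur in every prefix and every suffix of $w(M')$), so $P_{n+1}$ has no least element and no other canonical point to pick out. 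The contradiction you aim for simply does not materialize from non-definability of elements.

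What is actually needed --- and what the paper does --- is a pigeonhole argument on Hintikka types. Fix $k$, let $m_k$ be the number of $(k,\{<,P_1,\dots,P_n\})$-Hintikka sentences, and choose a factor $u$ of $w(M)$ with $2^{|u|}>(m_k)^2$ (every factor of a recurrent word recurs). By (*), all $2^{|u|}$ words $u'$ with $\pi(u')=u$ occur in $w(M')$; picking an occurrence $[s_i,t_i[$ of each and applying pigeonhole to the pairs $\bigl(T^k(M_{[0,s_i[}),\,T^k(M_{[t_i,\infty[})\bigr)$ gives $i<j$ with equal pairs but different $P_{n+1}$-decorations on $[s_i,t_i[$ and $[s_j,t_j[$; hence some offset $l$ with $s_i+l\in P_{n+1}$, $s_j+l\notin P_{n+1}$ (or vice versa), while Theorem~\ref{thm:simplecomposition}(a) and Corollary~\ref{cor:def-elements} show $s_i+l$ and $s_j+l$ satisfy the same formulas of quantifier depth $<k$. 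As $k$ is arbitrary, $P_{n+1}$ is not definable. Separately, your claim that recurrence implies richness is false: the alternating word $\dots 010101\dots$ is recurrent, yet the factor $00$ never occurs, so it is not rich even over the alphabet of letters that appear. This error happens to be harmless where you use it, since every step goes through with recurrence alone (if $\pi(u')$ occurs in $w(M)$ then it recurs there, so (*) applies and $u'$ occurs in $w(M')$), but the richness detour should be removed.
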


\begin{rem}\label{rem:unifrec}
 Proposition  \ref{prop:recurrent1} implies that there is an  algorithm which reduces  $MSO(M')$ to $MSO(M)$. This reduction algorithm (like the algorithm from  Proposition \ref{prop:2ndConstructionN})
   is independent of $M$; it only uses an  oracle for  $MSO(M)$.
\end{rem}

The proof of Proposition \ref{prop:recurrent1} relies on a refinement of Theorem \ref{thm:semZ}. The analysis of Sem\"enov's proof \cite{Semenov83} of the latter shows that for a given $k \geq 0$, the $k$-type of $MSO(M)$ can actually be computed as soon as we can decide, for {finitely many} regular languages $L$ (which can be computed from $k$), whether some element of $L$ occurs in $w(M)$. More precisely we have the following.

\begin{prop}{(Sem\"enov \cite{Semenov83}, Perrin-Schupp \cite{PerrinS86})}\label{prop:reductionZ}
There exists a recursive function which maps every $k \geq 0$ to a finite sequence $S_k=(L_{k,0},\dots,L_{k,a_k})$ of regular languages of finite words over $\Sigma_n$ such that for every structure $M=(\Z,<,{P_1,\dots,P_n})$ with $w(M)$ recurrent, $T^k(M)$ is computable relative to the set 
$$I_k=\{i : \textrm{ some element of $L_{k,i}$ occurs in $w(M)$}\}.$$
\end{prop}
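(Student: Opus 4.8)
The plan is to reconstruct the computation of a single $k$-type $T^k(M)$ and to check that it is ``local in $k$'': it queries the occurrence oracle of Theorem \ref{thm:semZ}(2) only on finitely many regular languages, all computable from $k$. Note first that \emph{determination} is already granted: Theorem \ref{thm:semZ}(2) asserts that $MSO(M)$, hence each $T^k(M)$, is computable relative to the full occurrence oracle, so $T^k(M)$ is a function of which finite words occur in $w(M)$. The task is therefore purely one of uniformity and finiteness. I would begin by cutting $M$ at $0$, so that by Theorem \ref{thm:simplecomposition}(a) $T^k(M)=T^k(M_{]-\infty,0[})+T^k(M_{[0,\infty[})$, reducing the problem to the two one-sided infinite halves of types $-\omega$ and $\omega$.

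The backbone is the finite monoid of $k$-types of finite words. The set $H_k(\{<,\overline P\})$ of $k$-types of finite $\Sigma_n$-labelled orderings is finite and computable from $k$, and $T^k(uv)$ depends only on $T^k(u)$ and $T^k(v)$; write $S=H_k(\{<,\overline P\})$ for this monoid under type addition. For each $\tau\in S$ the language $L_\tau=\{u\in\Sigma_n^*: T^k(u)=\tau\}$ is regular, with an automaton computable from $\tau$, hence from $k$. A single oracle query ``does some element of $L_\tau$ occur in $w(M)$?'' therefore decides exactly whether $w(M)$ admits a factor of $k$-type $\tau$. I would let $S_k$ consist of the languages $L_\tau$ for $\tau\in S$, together with the finitely many composite languages (finite unions of concatenations $L_{\tau_1}\cdots L_{\tau_r}$ with $r$ bounded in terms of $|S|$) needed below to detect adjacencies and idempotent blocks; since $S$ is finite and computable from $k$, the whole family $S_k$ is a finite sequence computable from $k$.

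Next I would compute each half inside $S$. For the right half, Ramsey's theorem applied to the colouring $\{i<j\}\mapsto T^k(M_{[i,j[})$ yields a homogeneous set on which all blocks share one idempotent type $t\in S$, whence $T^k(M_{[0,\infty[})=s+(t\times\omega)$ by Theorem \ref{thm:simplecomposition}(b), with $s$ a boundary type; the left half is symmetric, governed by an idempotent $t'$, and altogether $T^k(M)=(t'\times(-\omega))+c+(t\times\omega)$ for some central type $c\in S$, where $t'\times(-\omega)$ denotes the backward analogue of $t\times\omega$. The essential use of recurrence is that a finite word occurs \emph{somewhere} in $w(M)$ if and only if it occurs in \emph{every} prefix and \emph{every} suffix; thus every realised $k$-type recurs cofinally and co-initially, and Sem\"enov's analysis shows that under this homogeneity the idempotents $t,t'$ and the combined boundary contribution $c$ are singled out among the elements of $S$ by occurrence conditions alone. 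Each such condition is the membership bit of $w(M)$ in one of the languages of $S_k$, i.e. a bit of $I_k$, and the selection and subsequent assembly of $(t'\times(-\omega))+c+(t\times\omega)$ are computations carried out entirely inside the finite monoid $S$.

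Putting this together, $T^k(M)$ is obtained by a finite, effective computation in $S$ from the finite bit-vector $I_k$, and every oracle call is a membership test for a language of the $k$-computable family $S_k$; this is exactly the assertion. The main obstacle is the third step: extracting from Sem\"enov's proof the precise statement that, for recurrent $w(M)$, the Ramsey idempotents and boundary data are determined by the set of occurring $k$-types rather than by the specific word, and pinning down the exact finite family $S_k$ that encodes these occurrence conditions. The determination itself is the content of Theorem \ref{thm:semZ}(2); the genuinely new point, and the crux of the verification, is that this determination is \emph{uniform in $k$} and consumes only finitely many oracle queries, on regular languages computable from $k$.
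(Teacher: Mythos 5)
Your proposal does not actually prove the proposition; it reconstructs the surrounding machinery and then, at the decisive point, cites the result itself. For context: the paper contains no proof of this statement either --- it is imported wholesale from the literature, the text saying only that ``the analysis of Sem\"enov's proof'' of Theorem \ref{thm:semZ} yields it. Your scaffolding is the natural one and is essentially sound: the finite monoid $S$ of $k$-types of finite $\Sigma_n$-words is computable from $k$ by the Hintikka Lemma (Lemma \ref{comp}) and Theorem \ref{thm:simplecomposition}(a); each $L_\tau=\{u: T^k(u)=\tau\}$ is effectively regular; and a Ramsey factorization gives $T^k(M)=(t'\times(-\omega))+c+(t\times\omega)$ with $t,t'$ idempotent. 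But the entire content of the proposition is the step you dispatch with ``Sem\"enov's analysis shows that \ldots the idempotents $t,t'$ and the combined boundary contribution $c$ are singled out among the elements of $S$ by occurrence conditions alone.'' What is needed there is a concrete lemma: from the finitely many bits ``does some element of $L$ occur in $w(M)$'', for $L$ ranging over the $L_\tau$ and finitely many concatenations of them, one can find \emph{some} triple $(t',c,t)$ realized by a bi-infinite factorization of $w(M)$, together with a uniqueness argument that \emph{any} triple passing the same occurrence tests produces the same sum $(t'\times(-\omega))+c+(t\times\omega)$. Recurrence is what makes such an argument possible (if $u$ and $v$ occur, then by recurrence some word $uxv$ occurs, which is the directedness needed to merge two candidate factorizations), but none of this appears in your text; your closing paragraph concedes that this is exactly the open point. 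So the proposal reduces the statement to itself plus the same external citation the paper uses, and as a standalone proof it has a genuine gap precisely at the claim being asserted.

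There is also a concrete technical error in your opening move. Cutting at $0$ and ``reducing the problem to the two one-sided infinite halves'' is ill-posed, because the oracle data $I_k$ is shift-invariant while the types of the halves are not: for the recurrent bi-infinite word $\cdots ababab \cdots$, different cut points give right halves $abab\cdots$ and $baba\cdots$, which already differ at quantifier depth $2$ (the label of the least element), yet the occurrence data is identical. Hence $T^k(M_{[0,\infty[})$ is not computable --- indeed not even determined --- relative to $I_k$; only the shift-invariant combination $T^k(M)$ is. Your later Ramsey formula, which quantifies over factorizations and retains only that combination, silently repairs this, but the reduction as stated asks for something the oracle provably cannot deliver and should be deleted rather than repaired.
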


\noindent{\bf Proof of Proposition \ref{prop:recurrent1}.}

Let $M=(\Z,<,{P_1,\dots,P_n})$ be recurrent, and let $M'$  be an expansion of $M$ which has property  (*).

First let us prove that $P_{n+1}$ is not definable in $M$. For simplicity we use the notations $w,w'$ for $w(M),w(M')$, respectively.  For a fixed $k \geq 0$ there exist finitely many $(k,\{<,P_1,\dots,P_n\})$-Hintikka sentences, say $m_{k}$. Let $u$ be any finite word over $\Sigma_n$ such that $u$ is recurrent in $w$
and $2^{|u|} > (m_{k})^2$. Since $u$ is recurrent in $w$, it follows from the construction of $w'$ that every word $u'$ over $\Sigma_{n+1}$ such that $\pi(u')=u$ is also recurrent in $w'$. There are $2^{|u|}$ distinct such words, say $u'_1,\dots,u'_{2^{|u|}}$. 
For $i=1,\dots,2^{|u|}$, let $(s_i,t_i)$ be such that $w'([s_i,t_i[)=u'_i$ (which implies $w([s_i,t_i[)=u$). By definition of $u$ we have $2^{|u|}  > (m_{k})^2$, which implies that there exist $i<j$ such that $T^k(M_{[0,s_i[})=T^k(M_{[0,s_j[})$ and $T^k(M_{[t_i,\infty[})=T^k(M_{[t_j,\infty[})$. We have  $w([s_i,t_i[)=w([s_j,t_j[)=u$, hence the $T^k(M_{[s_i,t_i[})=T^k(M_{[s_j,t_j[})$. It follows from $w'([s_i,t_i[) \ne w'([s_j,t_j[)$ that there exists $l \in \{0,t_i-s_i-1\}$ such that $w'(s_i+l) \ne w'(s_j+l)$ (where $w'(x)$ denotes the letter at position $x$ in $w'$). Assume that $l \not\in \{0,t_i-s_i-1\}$ (the case $l \in \{0,t_i-s_i-1\}$) is similar).

 We have $w([s_i,s_i+l[)=w([s_j,s_j+l[)$ thus 
$T^k(M_{[s_i,s_i+l[})=T^k(M_{[s_j,s_j+l[})$. It follows that 
\begin{eqnarray}\label{eq:leftside}
T^k(M_{[0,s_i+l[}) & = & T^k(M_{[0,s_i[})+T^k(M_{[s_i,s_i+l[}) \nonumber \\
& = & T^k(M_{[0,s_j[})+T^k(M_{[s_j,s_j+l[}) \nonumber \\
& = & T^k(M_{[0,s_j+l[}). 
\end{eqnarray}
Moreover $w([s_i+l,t_i[)=w([s_j+l,t_j[)$ thus $T^k(M_{[s_i+l,t_i[})=T^k(M_{[s_j+l,t_j[})$. Hence
\begin{eqnarray}\label{eq:rightside}
T^k(M_{[s_i+l,\infty[})& = &T^k(M_{[s_i+l,t_i[})+T^k(M_{[t_i,\infty[}) \nonumber \\
& = & T^k(M_{[s_j+l,t_j[})+T^k(M_{[t_j,\infty[}) \nonumber \\
& = & T^k(M_{[s_j+l,\infty[}) 
\end{eqnarray}
If $k \geq 1$, it follows from (\ref{eq:leftside}), 
 (\ref{eq:rightside}) and Corollary \ref{cor:def-elements} that the elements $s_i+l$ and $s_{j}+l$ cannot be distinguished by any  formula $\varphi(x)$ with quantifier depth less than $k$. Since $w'(s_i+l) \ne w'(s_j+l)$ this implies that $P_{n+1}$ is not definable by any formula of quantifier depth less than $k$. Since this holds for every $k$, we conclude that $P_{n+1}$ is not definable in $M$.

 Let us now prove that one can compute $T^k(M')$ from $k$ and $T^{g_2(k)}(M)$, for some recursive function $g_2$. Let $k \geq 1$. 
Using Proposition \ref{prop:reductionZ} we can compute $S_k=(L'_{k,0},\dots,L'_{k,a_k})$ such that $T^k(M')$ is computable relative to the set 
$$I_k=\{i : \textrm{ some element of $L'_{k,i}$ occurs in $w(M')$}\}.$$ 
Observe that $P_{n+1}$ satisfies $(*)$, thus for every regular language $X' \subseteq \Sigma_{n+1}^*$, some element of $X$ occurs in $w(M')$ iff some element of $\pi(X)$ occurs in $w(M')$. It follows that if we set $L_{k,i}=\pi(L'_{k,i})$ for every $i$, then we have 
$$I_k=\{i : \textrm{ some element of $L_{k,i}$ occurs in $w(M)$}\}.$$
Every language $L'_{k,i}$ is regular, and thus the same holds for $L_{k,i}$. Given any $L_{k,i}$, we can compute a sentence $F_i$ such that $M \models F_i$ iff some element of $L_{k,i}$ occurs in $w(M)$. It follows that $I_k$ is recursive in $MSO(M)$.  
Moreover the computation involves only deciding sentences of the form $F_i$, i.e. uses an oracle for $T^{g_2(k)}(M)$ where $g_2(k)$ is the maximal quantifier depth of sentences of the form $F_i$.
This completes the proof. \qed

\section{Main Result}\label{sec:general}

The next theorem is our main result.

\begin{thm}\label{thm:main}
Let $M=(A,<,P_1,\dots,P_n)$ where $(A,<)$ contains an interval of type $\omega$ or $-\omega$. There exists an expansion $M'$ of $M$ by a unary predicate  $P_{n+1}$
such that $P_{n+1}$ is not definable in $M$, and $MSO(M')$ is recursive in $MSO(M)$. In particular, if $MSO(M)$ is decidable, then $MSO(M')$ is decidable.
\end{thm}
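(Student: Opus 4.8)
The plan is to reduce the general case to the two special cases already handled, namely $\omega$ (Proposition \ref{prop:2ndConstructionN}) and $\zeta$ (Proposition \ref{prop:Z}), by cutting $(A,<)$ into a definable ordered sum of intervals each of a manageable order type, defining the new predicate $P_{n+1}$ piecewise on these intervals, and then using the composition machinery of Theorem \ref{thm:simplecomposition} together with the decomposition lemma (Lemma \ref{lem:decomp}) to assemble a reduction of $MSO(M')$ to $MSO(M)$. The hypothesis that $(A,<)$ contains an interval of type $\omega$ or $-\omega$ guarantees that at least one summand where we apply the construction of Proposition \ref{prop:2ndConstructionN} or its mirror image is genuinely present, so that the predicate we add is non-trivial.

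\textbf{Constructing the decomposition.}
First I would fix a convex equivalence relation $\sim$ on $A$ that is definable in $M$ and whose classes are intervals of order type among finite, $-\omega$, $\omega$, and $\zeta$. The natural choice relates $x$ and $y$ when the interval $[x,y]$ (or $[y,x]$) is finite; this is MSO-definable, its classes are convex, and each class has one of the four listed order types. By Lemma \ref{lem:decomp}, $M/_\sim$ is interpretable in $M$, and moreover for any finite list of sentences $\vp_1,\dots,\vp_k$ the labelled quotient $N=(M/_\sim,<,Q_{\vp_1},\dots,Q_{\vp_k})$ is interpretable in $M$; this is the device that will let me track, inside $M$, the $k$-type of each summand $M_i$.

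\textbf{Defining $M'$ and the reduction.}
On each class $i$ of order type $\omega$ I apply Proposition \ref{prop:2ndConstructionN} to define $P_{n+1}$ on that interval, on each class of type $-\omega$ the mirror-image construction, and on each class of type $\zeta$ the construction of Proposition \ref{prop:Z}; on finite classes I set $P_{n+1}$ empty. The crucial point is that the constructions for $\omega$ (Proposition \ref{prop:2ndConstructionN}) and for the recurrent/non-recurrent $\zeta$ cases are \emph{uniform}: there are fixed recursive functions $g_1,g_2$ computing the $k$-type of the expanded summand from the $g_2(k)$-type of the original summand, and this uniformity (noted in Remarks \ref{rem:unifnonrec} and \ref{rem:unifrec}) is exactly what makes the piecewise definition usable across infinitely many summands. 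To compute $T^k(M')$ I would first determine, for each summand, its expanded $k$-type; by uniformity this is a recursive function of the original $g_2(k)$-type of that summand. Using Lemma \ref{lem:decomp}(2) I label $M/_\sim$ by the relevant $g_2(k)$-type information about each summand, obtaining a labelled linear ordering interpretable in $M$; then I apply Theorem \ref{thm:simplecomposition} to this quotient (whose summand types have already been transformed) to recover $T^k(M')$ as the composition over $M/_\sim$. Non-definability of $P_{n+1}$ follows from the non-definability results in the summands (Lemma \ref{lem:undefN}, Theorem \ref{thm:semZ}) combined with Proposition \ref{prop:def-interval}, which transfers non-definability on an interval up to the whole structure.

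\textbf{Main obstacle.}
The hard part will be the $\zeta$ summands. The $\omega$-construction is uniform with a clean reduction independent of the summand, but Remark \ref{rem:unifnonrec} warns that for $\zeta$ the reduction algorithm \emph{depends on $M$}, because recurrence is not MSO-expressible; so I cannot simply apply one fixed recursive transformation to every $\zeta$-class. I must instead extract, uniformly and recursively in $MSO(M)$, enough type information about each $\zeta$-summand to drive the appropriate (recurrent or non-recurrent) branch of the $\zeta$-construction, and verify that the composition over the quotient still yields a reduction recursive in $MSO(M)$ rather than merely relative to richer oracles. Handling this correctly — ensuring the per-summand choice of construction and the oracle queries about occurrences of regular languages (Proposition \ref{prop:reductionZ}) are all recoverable from the labelled quotient interpretable in $M$ — is where the genuine work lies; the remaining bookkeeping with Theorem \ref{thm:simplecomposition} and Lemma \ref{lem:decomp} is routine.
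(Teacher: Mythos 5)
Your scaffolding --- the convex equivalence relation identifying $x$ and $y$ when $[x,y]$ is finite, Lemma \ref{lem:decomp} to interpret the labelled quotient in $M$, the composition theorem to reassemble types, and Proposition \ref{prop:def-interval} for non-definability --- is exactly the paper's. The gap is in your treatment of the $\zeta$-classes. You propose to expand \emph{every} infinite class simultaneously, applying Proposition \ref{prop:Z} to each class of type $\zeta$, and you correctly flag that this construction is non-uniform; but you leave the obstacle unresolved, and as stated it cannot be resolved. To compute $T^k(M')$ through the quotient you must label $M/_\approx$ with finitely many sentences (Lemma \ref{lem:decomp}(2)), so the map from the $g_2(k)$-type of a summand $M_i$ to the $k$-type of the expanded summand $M'_i$ must be well-defined on types and computable. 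Whether a $\zeta$-summand is recurrent is not determined by any finite-depth type (Remark \ref{rem:unifnonrec}): two summands with identical $g_2(k)$-types can fall into different branches of Proposition \ref{prop:Z} and receive expansions with different $k$-types, so your map is not even well-defined. Worse, in the non-recurrent branch Proposition \ref{prop:non-recurrent1} requires a formula defining an element, and across infinitely many non-recurrent summands there is no uniform bound on the quantifier depth of such formulas, so no fixed level of type information about the summands can drive the construction. ``Extracting enough information recursively in $MSO(M)$'', as you propose, founders on precisely these two points.

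The paper does not overcome this obstacle; it sidesteps it by a case analysis on $M$ performed once and for all (which is why the overall reduction algorithm depends on $M$, cf.\ Remark \ref{rem:final}). If some class has type $\omega$, it expands \emph{only} the $\omega$-classes, via the uniform Proposition \ref{prop:2ndConstructionN}, and sets $P_{n+1}$ empty everywhere else --- in particular on every $\zeta$-class; symmetrically for $-\omega$ when there is no $\omega$-class. Only when every infinite class has type $\zeta$ does it touch them, and then: if all are recurrent it applies the uniform Proposition \ref{prop:recurrent1} to all of them; otherwise it fixes a single formula $\varphi$ of minimal quantifier depth defining an element in some $\zeta$-class, and expands exactly those classes in which $\varphi$ defines an element --- a condition expressed by one sentence of bounded quantifier depth, hence readable off the summand's type, which makes the per-summand map well-defined and computable. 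Note also a small slip in your non-triviality argument: an interval of type $\omega$ in $(A,<)$ need not yield an $\approx$-class of type $\omega$ or $-\omega$ (take $A=\Z$, whose unique class is $\zeta$); the hypothesis only guarantees that some class is infinite, and it is the case analysis that then ensures at least one class actually gets expanded.
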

As an immediate consequence we obtain the following
corollary.
\begin{cor}\label{cor:main}
Let $M=(A,<,P_1,\dots,P_n)$ where $(A,<)$ is an infinite scattered linear ordering. There exists an expansion $M'$ of $M$ by some
unary predicate $P_{n+1}$ not definable in $M$ such that $MSO(M')$ is recursive in $MSO(M)$.
\end{cor}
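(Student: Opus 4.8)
The plan is to derive Corollary \ref{cor:main} from Theorem \ref{thm:main} by showing that any infinite scattered linear ordering $(A,<)$ necessarily contains an interval of order type $\omega$ or $-\omega$. Once this structural fact is established, the corollary follows immediately: such an interval gives the hypothesis of Theorem \ref{thm:main}, which provides the desired non-definable expansion $M'$ with $MSO(M')$ recursive in $MSO(M)$. So the entire content of the proof is the combinatorial claim about scattered orderings.

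First I would argue that $(A,<)$ contains a pair of consecutive elements. Since $(A,<)$ is scattered, it contains no dense sub-ordering; in particular $(A,<)$ itself is not dense, so by the definition of density (no pair of consecutive elements) there must exist $a<b$ in $A$ with no element strictly between them. This gives a single successor pair, but I need an entire one-sided infinite interval of consecutive elements, so a single such pair is only the starting point.

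Next I would build the interval inductively using the scattered property repeatedly. Consider the element $a$ together with its successor; I want to extend a chain of successive elements infinitely in at least one direction. The key idea is that between any two elements one can extract, inside a scattered ordering, a sub-configuration that is not dense. More carefully, I would consider the equivalence relation $\sim$ on $A$ where $x\sim y$ iff the closed interval $[x,y]$ (or $[y,x]$) is finite; its classes are convex and each carries the discrete order of a finite chain, an $\omega$-chain, a $(-\omega)$-chain, or a $\zeta$-chain. If every $\sim$-class were finite, then the quotient $A/_\sim$ would be infinite, and one checks that $A/_\sim$ is then dense (between any two distinct classes lies a third, since the two classes are not adjacent once all classes are finite and $A$ has no dense suborder only at the top level), contradicting scatteredness. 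Hence some $\sim$-class is infinite, and being a discrete convex set it contains an interval of type $\omega$, $-\omega$, or $\zeta$; in every case it contains an interval of type $\omega$ or $-\omega$.

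The main obstacle is making the density argument for $A/_\sim$ fully rigorous: one must verify that if all successor-classes are finite then the quotient ordering has no consecutive pair and no endpoints-of-gaps that would let it remain scattered, so that $A/_\sim$ being infinite forces a dense suborder inside $A$, contradicting the hypothesis. The clean way to handle this is to invoke the standard fact (see \cite{Rosenstein82}) that an infinite scattered ordering is never dense and, more strongly, that its condensation by the finite-interval relation cannot be both infinite and everywhere non-discrete without producing a dense suborder; I would cite Rosenstein's treatment of the finite-condensation and the characterization of scattered orderings rather than reprove it. With that fact in hand the remaining steps are routine, and the corollary is immediate from Theorem \ref{thm:main}.
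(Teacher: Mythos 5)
Your proposal is correct and takes essentially the same route as the paper: the paper derives the corollary immediately from Theorem \ref{thm:main}, the only content being precisely the fact you supply, namely that every infinite scattered ordering contains an interval of type $\omega$ or $-\omega$. Your proof of that fact via the finite-interval condensation (your $\sim$, the paper's $\approx$ from the proof of Theorem \ref{thm:main}) is sound, since finite classes can never be adjacent, so an all-finite condensation would be infinite and dense and, by choosing representatives from the convex classes, would embed a dense sub-ordering into $A$.
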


We shall use Shelah's composition method \cite[Theorem 2.4]{Shelah75} (see also \cite{Gurevich85,Thomas97})  which allows us to reduce
the MSO theory of a sum of chains  to the MSO theories of the summands and the MSO theory of the index structure.

\begin{thm}[Composition Theorem \cite{Shelah75}]\label{compositiontheorem}

There exists a recursive function $f$ and an algorithm which, given $k ,l\in \N$, computes the $k$-type of  any  sum  $M=\sum_{i \in I} M_i$  of  chains
over a  signature $\{<,P_1,\dots, P_l\}$
 from the $f(k,l)$-type of the structure
$$(I,<^I,Q_1,\dots,Q_p)$$ where
$$Q_j=\{i \in I: T^{k}(M_i)=\tau_j\} \ \ \ \ j=1,\dots,p$$
and $\tau_1,\dots,\tau_p$  is the list of all $(k,\Delta)$-Hintikka sentences with $\Delta=\{<,P_1,\dots, P_l\}$.
\end{thm}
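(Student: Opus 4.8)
The plan is to prove a \emph{composition lemma} from which both the determination and the effectivity follow. Write $\hat I$ for the labelled index ordering $(I,<^I,Q_1,\dots,Q_p)$. I claim that $T^{k}(\sum_{i\in I}M_i)$ is a \emph{computable function} of $T^{f(k,l)}(\hat I)$ for a suitable recursive $f$, and the cleanest route is a computable \emph{syntactic reduction}: by induction on MSO-formulas $\vp$ over $\Delta=\{<,P_1,\dots,P_l\}$ with $qd(\vp)\le k$, I produce a formula $\vp^{\sharp}$ over the signature $\{<,Q_1,\dots,Q_p\}$ of the index ordering, with $qd(\vp^{\sharp})\le f(k,l)$, such that $\sum_i M_i\models\vp$ iff $\hat I\models\vp^{\sharp}$. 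Granting this, $T^{k}(\sum_i M_i)$ is computed by translating each Hintikka sentence $\tau\in H_k(\Delta)$ (available by Lemma~\ref{comp}) to $\tau^{\sharp}$ and reading off its truth value from $T^{f(k,l)}(\hat I)$, again using Lemma~\ref{comp}. The two-element and $\omega$-index cases already recorded in Theorem~\ref{thm:simplecomposition} are the instances $I=\{1,2\}$ and $I=\omega$ of this phenomenon, and they guide the general construction.

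The induction maintains, for a formula with free set variables $X^1,\dots,X^r$ and free element variables $x^1,\dots,x^s$, the following picture on the sum. Each $X^t$ restricts to $X^t\cap A_i$ in the summand $M_i$, and each $x^u$ lies in a unique summand. The translation records (i) for each $i$, the \emph{enriched type} of the augmented summand $(M_i,X^1\cap A_i,\dots,X^r\cap A_i)$ up to the residual quantifier depth, a label from a finite, computable alphabet transmitted by predicates/variables on $I$; and (ii) the relative order and the coincidence-in-summand pattern of the indices housing $x^1,\dots,x^s$, transmitted by free element variables on $\hat I$. The atomic cases ($x^u<x^{u'}$, the membership $x^u\in X^t$, the labels $P_m(x^u)$) are decided by comparing housing indices and, when two elements share a summand, by a bounded amount of information about that summand's enriched type; Boolean connectives translate directly.

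The only substantial step is the quantifier case, and within it the \emph{set} quantifier $\exists X\,\psi$, which is where the depth grows from $k$ to $f(k,l)$. A set $X$ on the sum is the same datum as the family $(X\cap A_i)_{i\in I}$, and by the induction hypothesis the truth of $\psi$ depends on $X$ only through the enriched type of each augmented summand. Hence $\exists X\,\psi$ holds iff there is a \emph{coloring} of $I$ by the finitely many residual-depth-$(k-1)$ enriched summand-types obtainable by adding one set, subject to two conditions: each color assigned to $i$ must be \emph{realizable} there, and $\psi^{\sharp}$ must hold for the refined labelling. Realizability of such a type at $i$ is an existentially quantified depth-$(k-1)$ type statement about $M_i$, hence of depth $\le k$, hence decidable from the $k$-type recorded by the label at $i$ (Lemma~\ref{comp}); this is exactly why the index ordering is labelled by $k$-types of summands. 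The coloring is expressed by a bounded number $c(k,l)$ of set quantifiers on $\hat I$ over the computable enriched alphabet, with $c(k,l)$ recursive. The first-order quantifier $\exists x\,\psi$ is easier: the witness either opens a new summand (one fresh element quantifier on $\hat I$ plus a realizability constraint on its label) or lands in a summand already hosting some $x^u$ (absorbed into that summand's enriched type). Summing the per-quantifier increments over a depth-$k$ formula yields a recursive bound such as $f(k,l)=k\cdot c(k,l)$, and the translation $\vp\mapsto\vp^{\sharp}$ is produced uniformly in $k,l$, hence computable.

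I expect the set-quantifier case to be the main obstacle, on two counts. First, one must verify that the global choice of $X$ genuinely factors through the per-summand enriched types \emph{together with} a purely local realizability condition, so that no cross-summand interaction survives beyond the index-ordering level; this is the heart of the composition phenomenon and requires defining ``enriched type'' so that it is stably closed under the operations used (restriction to a summand, adjoining a predicate). Second, one must keep the enriched alphabet finite and its size \emph{computable} as the residual depth decreases, so that $c(k,l)$ and hence $f(k,l)$ are genuinely recursive; this is precisely what Lemma~\ref{comp} secures, and it is what upgrades the determination from merely existential to effective.
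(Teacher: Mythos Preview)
The paper does not prove this theorem; it is stated as Shelah's result with a citation to \cite{Shelah75} (and secondary references \cite{Gurevich85,Thomas97}) and is used as a black box in the proof of Theorem~\ref{thm:main}. So there is no paper proof to compare against.

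Your outline is the standard Feferman--Vaught/Shelah argument and is essentially correct: one builds a syntactic translation $\vp\mapsto\vp^{\sharp}$ by induction on formulas, carrying along, for each summand, the type of that summand enriched by the restrictions of the free set variables (and the positions of any free element variables it hosts), and expressing the set-quantifier step on the index side as a guessed colouring of $I$ by realizable enriched types. Two minor remarks. First, in most presentations one eliminates first-order variables at the outset by replacing each $x$ with a singleton set $X$, so that only the set-quantifier case needs to be analysed; your separate treatment of element variables is fine but requires the extra bookkeeping you allude to (tracking which indices host which $x^u$ and, within a shared summand, recording the local configuration via the enriched type). Second, the bound $f(k,l)=k\cdot c(k,l)$ is too coarse as written, since the size of the enriched-type alphabet changes at each quantifier level (depth drops by one, number of extra predicates rises by one); the honest bound is a sum $\sum_{j<k} c_j$ of per-level alphabet sizes, but of course this is still recursive by Lemma~\ref{comp}, which is all that is needed.
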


\noindent{\bf Proof of Theorem \ref{thm:main}. } 

Let $M=(A,<,\overline{P})$ where $(A,<)$ contains an interval of type $\omega$ or $-\omega$. 
The main idea is to use a definable equivalence relation on $A$ to cut $A$ into intervals which are either finite, or have  order type $\omega$, $-\omega$ or $\zeta$, and then  use the previous constructions to define $P_{n+1}$ in these intervals.

Consider the binary relation defined on $A$ by $x \approx y$ iff $[x,y]$ is finite. The relation $\approx$ is a convex equivalence relation, and is definable in $M$.
 
  Let $I$ be the  linear order of the  $\approx$-equivalence classes  for  $(A,<)$.
  Then $M=\sum_{i \in I} M_{A_i}$
  where the $A_i$'s correspond to equivalence classes of $\approx$. Using the definition of $\approx$ and our assumption on $A$, it is easy to check that the $A_i$'s are either finite, or of order type $\omega$, or $- \omega$, or $\zeta$, and that not all $A_i$'s are finite.

We define the interpretation of the new predicate $P_{n+1}$ in every interval $A_i$. The definition proceeds as follows:
\begin{enumerate}[(1)]
\item if some $A_i$ has order type $\omega$ or $-\omega$, then we consider two subcases:

\begin{enumerate}[(a)]
\item if some $A_i$ has order type $\omega$, then we apply to each substructure $M_{A_i}$ of order type $\omega$
the construction given in Proposition \ref{prop:2ndConstructionN}, and add no element of $P_{n+1}$ elsewhere.
\item if no $A_i$ has order type $\omega$, we proceed in a similar way with each substructure $M_{A_i}$ of order type $-\omega$,
 but using the dual  of Proposition \ref{prop:2ndConstructionN} for $ -\omega$.
\end{enumerate}

\item\label{itcaseZ} if no $A_i$ has order type $\omega$ or $-\omega$, then at least one $\approx-$equivalence class $A_i$ has order type $\zeta$. We consider two subcases:
\begin{enumerate}
\item if all $A_i$ with order type $\zeta$ are such that $M_{A_i}$ is recurrent, then we apply to each substructure $M_{A_i}$ of
 order type $\zeta$ the construction given  in Proposition \ref{prop:recurrent1}, and add no element of $P_{n+1}$ elsewhere.
\item otherwise there exists some $A_i$ with order type $\zeta$ and such that $M_{A_i}$ is not recurrent.
Let $\varphi(x)$ be a formula with minimal quantifier depth such that
 $\varphi(x)$ defines an element in some $M_{A_i}$ where $A_i$ has order type $\zeta$.
  For every $M_{A_i}$ such that $A_i$ has order type $\zeta$ and $\varphi(x)$ defines an element $c_i$ in $M_{A_i}$,
   we apply the construction  $E_{c_i}$  from Proposition \ref{prop:non-recurrent1}  to $M_{A_i}$, 
   and add no element of $P_{n+1}$ elsewhere.
\end{enumerate}
\end{enumerate}

\begin{lem}\label{lem:undefG}
$P_{n+1}$ is not definable in $M$.
 \end{lem}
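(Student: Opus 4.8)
The plan is to reduce the statement to the constructions already established for the summands. Recall that $P_{n+1}$ was defined interval by interval on the $\approx$-equivalence classes $A_i$, and that in each relevant case we invoked a construction (Proposition \ref{prop:2ndConstructionN} or its dual for $\omega$/$-\omega$, Proposition \ref{prop:recurrent1} for recurrent $\zeta$, or $E_{c_i}$ from Proposition \ref{prop:non-recurrent1} for non-recurrent $\zeta$) whose first clause guarantees that the newly-added predicate is \emph{not} definable in the corresponding summand. So there is always at least one class $A_i$ on which $P_{n+1} \cap A_i$ is not definable in $M_{A_i}$. I would fix such an $A_i$ and argue by contradiction: suppose $P_{n+1}$ were definable in $M$ by some formula $\varphi(x)$.

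The key step is to push definability in $M$ down to definability in the summand $M_{A_i}$. This is exactly what Proposition \ref{prop:def-interval} provides: since each $A_i$ is an interval of $A$ (it is a $\approx$-equivalence class, hence convex) and $P_{n+1}$ is assumed definable in $M$, that proposition yields that $P_{n+1} \cap A_i$ is definable in $M_{A_i}$. But this contradicts the non-definability clause of whichever construction was applied to $A_i$. I would first note that $\approx$ is a convex, definable equivalence relation (as already observed in the proof of the main theorem), so its classes are genuine intervals and Proposition \ref{prop:def-interval} applies verbatim.

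One point that needs care is matching up \emph{which} construction was used against \emph{which} class, since the definition of $P_{n+1}$ branched into several cases, and in some branches $P_{n+1}$ is empty on certain classes (where non-definability would fail). The clean way is to single out one class on which a genuine construction was applied: in case (1) some $M_{A_i}$ of type $\omega$ (or $-\omega$) received the construction of Proposition \ref{prop:2ndConstructionN} (or its dual), whose clause (1) gives non-definability of $P_{n+1}$ in that summand; in case (2a) some $M_{A_i}$ of type $\zeta$ received the recurrent construction, again non-definable by Proposition \ref{prop:recurrent1}(1); and in case (2b) some $M_{A_i}$ received $E_{c_i}$, non-definable by Proposition \ref{prop:non-recurrent1}(1). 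In every branch at least one such witness class exists because not all $A_i$ are finite. Fixing that witness class and combining it with Proposition \ref{prop:def-interval} closes the argument.

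The main obstacle is essentially bookkeeping rather than mathematics: one must verify that in each of the four sub-cases the construction applied to the distinguished class really does deliver a summand in which $P_{n+1}$ is non-definable, and that such a distinguished class always exists regardless of the branch taken. Once the correct witness class is identified, the reduction to Proposition \ref{prop:def-interval} is immediate, so I expect the proof to be short, with the only subtlety being the case analysis ensuring a non-trivially-expanded summand is always available.
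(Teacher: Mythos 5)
Your proposal is correct and is essentially the paper's own proof: the paper likewise argues by contradiction, using Proposition \ref{prop:def-interval} to push definability of $P_{n+1}$ in $M$ down to definability of $P_{n+1} \cap A_i$ in a summand $M_{A_i}$, and then contradicts the non-definability clause (1) of whichever construction (Proposition \ref{prop:2ndConstructionN}, its dual, Proposition \ref{prop:recurrent1}, or Proposition \ref{prop:non-recurrent1}) was applied to a witness class in each case. Your case-by-case matching of constructions to classes is in fact slightly more careful than the paper's terse version.
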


\begin{proof}
This is easy to deduce from the construction: by Proposition \ref{prop:def-interval}, if $P_{n+1}$ was definable in $M$ then the same would hold for $P_{n+1} \cap A_i$ in every sub-structure $M_{A_i}$. 

If we are in case $(1)$, that is, if some $A_i$ has order type $\omega$ or $-\omega$, then we applied the construction of Proposition \ref{prop:2ndConstructionN} to at least one substructure $M_{A_j}$, therefore $P_{n+1} \cap A_j$  is not definable in $M_{A_j}$, which leads to a contradiction. Case $2(b)$ is similar. For case $2(a)$ the contradiction arises from Proposition \ref{prop:non-recurrent1} instead of Proposition \ref{prop:2ndConstructionN}. 
\end{proof}

\begin{lem}\label{lem:reducG}
Let $M'$ be the expansion of $M$ by the predicate $P_{n+1}$. Then $MSO(M')$ is recursive in $MSO(M)$.
 \end{lem}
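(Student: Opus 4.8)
The plan is to apply Shelah's Composition Theorem (Theorem \ref{compositiontheorem}) to the decomposition $M'=\sum_{i\in I} M'_{A_i}$, reducing the computation of $T^k(M')$ to the computation of an $f(k,n+1)$-type of the index structure together with the $k$-types of the summands. Concretely, for each target quantifier depth $k$, set $m=f(k,n+1)$ and let $\tau_1,\dots,\tau_p$ enumerate the $(m,\{<,\overline P,P_{n+1}\})$-Hintikka sentences. By Theorem \ref{compositiontheorem} it suffices to compute the $m$-type of the structure $(I,<^I,Q_1,\dots,Q_p)$ where $Q_j=\{i\in I: T^m(M'_{A_i})=\tau_j\}$. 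The whole proof therefore splits into two tasks: (i) showing each $Q_j$ is recursive in $MSO(M)$, i.e. that we can compute the $m$-type of every summand $M'_{A_i}$ uniformly and recursively in $MSO(M)$; and (ii) showing the resulting index structure, which is interpretable-like in $M$, has a $k$-type recursive in $MSO(M)$.

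First I would handle the summands. The equivalence relation $\approx$ is definable in $M$, so by Lemma \ref{lem:decomp} the index order $I=M/_\approx$ and the derived predicates $Q_{\vp}$ are interpretable in $M$; moreover each $M_{A_i}$ is itself interpretable in $M$ (its domain is a $\approx$-class, hence definable once we fix a witness), so $MSO(M_{A_i})$ is recursive in $MSO(M)$ by Lemma \ref{lem:interp}. Now the key point is \emph{uniformity}: the constructions producing $M'_{A_i}$ from $M_{A_i}$ carry a uniform reduction. In the $\omega$-case (and dually $-\omega$), Proposition \ref{prop:2ndConstructionN} gives recursive $g_1,g_2$ computing $T^k(M'_{A_i})$ from $T^{g_2(k)}(M_{A_i})$ by an algorithm independent of the particular summand (Remark \ref{rem:unifnonrec}); likewise Proposition \ref{prop:recurrent1} provides a uniform reduction in the recurrent-$\zeta$ case (Remark \ref{rem:unifrec}), and Proposition \ref{prop:non-recurrent1} handles the non-recurrent-$\zeta$ case via the single globally-fixed formula $\vp(x)$ of minimal quantifier depth. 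For the finite summands $M'_{A_i}=M_{A_i}$ the $m$-type is directly readable. In each case, deciding whether $T^m(M'_{A_i})=\tau_j$ reduces to deciding a bounded-quantifier-depth property of $M_{A_i}$, hence to a sentence $\psi_j$ about $M$ relativized to the class $A_i$; so $Q_j$ is itself a definable predicate $Q_{\psi_j}$ on $I$, recursive in $MSO(M)$.

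Then I would assemble the index structure. By Lemma \ref{lem:decomp}(2), the structure $(I,<^I,Q_1,\dots,Q_p)$ with $Q_j=Q_{\psi_j}$ is interpretable in $M$, so by Lemma \ref{lem:interp} its MSO theory, in particular its $m$-type, is recursive in $MSO(M)$. Feeding this $m$-type into the algorithm of Theorem \ref{compositiontheorem} yields $T^k(M')$. Since $k$ was arbitrary and every step is recursive in $MSO(M)$, the function $k\mapsto T^k(M')$ is recursive in $MSO(M)$, which is exactly the assertion that $MSO(M')$ is recursive in $MSO(M)$.

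The main obstacle I expect is the uniformity bookkeeping in task (i): one must verify that the three summand constructions not only each reduce $MSO(M'_{A_i})$ to $MSO(M_{A_i})$, but do so by a \emph{single} algorithm whose only input depending on $i$ is an oracle for (equivalently, a sufficiently deep type of) $M_{A_i}$, so that membership $i\in Q_j$ becomes a uniformly-definable property of $M$. The non-recurrent-$\zeta$ subcase is the delicate one, since $E_{c_i}$ depends on the witness $c_i$; here I would exploit that $\vp(x)$ is fixed globally and that $c_i$ is the unique element of $A_i$ satisfying $\vp$, so $E_{c_i}$ is applied uniformly and the quantifier-depth overhead is bounded by $g_2(k+m_0)$ with $m_0=qd(\vp)$ a fixed constant. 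Once this uniformity is in place, the composition-theorem assembly is routine.
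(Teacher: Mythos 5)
Your proposal is correct and follows essentially the same route as the paper: apply Shelah's Composition Theorem to the $\approx$-decomposition, show via the uniformity of Propositions \ref{prop:2ndConstructionN}, \ref{prop:recurrent1} and \ref{prop:non-recurrent1} (with the globally fixed $\vp$ in the non-recurrent case) that each summand's type for $M'$ is computable from a bounded-depth type of the corresponding summand of $M$, and then use Lemma \ref{lem:decomp} together with Lemma \ref{lem:interp} to make the enriched index structure recursive in $MSO(M)$. The only differences are inessential bookkeeping (you attach $f(k,n+1)$-type-based predicates to the index structure where the theorem and the paper use $k$-type-based predicates recovered as boolean combinations of deeper types), which does not affect correctness.
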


\begin{proof}

For every $i \in I$ we denote by $M'_i$ (respectively $M_i$) the substructure of $M'$ (resp. $M$) with domain $A_i$. Let $k \geq 1$, and let us compute $T^k(M')$. By Theorem \ref{compositiontheorem}, $T^k(M')$ can be computed from the $f(k,n+1)-$type of the structure $(I,<,Q'_1,\dots,Q'_p)$ where
$$Q'_j=\{i \in I: T^{k}(M'_{i})=\tau'_j\} \ \ \ \ j=1,\dots,p$$
and $\tau'_1,\dots,\tau'_p$ denote the 
list of all $(k,\Delta')$-Hintikka sentences with $\Delta'=\{<,{\overline P},P_{n+1}\}$.

Let $l=f(k,n+1)$. Let us prove that for every $i \in I$, the $l-$type of $M'_i$ can be
computed from the $g(l)-$type of $M_i$ for some recursive function
$g$. Note that $g$ depends on $M$, namely whether we used case $(1)$ or $(2)$ to construct $M'$.

Assume first that we are in case $(1)(a)$.  In this case we first check whether the order type of
the domain $A_i$ of $M_i$ has order type $\omega$; that is, we check whether $M_i \models F$ where $F$ expresses that the domain of the structure  has order type $\omega$. If the answer is positive, then it means that $M'_{i}$ was defined by using the construction given in the proof of Proposition
\ref{prop:2ndConstructionN}, and thus it follows from item $(2)$ in
the latter proposition that $T^l(M'_i)$ is computable from $T^{g_2(l)}(M_i)$. Otherwise we have set $P_{n+1} \cap
A_i=\emptyset$, and in this case the $l$-type of $M'_i$ can be
computed directly from the $l-$type of $M_i$. 

Case $(1)(b)$ is similar to the previous one.

For Case $(2)(a)$, we first check whether  the order type of
the domain $A_i$ is $\zeta$. If the answer is positive, then it means that $M'_{i}$ was defined by using the construction given in the proof of Proposition \ref{prop:recurrent1}, and thus the reasoning is similar to the one for case $(1)(a)$.

For Case $(2)(b)$, we first check whether  the order type of
the domain $A_i$ is $\zeta$. If the answer is positive, then we also check whether the formula $\varphi(x)$ (which was used in the description of case 2(b)) defines an element in $M_{i}$. Namely, we test whether the sentence $\psi \equiv \exists x (\varphi(x) \wedge \forall y \ (\varphi(y) \rightarrow y=x))$ holds in $M_i$. Note that $qd(\psi)=l+2$.
 If the answer is positive again, then it means that $M'_{i}$ was defined by using the construction given in the proof of Proposition
\ref{prop:non-recurrent1}, which implies that the $l$-type of $M'_i$ is computable from the $r-$type of $M_i$, with  $r=\max(l+2,g_2(l+k))$.
Otherwise we have set $P_{n+1} \cap A_i=\emptyset$, and in this case the $l$-type of $M'_i$ can be
computed directly from the $l-$type of $M_i$.

Since for every $i \in I$, $T^{l}(M'_i)$ is computable from $T^{g(l)}(M_j)$ for some recursive function $g$, every formula $Q'_i(x)$ is equivalent to a boolean combination of predicates of the form $Q_j(x)$ where
$$Q_j=\{i \in I: T^{g(l)}(M_{i})=\tau_j\} \ \ \ \ j=1,\dots,r$$
and
$\tau_1,\dots,\tau_r$ denote the sequence of $(g(l),\Delta)-$Hintikka sentences with $\Delta=\{<,{\overline P}\}$.

It follows that $(I,<,Q'_1,\dots,Q'_p)$ is interpretable in  the structure $N=(I,<,Q_1,\dots,Q_r)$, and Lemma \ref{lem:interp} yields that  $MSO(I,<,Q'_1,\dots,Q'_p)$ is recursive in $MSO(N)$. Now the relation $\approx$ is a convex equivalence relation and is definable in $M$. Thus by Lemma \ref{lem:decomp}, the structure $N$ is interpretable in $M$, and $MSO(N)$ is recursive in $MSO(M)$ by Lemma \ref{lem:interp}.
\end{proof}

Theorem \ref{thm:main} follows from Lemmas \ref{lem:undefG} and \ref{lem:reducG}.

\begin{rem}\label{rem:final}
Let us discuss uniformity issues related to Theorem \ref{thm:main}.
\begin{enumerate}[$\bullet$]

\item The choice to expand ``uniformly" all $\approx-$equivalence classes is crucial for the reduction from $MSO(M')$ to $MSO(M)$.
For example, if some $A_i$ has order type $\omega$ and we choose to expand only one such $A_i$
then $MSO(M')$ might become undecidable. This is the case for the structure $M$ considered in \cite{BesCegielski:2009:JMS} (Definition 2.4), which has decidable MSO theory, and is such that the MSO (and even FO) theory of any expansion of $M$ by a constant is undecidable. For this structure all $A_i$'s have order type $\omega$. If we consider the structure $M'$ obtained from $M$ by an expansion of only one $A_i$, then $P_{n+1}$ has a least element, which is definable in $M'$, thus MSO$(M')$ is undecidable.

\item The definition of $P_{n+1}$ in case (\ref{itcaseZ}) depends on whether all components
$A_i$ with order type $\zeta$ are such that $w(M_{A_i})$ is
recurrent, which is not a MSO definable property. Thus the
reduction algorithm from $MSO(M')$ to $MSO(M)$
depends on $M$. 
\end{enumerate}
\end{rem}

\section{Further Results and Open Questions}

Let us mention some possible extensions and related open questions.

First of all, most of our results can be easily extended to the case when the signature contains infinitely many unary predicates.

Our results can be extended to the Weak
MSO logic. In the case $M$
is countable this follows from Soprunov result \cite{Soprunov88}. However, our construction works for
labelled orderings of  arbitrary cardinality.

 An interesting issue is to prove uniform versions of our results in the sense of items $(2)$ in Propositions \ref{prop:2ndConstructionN} and \ref{prop:recurrent1}. A first step would be to generalize Proposition \ref{prop:recurrent1} to all structures $(\Z,<,\overline{P})$.

One can also ask whether the results of the present paper hold for FO logic.
Let us emphasize some difficulties which arise when one tries to adapt the main arguments.
 A FO version of Theorem \ref{th:rechomog} (about the recursive homogeneous set) was already proven in \cite{RabinovichT06}.
  Moreover, using ideas from \cite{Semenov83} one can also give a characterization of structures $M=(\Z,<,\overline{P})$ with a decidable FO theory,
   in terms of occurrences and repetitions of finite words in $w(M)$.
   This allows us to give a FO version of our non-maximality results for labelled orders  over $\omega$ or $\zeta$. However for the general case of  $(A,<,\overline{P})$,
   two problems arise: (1)   the constructions for $\N$ and $\Z$ cannot be applied directly since they are not uniform, and (2)
   the equivalence relation $\approx$ used in the proof of Theorem \ref{thm:main} to cut $A$ into small intervals is not FO definable. We currently investigate these issues.

Finally, we also study the case of labelled   linear orderings
$(A,<,\overline{P})$ which do not contain intervals of order types
$\omega$ or $-\omega$.
      In this case the construction presented in Sect. \ref{sec:general} does not work since the restriction
      of $P_{n+1}$ to each $A_i$ will be empty, i.e.,  our new relation is actually empty.
       In a forthcoming paper we show that it is possible to overcome this issue for countable orders,  and prove that no infinite countable structure $(A,<,\overline{P})$ is maximal.

\subsubsection*{Acknowledgment}

We thank the anonymous referees for useful suggestions.

\bibliographystyle{plain}
\bibliography{max}

\end{document}